\newcommand{\ra}[1]{\renewcommand{\arraystretch}{#1}}
\newcommand{\intensityent}{50}
\newcommand{\intensity}{50} 
\newcommand{\intensityb}{25}
\definecolor{cbox}{HTML}{0072B2}
\definecolor{cenc}{HTML}{D55E00}
\definecolor{cvar}{HTML}{009E73}
\theoremstyle{plain}
\newtheorem{thm}{Theorem}[section]
\newtheorem{lem}[thm]{Lemma}
\newtheorem{prop}[thm]{Proposition}
\newtheorem{cor}[thm]{Corollary}
\theoremstyle{definition}
\newtheorem{defn}{Definition}[section]
\newtheorem{example}{Example}
\newcommand{\defeq}{\mathrel{:\mkern-0.25mu=}}
\newcommand{\CZ}{\operatorname{CZ}}
\newcommand{\CNOT}{\operatorname{CNOT}}
\newcommand{\CRX}{\operatorname{CRX}}
\newcommand{\CAN}{\operatorname{CAN}}
\newcommand{\B}{\fontseries{b}\selectfont}
\newcommand{\cc}[1]{\overline{#1}}
\begin{document}
	
	\title{Learning Fourier series with parametrized quantum circuits}
	
	\author{Dirk Heimann}
	\email{dirk.heimann@uni-bremen.de}
	\affiliation{Robotics Research Group, University of Bremen, 28359 Bremen, Germany}
	\author{Gunnar Schönhoff}
	\email{gunnar.schoenhoff@dfki.de}
	\affiliation{German Research Center for Artificial Intelligence - Robotics Innovation Center (DFKI RIC), 28359 Bremen, Germany}
	\author{Elie Mounzer}
	\email{elie.mounzer@dfki.de}
	\affiliation{German Research Center for Artificial Intelligence - Robotics Innovation Center (DFKI RIC), 28359 Bremen, Germany}
	\author{Hans Hohenfeld}
	\email{hans.hohenfeld@uni-bremen.de}
	\affiliation{Robotics Research Group, University of Bremen, 28359 Bremen, Germany}
	\author{Frank Kirchner}
	\email{frank.kirchner@dfki.de}
	\affiliation{Robotics Research Group, University of Bremen, 28359 Bremen, Germany}
	\affiliation{German Research Center for Artificial Intelligence - Robotics Innovation Center (DFKI RIC), 28359 Bremen, Germany}
	
	\date{\today}
	
	\begin{abstract}
		Variational quantum algorithms (VQAs) and their applications in the field of quantum machine learning through parametrized quantum circuits (PQCs) are thought to be one major way of leveraging noisy intermediate-scale quantum computing devices.
		However, differences in the performance of certain VQA architectures are often unclear since established best practices, as well as detailed studies, are missing.
		In this paper, we build upon the work by Schuld et al.~[Phys. Rev. A 103, 032430 (2021)] and Vidal et al.~[Front. Phys. 8, 297 (2020)]
		by comparing how well popular ans\"atze for PQCs learn different one-dimensional truncated Fourier series.
		We also examine dissipative quantum neural networks (dQNN) as introduced by Beer et al.~[Nat. Commun. 11, 808 (2020)] and propose a data reupload structure for dQNNs to increase their
		capability for this regression task.
		By comparing the results for different PQC architectures,
		we can provide guidelines for designing efficient PQCs.
	\end{abstract}
	
	\maketitle
	
	\section{Introduction}
	\label{Introduction}
	
	Quantum machine learning (QML) is one of the areas of quantum computing that has attracted a lot of attention in recent years~\cite{Biamonte2017, Mitarai2018, Schuld2019, Havlicek2019, Dunjko2020, SchuldBuch2021, Cerezo2022}, especially because QML algorithms seem to be a promising candidate for making use of noisy intermediate-scale quantum computing (NISQ)~\cite{Preskill2018, Bharti2022}.
	There are already a variety of prototypical practical applications of QML, ranging from the classification of radiological~\cite{Matic2022} or car images~\cite{Sagingalieva2022} to other industrial use cases~\cite{Mangini2022}, reinforcement learning environments~\cite{Skolik2022} and robot pathfinding~\cite{Heimann2022}.
	Among the methods investigated for QML are variational quantum algorithms (VQAs) which make use of parametrized quantum circuits (PQCs) in a hybrid quantum-classical optimization loop~\cite{Cerezo2021VQAs}.
	The term \textit{quantum neural networks} is often used synonymously with this application of PQCs for QML tasks~\cite{Schuld2014, Abbas2021}.
	
	While improving the performance and scalability of VQAs is an area of ongoing research (see, for example, Refs.~\cite{Caro2022, Larocca2022, Meyer2022}), and the question of whether QML algorithms can eventually outperform classical algorithms is still open
	(see, for example, Refs.~\cite{Cerezo2021VQAs, Schreiber2022, gujju2024}),
		significant steps have already been made~\cite{Liu2021, Jaeger2022, jerbi2024}.
	For classification tasks on classical data, Ref.~\cite{bowles2024}
	shows empirically that evidence for hybrid QML algorithms outperforming
	classical neural networks is still missing.
	Different circuit architectures or \textit{ans\"atze} for PQCs have been proposed and used, ranging from hardware-efficient circuits~\cite{Kandala2017} and the quantum alternating operator ansatz (QAOA)~\cite{Hadfield2019} to dissipative quantum neural networks (dQNNs)~\cite{Beer2020}, which were successfully implemented on a six-qubit superconducting processor~\cite{pan2023deep}.
	Additionally, data reupload has been shown to improve VQA performance significantly when encoding classical input data~\cite{PerezSalinas2020, Vidal2020, Schuld2021}.
	
	Steps to compare different PQCs were done by several groups, for example
	in Ref.~\cite{Schuld2021}, where a methodology based on Fourier series and coefficients was developed to compare the expressiveness of PQCs;
	in Ref.~\cite{li2022}, where architectures with amplitude-encoding and varying circuit depths were compared on different benchmark classification tasks;
	in Refs.~\cite{Sukin2019},~\cite{Rasmussen2020}, and~\cite{Hubregtsen2021}, where selected architectures were evaluated numerically regarding expressibility and entangling capability; in Ref.~\cite{Ballarin2022}, where the entanglement and the entangling speed of different circuits were compared; in Ref.~\cite{Wilkinson2022}, where effective capacity and effective dimension were analyzed for different types of PQCs, including dQNNs; and in Ref.~\cite{Yu2022}, where the work from Ref.~\cite{Schuld2021} was investigated in more detail for single-qubit circuits.
	Furthermore, the authors of Ref.~\cite{Schreiber2022} investigate the performance of classical surrogates for layered PQCs.
	While this fact indicates that it may be difficult to find advantages of PQCs
		in comparison to classical machine learning on classical data,
		this question is not settled and, for example,	
		the authors of Ref.~\cite{jerbi2024} show that so-called flipped quantum models exist,
		which have a provable learning advantage over fully classical learners
		despite having a classical surrogate model.
	Also, as laid out in, for example, Refs.~\cite{arunachalam2017, Caro2022_phd}, mathematically rigorous bounds exist that link model complexity of PQCs with their ability to generalize, providing a starting point for quantum learning theory.
	
	Despite all this work, general established best practices for the use of PQC architectures are still missing and the construction of efficient, trainable PQCs is an ongoing research topic (see, for example, Refs.~\cite{Funcke2021, Katabarwa2022}).
	Existing methods for performance evaluation of PQC ans\"atze are not enough to fully describe the ansatz performance since they do not take into account the combination of expressiveness and trainability that is needed in practical use cases.
	Furthermore, as we show in Sec.~\ref{subsec:schuld} only evaluating the Fourier coefficients resulting from the inverse Fourier transform of an ansatz evaluation with sampled parameters also misses differences between certain PQC architectures.
	In this paper, we tackle this problem by extending and using the numerical methodology presented in Ref.~\cite{Schuld2021} to lay out a framework to both more strictly and practically compare the performance of small circuit elements as well as complete PQC ans\"atze.
	
	In particular, our contributions are the following:
	\begin{itemize}
		\item We extend the numerical results by Schuld et al.~\cite{Schuld2021}, in particular by showing that PQCs with the same sampled Fourier coefficients can have different training results.
		\item We propose the new measure of \textit{learning capability} to analyze PQC ansatz performance.
		\item We compare different types of ans\"atze regarding this measure.
		\item We define and evaluate a data reupload structure for dQNNs.
	\end{itemize}
	
	In addition to other challenges~\cite{Anschuetz2022}, barren plateaus remain a major issue for the practical use of VQAs~\cite{McClean2018, Cerezo2021Cost, Marrero2021, Holmes2021, Holmes2022, sharma2022, Larocca2022barren}. For example, in Ref.~\cite{Larocca2022barren}, the authors show that PQCs with full rank dynamical Lie algebras exhibit barren plateaus because, as the number of layers increases, they become $\epsilon$-approximate 2-designs.
	In this paper, we compare architectures which all tend to exhibit barren plateaus for larger depth or global cost functions
	as shown in Appendix~\ref{ap:lie_algebra}, especially Corollary.~\ref{cor:lie_algebra}.
	The architectures utilize local cost functions and can exhibit barren plateaus if the depth is of polynomial order in the number of qubits~\cite{Cerezo2021Cost}.
	Our analysis gives guidelines to minimize layers, qubits, gates, and parameters 
	which can be used to reduce the depth.
	Thus, our work helps to reduce the effect of barren plateaus.

	These guidelines are drawn from numerically investigating how well architectures can learn Fourier series, which can, in principle,
	approximate arbitrary functions up to a certain precision. 
	Hence, we conclude that the
	guidelines for PQCs obtained from these results
	can be transferred to more general learning tasks.
	However, this assumption may not hold universally.
	
	The sections of this paper are organized as follows:
	In Sec.~\ref{PriorWork}, we summarize some of the existing work on PQC evaluation.
	In Sec.~\ref{Methods}, we present our approach for the measure of learning capability and describe both the evaluated ans\"atze and the numerical setup.
	We then show our main results in Sec.~\ref{Results} and discuss the implications in Sec.~\ref{Discussion}.
	
	\section{Prior work}
	\label{PriorWork}
	
	Investigating the model complexity of PQCs is a topic of active research.
	There are different angles and approaches one can take here.
	One important and active research field links model complexity to the generalization
	behavior~\cite{chen2021expressibility, Abbas2021, du2022efficient, bu2022,Caro2022outofdistribution,Caro2022, haug2023}, with only some works involving data reupload to obtain generalization bounds depending on the data reupload encoding scheme (see, for example, Ref.~\cite{caro2021}).
	Interestingly, Ref.~\cite{peters2022} demonstrates that, depending on the number of training samples, generalization is possible despite overfitting to training data.
	Furthermore, Ref.~\cite{gil2023} recently questioned the applicability of uniform generalization bounds by constructing a randomized supervised learning example, which reveals the memorization capability of PQCs.
	More work exists which investigates a model's complexity by assessing its capacity to memorize random data.
	Reference~\cite{wright2020capacity} compares the memorization capacity of quantum neural networks with classical neural networks based on the information-theoretic ideas introduced in Ref.~\cite{mackay2003} for classical perceptrons and generalized to neural networks in Ref.~\cite{friedland2018}.
	This notion got extended to storing quantum states in~\cite{lewenstein2021storage}.
	
	\subsection{Expressiveness and entanglement}
	
	In the following subsection, we present works in more detail, which investigate model complexity of PQCs in terms of their expressiveness and use this to reason about the performance of PQCs on QML tasks.
	An important approach in this area is the sampling of different parameters for a PQC and the calculation of the Kullback-Leibler divergence~\cite{Kullback1951} of the estimated fidelity distribution in comparison with the Haar-distributed ensemble~\cite{Sukin2019, Nakaji2021}.
	This gives a description of the expressiveness in the form that a small deviation from the Haar distribution means that all of the corresponding Hilbert space is equally reachable.
	However, this analysis does neither take into account the effect that input encoding has on the expressiveness of a circuit---it plays an essential role when classifying classical data as was proven in~\cite{Schuld2021}---nor the trainability, which is crucial for practical use cases.
	
	Other measures in this realm are the expressive power and effective dimension.
	Reference~\cite{du2020expressive} investigates the expressive power of PQCs as generative models for probability distributions in terms of the entanglement entropy.
	In Ref.~\cite{wu2021expressivity}, the authors look at the effect of repeating data-first encoding by introducing further qubits on learning physical observables and analyzing the expressiveness of PQCs via the Hilbert space dimension the quantum circuits act on.
	Reference~\cite{haug2021capacity} analyzes PQCs based on the effective dimension, which depends on the Fisher information matrix, and the parameter dimension, which the authors numerically estimate by evaluating the effective dimension for random initializations.
	
	In addition to the expressiveness, entanglement capability is another measure that is used to describe the performance of PQCs~\cite{Sukin2019, Rasmussen2020, Hubregtsen2021}.
	Results from Ref.~\cite{Hubregtsen2021} indicate that entanglement and expressiveness are only weakly correlated, and the authors of Ref.~\cite{Marrero2021} have found that entanglement can hinder trainability, e.g., by inducing barren plateaus.
	While this could lead one to the conclusion that a PQC should have only a limited amount of entanglement, a certain amount is considered as needed in order for quantum advantage to be possible.
	The works that investigate PQC performance use a variety of ways to construct entangling layers, ranging from linear and cyclic entangling with zero parameters~\cite{Schuld2021}, one parameter~\cite{Sukin2019}, or three parameters~\cite{Beer2021} to the layer-dependent strong entanglement described in Ref.~\cite{Schuld2020}.
	
	\subsection{Expressiveness regarding Fourier series}
	
	The current understanding of PQC evaluations representing Fourier series as described in Ref.~\cite{Schuld2021} is laid out in the following.
	The function $f(x)$ is defined as the expectation value of the result of several PQC evaluations:
	\begin{equation}
		\label{eq:fx_pqc_model}
		f_\Theta (x) = \bra{0} U^\dagger (x, \Theta) M U (x, \Theta) \ket{0}
	\end{equation}
	where $\Theta$ is a set of all trainable parameters, $x$ is a classical data input, $U$ is an arbitrary unitary, and $M$ is a measurement operator.
	For the usual \textit{layered} ansatz, the unitary $U$ takes the form
	\begin{equation}
		\label{Eq:CircuitStructure}
		U (x) = W_L S(x) W_{L-1} . . . W_1 S(x) W_0
	\end{equation}
	where the circuit is of depth $L$, $S$ is the data-upload unitary, and $W_l \defeq W_l(\boldsymbol{\theta}_l)$ are trainable unitaries with $\boldsymbol{\theta}_l \subset \Theta$ (see Fig.~\ref{fig:qvc_layer}).
	The result proven in Ref.~\cite{Schuld2021} is that this model can be described by a partial real-valued Fourier series:
	\begin{equation}
		f(x) = \sum_{\omega \in \Omega} c_\omega e^{i \omega x}
	\end{equation}
	where the Fourier coefficients $c_\omega = \cc{c}_{-\omega}$ are determined by the trainable unitaries and the measurement operator while the available frequency spectrum $\Omega$ is determined by the data (re-)upload operators.
	The frequencies $\omega$ are given by all possible sums of eigenvalues of the generating Hamiltonian $H$ of the input encoding $S(x) = e^{-i x H}$,
	such that $\Omega= [-\omega, ..., 0, ..., \omega]$.
	
	\begin{figure}[bt]
		\centering
		\begin{quantikz}
			\lstick{$\ket{0}$} & [0.2cm]
			\gate[style={fill=cvar!\intensityent}]{W} \qwbundle{n} &
			\gate[style={fill=cenc!\intensity}]{U_{\text{in}}}
			\gategroup[wires=1, steps=2, style={inner sep=4pt, dashed, fill=cbox!\intensityb}, background,
			label style={label position=below, yshift=-0.5cm}]
			{\text{repeat $L$ times}} &
			\gate[style={fill=cvar!\intensityent}]{W} &
			\meter{} & \qw
			\rstick{$\langle z\rangle$}
		\end{quantikz}
		\caption{General circuit layout for layered PQCs ans\"atze with $n$ qubits, $L$ layers and one $W$ in the zero layer [see Eq.~\eqref{Eq:CircuitStructure}]. The unitary gates $W$ depend on trainable parameters and $U_{\text{in}}$ encodes the input $x$.}
		\label{fig:qvc_layer}
	\end{figure}
	
	As an example, in the case of one layer and a unitary $U(x) = W_1 e^{-i \frac{x}{2} H} W_0$, the resulting model is a simple sinusoidal function if a Pauli-matrix with eigenvalues $\{\lambda_0,\lambda_1 \} =\{+1,-1\}$ is used as generator:
	\begin{equation}
		\label{eq:fx_singlequbit_singlelayer}
		f(x) = c_0 + 2 \abs{c_1} \cos (x- \arg(c_1))
	\end{equation}
	with $c_0=\sum_{i, i^\prime,j=0}^{1} \cc{W}_{0j} \cc{W}_{ji^\prime} M_{i^\prime i} W_{ij} W_{j0}$ and $c_1 = \sum_{i,i^\prime=0}^{1} \cc{W}_{01} \cc{W}_{1i^\prime} M_{i^\prime i} W_{i0} W_{00}$ where $\cc{W}_{ji}$ denotes the complex conjugate of $W_{ji}$ and the layer index of the unitaries $W$ is omitted because it follows from the order.
	
	Using the multi-index $\boldsymbol{j_l}$ over $n$ qubits for layer $l$, the sum over eigenvalues $\Lambda_{\boldsymbol{j_l}} = \sum_{q=1}^{n}\lambda_{ql}$, and the Einstein sum convention for indices, one finds that for $n$ qubits and $L$ layers the expectation value can be written as:
	
	\begin{equation}
		\begin{aligned}
			f(x) =
			& e^{i \frac{x}{2}(
				\Lambda_{\boldsymbol{k_1}} + \cdots + \Lambda_{\boldsymbol{k_L}}
				-
				\Lambda_{\boldsymbol{j_1}} - \cdots - \Lambda_{\boldsymbol{j_L}})}  \\
			& \times M^{\boldsymbol{i}^\prime}_{ \ \boldsymbol{i} }
			\cc{W}^{ \ \boldsymbol{k_L}}_{\boldsymbol{i}^\prime} \cdots
			\cc{W}^{ \ \boldsymbol{0}}_{\boldsymbol{k_1}}
			W^{\boldsymbol{i}}_{ \ \boldsymbol{j_L}} \cdots
			W^{\boldsymbol{j_1}}_{ \ \boldsymbol{0}}
		\end{aligned}
	\end{equation}
	
	The detailed calculation is documented in Example~\ref{ex:nqubits_Llayer} in Appendix~\ref{ap:calculations}.
	
	The observation that evaluations of PQCs yield Fourier series can be used to describe the expressiveness of the model via the available frequency range as well as the values of Fourier coefficients that can be reached.
	The latter one can be accessed via the inverse Fourier transform.
	For repeated single-qubit data encoding, i.e., the case where $S(x)$ is not changed throughout the circuit---as is often done in the literature for classical input encoding and also in this work---the size $K$ of the available independent nonzero frequencies is limited by
	\begin{equation}
		K = n L,
		\label{eq:max_available_frequency}
	\end{equation}
	where $L$ is the number of layers in the PQC and $n$ is the number of qubits; this means that the frequency spectrum is limited by the total number of repetitions of the data-encoding gate.
	Furthermore,  $2K + 1$ real-valued parameters are necessary to express $K$ nonzero complex coefficients and one real valued coefficient $c_0$.
	Hence, one needs at least $2K + 1$ parameters in the gates that form the trainable unitaries $W$ of Eq.~\eqref{Eq:CircuitStructure} to be able to have nonzero coefficients for all accessible frequencies from $0$ to $\omega_K$.
	
	Several works take advantage of the fact that PQCs represent Fourier series.
	For example, Ref.~\cite{fontana2022} proposes an error mitigation technique based on Fourier coefficients and frequencies.
	Additionally, Ref.~\cite{fontana2022efficient} shows that cost functions of training a QML model can be efficiently classically computed if the number of parameters scales polynomial in circuit depth.
	Reference~\cite{liao2022expressibility} uses the training of Fourier series to test different PQC architecture adjustments and define PQCs which fit certain Fourier series better more expressive.
		Reference~\cite{wen2024enhancing} define residual connections in quantum neural
		networks and show that these increase the number of generated frequencies
		and the flexibility in adjusting the Fourier coefficients.
		While most works focus on single-dimensional Fourier series, Ref.~\cite{casas2023multi} uses multidimensional truncated
		Fourier series to analyze the input encoding scheme and determine which
		PQCs provide enough degrees of freedom  for fitting these functions.
		
		In the following, we use the Fourier series character of PQC evaluations to test which ans\"atze provide a richer class of trainable functions.
		´
		\section{Methods and setup}
		\label{Methods}
		
		\subsection{Learning capability}\label{sec:lc}
		
		To improve the understanding of PQC performance, we extend the methodology first described in Ref.~\cite{Schuld2021} as laid out in the following.
		The learned model function $f_\Theta$ that results from training a PQC model [see Eq.~\eqref{eq:fx_pqc_model}] to approximate a given Fourier function $g$ is a measure of both the trainability and the expressiveness of the corresponding circuit, given a choice of training procedure.
		In our studies, we use the corresponding loss value given by the mean-squared error (MSE)
		
		\begin{equation}
			\label{eq:mse}
			\varepsilon_{g} = \frac{1}{|X|} \sum_{x \in X} \left[ f_\Theta (x) - g(x)\right]^2
		\end{equation}
		where $|X|$ is the size of the test set.
		The mean squared error is used here since it is a common metric in machine learning that is suitable to determine the difference between the results of the PQC model and the Fourier function.
		
		We use a set of functions $G_d$ which contains $|G_d|$ randomly sampled, normalized Fourier functions, i.e., trigonometric polynomials, of degree $d$ and define learning capability as the average over the final individual validation losses:
		
		\begin{equation}
			\label{eq:learning_capability}
			\mu_d = \frac{1}{|G_d|} \sum_{g\in G_d} \varepsilon_{g}.
		\end{equation}
		
		This measure enables the practical analysis of the performance of different PQC architectures that include data encoding; this is done by using the insight that these architectures can (only) learn Fourier functions to investigate expressiveness and trainability.
		We take the average over a set of Fourier functions $G_d$ to address statistical variance in the training procedure.
		This way, an ansatz with a lower numerical value for $\mu_d$ is on average better suited to learn Fourier series with a determinable confidence interval than an ansatz with a higher numerical value $\mu_d$.
		
		The MSE in Eq.~\eqref{eq:mse} is a common and natural choice as a
		loss function for regression problems. It connects the learning task considered in this work to the
		framework of maximum likelihood estimation, providing an objective function
		proportional to the negative log-likelihood, under the assumption of independent and
		normally distributed residuals~\cite{Murphy2012,Goodfellow2016,Bishop2023}. In principle,
		$\varepsilon_{g}$ could be replaced with a different loss, if a similar analysis would be
		considered on a different problem class, e.g., classification with generic data sets such as
		the ones suggested in Ref.~\cite{bowles2024} and a cross-entropy loss.
		
		The size of the set of random Fourier functions needs to be large enough to sufficiently
		characterize the performance of a given circuit architecture on the problem set. We chose
		$\abs{G_{d}}=100$ as this produces reasonably tight confidence intervals to assume an unbiased
		estimate and allow for qualitative statements. Smaller sizes of $G_{d}$ may be possible, but
		were not considered here; for considerably larger sizes, one would expect diminishing returns
		on the quality of the estimate of $\mu_{d}$.
		In Appendix.~\ref{ap:error_sampling} we provide a numerical comparison of the mean and
		confidence interval for different set sizes based on the first experiments presented in our Results section.
		The numerical comparison underlines that the choice of $\abs{G_d} = 100$ leads
		to a reasonably accurate estimate of the mean while still being technically feasible.
		
		An interesting question is if $G_{d}$ can be limited to a potentially small
		number of specific functions that are a good representative of a real-world problem or
		class of real-world problems. This question is not straightforward to answer
		and relates to the field of neural architecture search~\cite{Elsken2019} in
		classical machine learning, an active area of ongoing research. We discuss some aspects of
		this in Sec.~\ref{Discussion}.
		
		\subsection{Evaluated architectures}
		We evaluate the learning capability of different PQC ans\"atze and change circuit properties systematically to reveal the effects of each part.
		More precisely, we analyze different ways of how to construct the trainable $W$s in Fig.~\ref{fig:qvc_layer} and Eq.~\eqref{Eq:CircuitStructure}.
		
		In the literature, the most commonly used entanglement gates without trainable parameters are $\CNOT$ (e.g., circuits 2, 11, and 15 in Ref.~\cite{Sukin2019} as well as circuits in Ref.~\cite{Schuld2021}) and $\CZ$ (e.g., circuits 9, 10, and 12 in Ref.~\cite{Sukin2019} as well as circuits in Refs.~\cite{McClean2018, PerezSalinas2020, Holmes2022, Cerezo2021Cost, Skolik2022, Jerbi2021, Heimann2022}).
		In addition, controlled-$R_X$ ($\CRX$) (e.g., circuits 4, 6, 8, 14, 17, and 19 in Ref.~\cite{Sukin2019} as well as circuits in Ref.~\cite{Meyer2022}) or $R_Z$ (e.g., circuits 3, 5, 7, 13, 16, and 18 in Ref.~\cite{Sukin2019}) are often used as entanglement gates with one trainable parameter.
		We also test the $\CAN$ gate~\cite{Crooks2019} because it provides the interesting case of three trainable parameters and bridges the gap to dQNN ansatz structures. This gate is composed of three two-qubit rotations, $\CAN(\boldsymbol{\theta}) = R_{XX}(\theta_0)R_{YY}(\theta_1)R_{ZZ}(\theta_2)$.
		
		Besides the entangling gates, every unitary $W$ contains single-qubit gates.
		Different options range from single operations (e.g., $R_Y$ in Ref.~\cite{Cerezo2021Cost}) for more hardware-efficient ans\"atze, to two operations (e.g., $R_Y R_Z$ in Ref.~\cite{Skolik2022}) and to general operations with three single gates (e.g., $R_Y R_Z R_Y$ in Ref.~\cite{Schuld2021}).
		Thus, in this work, we use these typical single-qubit unitaries $U^1 \in \{R_Y, R_YR_Z, R_YR_ZR_Y\}$ together with two-qubit unitaries $U_{\text{ent}} \in \{\CZ,\CNOT, \CRX, \CAN\}$.
		
		In Ref.~\cite{Schuld2020}, strongly entangling circuits were introduced to increase the entanglement in shallow circuits.
		This technique includes splitting each $W$ in blocks, which we refer to as \textit{entanglement layers}, where rotation gates are applied at the beginning of each block followed by controlled two-qubit gates.
		In our analysis, we choose the entanglement range for strong entanglement in each block to be equal to the number of the current block, i.e., if we use three entanglement layers, the first block contains ordinary controlled operations with control range 1, then the second block with range 2 and the third block with range 3.
		We contrast this ansatz of creating strong entanglement with a simpler one, which has several entanglement layers but keeps the control range of the entanglement equal to 1 for each block.
		The difference is depicted in Fig.~\ref{fig:qvc_ent_layer}.
		
		\begin{figure}[bt]
			\centering
			\begin{adjustbox}{width=\linewidth}
				\begin{quantikz}
					&
					\gate[style={fill=cvar!\intensity}]{U^1}
					\gategroup[wires=4, steps=4, style={inner sep=4pt, dashed}, background,
					label style={label position=below, yshift=-0.5cm}]
					{el 1} &
					\ctrl{1} &
					\qw & \qw &
					\gate[style={fill=cvar!\intensity}]{U^1}
					\gategroup[wires=4, steps=4, style={inner sep=4pt, dashed}, background,
					label style={label position=below, yshift=-0.5cm}]
					{el 2}&
					\ctrl{1} &
					\qw & \qw & \qw & \\
					& \gate[style={fill=cvar!\intensity}]{U^1} &
					\targ{} &
					\ctrl{1} &
					\qw &
					\gate[style={fill=cvar!\intensity}]{U^1} &
					\targ{} &
					\ctrl{1} &
					\qw & \qw \\
					& \gate[style={fill=cvar!\intensity}]{U^1} &
					\qw & \targ{} &
					\ctrl{1} &
					\gate[style={fill=cvar!\intensity}]{U^1} &
					\qw & \targ{} &
					\ctrl{1} & \qw\\
					& \gate[style={fill=cvar!\intensity}]{U^1} &
					\qw & \qw & \targ{} &
					\gate[style={fill=cvar!\intensity}]{U^1} &
					\qw & \qw & \targ{} & \qw
				\end{quantikz}
				\begin{quantikz}
					&
					\gate[style={fill=cvar!\intensity}]{U^1}
					\gategroup[wires=4, steps=4, style={inner sep=4pt, dashed}, background,
					label style={label position=below, yshift=-0.5cm}]
					{el 1} &
					\ctrl{1} &
					\qw & \qw &
					\gate[style={fill=cvar!\intensity}]{U^1}
					\gategroup[wires=4, steps=4, style={inner sep=4pt, dashed}, background,
					label style={label position=below, yshift=-0.5cm}]
					{el 2}&
					\ctrl{2} &
					\qw & \targ{} & \qw & \\
					& \gate[style={fill=cvar!\intensity}]{U^1} &
					\targ{} &
					\ctrl{1} &
					\qw &
					\gate[style={fill=cvar!\intensity}]{U^1} &
					\qw &
					\ctrl{2} &
					\qw & \qw \\
					& \gate[style={fill=cvar!\intensity}]{U^1} &
					\qw & \targ{} &
					\ctrl{1} &
					\gate[style={fill=cvar!\intensity}]{U^1} &
					\targ{} & \qw &
					\ctrl{-2} & \qw\\
					& \gate[style={fill=cvar!\intensity}]{U^1} &
					\qw & \qw & \targ{} &
					\gate[style={fill=cvar!\intensity}]{U^1} &
					\qw & \targ{} & \qw & \qw
				\end{quantikz}
			\end{adjustbox}
			\caption{Simple (left) and strong (right) entanglement structure for entanglement layer one and two depicted for generic single-qubit unitaries $U^1$ and $\CNOT$ as an example for a two-qubit unitary.}
			\label{fig:qvc_ent_layer}
		\end{figure}
		
		Another property which is interesting to analyze is the amount of entanglement gates per entanglement layer.
		If this amount is equal to the number of qubits, then we arrange the entanglement gates descending and connect the last with the first qubit.
		This method is commonly used (e.g., circuits 10, 13, 14, 18, and 19 in Ref.~\cite{Sukin2019} as well as circuits in Refs.~\cite{Schuld2021, Skolik2022}).
		We refer to it as \textit{cyclic}.
		One can also use one entanglement gate less than the number of qubits, neglecting the last entanglement gate in the cyclic structure (e.g., circuits 2, 3, 4, 9, 16, and 17 in Ref.~\cite{Sukin2019} as well as circuits in Refs.~\cite{McClean2018, Holmes2022}).
		We call this setup \textit{linear} because it does not connect the last qubit with the first one in an entanglement structure with a range equal to one.
		This case is shown in Fig.~\ref{fig:qvc_ent_style}.
		
		\begin{figure}[bt]
			\centering
			\begin{adjustbox}{height=0.8cm} 
				\begin{quantikz}
					\qw & \ctrl{1} & \qw & \qw\\
					\qw & \targ{} & \ctrl{1} & \qw \\
					\qw & \qw & \targ{} & \qw
				\end{quantikz}\hspace{3cm}
				\begin{quantikz}
					\qw & \ctrl{1} & \qw & \targ{} & \qw\\
					\qw & \targ{} & \ctrl{1} & \qw & \qw \\
					\qw & \qw & \targ{} & \ctrl{-2} & \qw
				\end{quantikz}
			\end{adjustbox}
			\caption{Linear (left) and cyclic (right) entanglement style, with the $\CNOT$ as an example of a two-qubit unitary acting on the first and the last wire it overlaps with.}
			\label{fig:qvc_ent_style}
		\end{figure}
		
		As an example, if one chooses a simple, cyclic entanglement with one entanglement layer, $\CZ$ gates, and two rotation gates as a single-qubit unitary, then one arrives at the circuit ansatz chosen in Ref.~\cite{Skolik2022}.
		Furthermore, we consider circuits that are similar to the alternating layer ansatz~\cite{Cerezo2021Cost, Nakaji2021} and the hardware-efficient ansatz~\cite{Kandala2017, Nakaji2021} by using the entanglement gates to first connect qubits with even and then qubits with uneven index within an entanglement layer.
		
		\begin{figure}[bt]
			\centering
			
			\begin{subfigure}[b]{\linewidth}
				\centering
				\begin{adjustbox}{height=0.9cm} 
					\begin{quantikz}
						\lstick{$\ket{\boldsymbol{0}}$} & \gate[style={fill=cenc!\intensity}]{U_\text{in}} \qwbundle{i} & \gate[style={fill=cvar!\intensity}]{U^1} & \gate[style={fill=cvar!\intensityent}, wires=2]{U^2} & \qw \\
						\lstick{$\ket{\boldsymbol{0}}$} & \qwbundle{h_1} \qw & \qw & \qw & \gate[style={fill=cvar!\intensity}]{U^1} &
						\gate[style={fill=cvar!\intensityent}, wires=2]{U^2} & \qw \\
						\lstick{$\ket{0}$} & \qw & \qw & \qw & \qw & \qw & \gate[style={fill=cvar!\intensity}]{U^1} & \meter{} & \qw \rstick{$\langle z\rangle$}
					\end{quantikz}
				\end{adjustbox}
				\caption{dQNN circuit ansatz $[i,h_1,1]$ for complex-valued data as described by Beer et. al.~\cite{Beer2020,Beer2021} with $i$ input qubits, one hidden layer with $h_1$ qubits, and one output qubit.}\label{fig:dqnn_org}
			\end{subfigure}
			
			\vspace{0.5cm}
			
			\begin{subfigure}[b]{\linewidth}
				\centering
				\begin{adjustbox}{height=0.9cm} 
					\begin{quantikz}
						\lstick{$\ket{\boldsymbol{0}}$} & \gate[style={fill=cenc!\intensity}]{U_\text{in}} \qwbundle{i} & \gate[style={fill=cvar!\intensity}]{U^1} & \gate[style={fill=cvar!\intensityent}, wires=2]{U^2} & \qw \\
						\lstick{$\ket{\boldsymbol{0}}$} & \qwbundle{h_1} \qw & \qw & \qw &
						\gate[style={fill=cenc!\intensity}]{U_\text{in}} & \gate[style={fill=cvar!\intensity}]{U^1} &
						\gate[style={fill=cvar!\intensityent}, wires=2]{U^2} & \qw \\
						\lstick{$\ket{0}$} & \qw & \qw & \qw & \qw & \qw & \qw & \gate[style={fill=cvar!\intensity}]{U^1} & \meter{} & \qw \rstick{$\langle z\rangle$}
					\end{quantikz}
				\end{adjustbox}
				\caption{dQNN circuit ansatz with included data reupload structure.}\label{fig:dqnn_dr}
			\end{subfigure}
			
			\vspace{0.5cm}
			
			\begin{subfigure}[b]{\linewidth}
				\centering
				\begin{adjustbox}{height=0.8cm} 
					\begin{quantikz}
						\lstick{$\ket{\boldsymbol{0}}$} & \gate[style={fill=cvar!\intensity}]{U^1} \qwbundle{i} &
						\gate[style={fill=cenc!\intensity}]{U_\text{in}} & \gate[style={fill=cvar!\intensity}]{U^1} & \gate[style={fill=cvar!\intensityent}, wires=2]{U^2} & \qw \\
						\lstick{$\ket{\boldsymbol{0}}$} & \qwbundle{h_1} \qw & \qw & \qw & \qw &
						\gate[style={fill=cvar!\intensity}]{U^1} &
						\gate[style={fill=cenc!\intensity}]{U_\text{in}} & \gate[style={fill=cvar!\intensity}]{U^1} &
						\gate[style={fill=cvar!\intensityent}, wires=2]{U^2} & \qw \\
						\lstick{$\ket{0}$} & \qw & \qw & \qw &\qw & \qw & \qw & \qw & \qw & \gate[style={fill=cvar!\intensity}]{U^1} & \gate[style={fill=cvar!\intensity}]{U^1} & \meter{} & \qw \rstick{$\langle z\rangle$}
					\end{quantikz}
				\end{adjustbox}
				\caption{dQNN ansatz with data reupload and $U^1U_{\text{in}}U^1$ structure in each hidden layer.}\label{fig:dqnn_dr_zl}
			\end{subfigure}
			
			\vspace{0.5cm}
			
			\begin{subfigure}[b]{\linewidth}
				\centering
				\begin{adjustbox}{height=1.2cm} 
					\begin{quantikz}
						\lstick{$\ket{i_0}$} & \gate[style={fill=cvar!\intensityent}, wires=4]{U^2} & \qw \\
						\lstick{$\ket{i_1}$} & \qw & \qw \\
						\qw & \qw & \qw \rstick{$\ket{o_0}$} \\
						\qw & \qw & \qw \rstick{$\ket{o_1}$}
					\end{quantikz}
					=
					\begin{quantikz}[transparent]
						\lstick{$\ket{i_0}$} & \gate[style={fill=cvar!\intensityent}, label style={yshift=0.65cm}, wires=3]{\text{CAN}} & \gate[style={fill=cvar!\intensityent}, label style={yshift=1.3cm}, wires=4]{\text{CAN}} & \qw & \qw \qw & \qw\\
						\lstick{$\ket{i_1}$} & \linethrough \qw & \linethrough \qw & \gate[style={fill=cvar!\intensityent}, wires=2]{\text{CAN}} & \gate[style={fill=cvar!\intensityent}, label style={yshift=0.65cm}, wires=3]{\text{CAN}} \qw & \qw \\
						\qw & \qw & \linethrough \qw & \qw & \linethrough \qw & \qw \rstick{$\ket{o_0}$} \\
						\qw & \qw & \qw & \qw & \qw & \qw \rstick{$\ket{o_1}$}
					\end{quantikz}
				\end{adjustbox}
				\caption{For the dQNN ans\"atze, the unitary gate $U^2$ is composed of CAN gates.}\label{fig:dqnn_u2}
			\end{subfigure}
			
			\caption{Circuit representation and our modification of dQNN ans\"atze by Beer et. al.~\cite{Beer2020,Beer2021}.}\label{fig:dqnn}
		\end{figure}
		
		In addition to the layered architectures, we include the proposed circuit structure for dQNNs in Refs.~\cite{Beer2020,Beer2021} as an ansatz type.
		The original circuit of Ref.~\cite{Beer2021} is depicted in Fig.~\ref{fig:dqnn_org}.
		This architecture uses qubits as neurons in analogy to classical neural networks.
		Quantum perceptrons $U^l_j$ propagate the information
		through a unitary $U^2$ consisting of two-qubit $\CAN$ gates which connect
		each qubit of the previous layer $l-1$ with the $j$th neuron of layer $l$
		[see Fig.~\ref{fig:dqnn_u2}].
		In addition, parametrized single-qubit gates $U^1$ are applied to each qubit.
		Once the information is propagated, qubits of input and hidden layers are discarded, i.e., their values are not measured.
		The notation
		\begin{equation}
			[i, h_1, \cdots, h_H, \text{out}]
			\label{eq:notation_qubits_dqnn}
		\end{equation}
		represents the number of input qubits $i$, the number of qubits of each hidden layer $h_j$ for all $H$ hidden layers, and the number of output qubits $\text{out}$ which is equal to one in this work.

		We consider classical data $\{x, g(x)\}_{x\in X}$ which is encoded via
		$U_\text{in}(x)$ creating the
		input state $\rho^\text{in} = \ket{\phi^\text{in}_x} \bra{\phi^\text{in}_x}$
		with $\ket{\phi^\text{in}_x} = (U_\text{in}(x)\ket{0})^{\otimes i}$.
		The, possibly mixed, network's final output state can be expressed by
		\begin{align}\label{eq:dqnn_output_state}
			\rho^\text{out}_{x, \Theta} &= \tr_\text{in, hid}\left(
			\mathcal{U} \rho^{\text{in} \prime} \mathcal{U}^\dagger
			\right) \\ \nonumber
			&= \tr_\text{in, hid}(
			U^\text{out}_{m_\text{out}} \ldots U^1_1 \\ \nonumber
			& \qquad \qquad \quad
			\times \rho^\text{in} \otimes
			\ket{0 \ldots 0}_\text{hid, out}\bra{0 \ldots 0} \\ \nonumber
			& \qquad \qquad \quad
			\times U^{1\dagger}_1 \ldots U^{\text{out} \dagger}_{m_\text{out}}
			),
		\end{align}
		where each quantum perceptron $U^l_j = U^l_j(\boldsymbol{\theta}^l_j)$ depends
		on a set of trainable parameters $\boldsymbol{\theta}^l_j$.
		In Fig.~\ref{fig:dQNN_dr_zl}, this setup is called dQNN without zero layer (w/o zl).
		
		Inspired by the layered structure of PQCs explained above, we propose and compare
		enhancements to the dQNN ansatz structure. First, we introduce a data reupload
		scheme by allowing data encoding on every qubit but the output one.
		The network's final output state is obtained as in Eq.~\eqref{eq:dqnn_output_state}
		by using $\ket{\phi^\text{in}_x} = \ket{0}^{\otimes i}$ and
		the modified quantum perceptron $\tilde{U}^l_j(x, \boldsymbol{\theta}^l_j)$
		which includes parametrized single-qubit operation $U^1$ applied
		after input encoding unitary $U_\text{in}(x)$ on each qubit of the previous layer $l-1$.
		This ansatz is depicted in Fig.~\ref{fig:dqnn_dr}
		and referred to as dQNN reuploading without a zero layer (reup w/o zl).
		Second, to provide more degrees of freedom,
		we introduce single-qubit gates before each encoding gate
		which changes $\ket{\phi^\text{in}_x}$ to
		$\ket{\phi^\text{in}} =\ket{0}^{\otimes i}$ and the quantum perceptron to
		$\tilde{U}^l_j(x, \boldsymbol{\theta}^l_j)$ consisting of CAN gates linking all qubits
		of layer $l-1$ to the $j$th qubit of layer $l$ and
		$U^1(\boldsymbol{\theta}^{l-1}_{j,1}) U_\text{in}(x)
		U^1(\boldsymbol{\theta}^{l-1}_{j,0})$
		applied on all qubits except the output qubit.
		This setup is referred to as dQNN reuploading with zero layer (reup w zl).
		This way, the architecture gets closer to the layered ansatz type.
		The full modification is depicted in Fig.~\ref{fig:dqnn_dr_zl}.
		If input unitaries are not repeated in between those parametrized single-qubit gates
		and the only input encoding is on the input layer
		$\ket{\phi^\text{in}} = u(\boldsymbol{\theta}^{0}_{i,1})
		U_\text{in}(x) u(\boldsymbol{\theta}^{0}_{i,0}) \ket{0}^{\otimes i}$, then
		the setup is called dQNN with a zero layer (w zl) and the network's output state
		is obtained by Eq.~\eqref{eq:dqnn_output_state} with
		quantum perceptrons $\tilde{U}^l_j(\boldsymbol{\theta}^l_j)$.
		
		The final output of our circuits, Eq.~\eqref{eq:fx_pqc_model},
		is given by the expectation value of an observable
		$f_\Theta(x) = \langle M \rangle_{\psi(x,\Theta)} = \expval{M}{\psi(x,\Theta)}$.
		In density matrix representation for possibly mixed states $\rho^\text{out}_{x,\Theta}$
		as in Eq.~\eqref{eq:dqnn_output_state} that equation becomes
		$f_\Theta(x) = \langle M \rangle_{\rho^\text{out}_{x,\Theta}} = \tr(\rho^\text{out}_{x,\Theta}M)$
		which allows to rewrite the cost function in
		Eq.~\eqref{eq:mse} as
		\begin{align}
			\epsilon_g = \frac{1}{\abs{X}}\Sigma_{x\in X}
			\abs{\langle M \rangle_{\rho^\text{out}_{x,\Theta}} - g(x)}^2.
		\end{align}
	
		We outline that we do not use the parameter matrix multiplication update rule defined by Ref.~\cite{Beer2020} in our work to train the dQNN ans\"atze but rather stick to the gradient method which we use for layered ans\"atze
		and is explained in more detail in the next subsection.
		In the follow-up work~\cite{Beer2021}, the authors use a gradient descent method
		with a first-order approximation of the training cost's gradient.
		This way, they are able to approximate the gradient by evaluating the circuit on
		real quantum devices for slightly shifted parameters $(p-\epsilon)$ and $(p+\epsilon)$.
		
		Since the authors of Ref.~\cite{Schreiber2022} suggest that a possible quantum advantage for machine learning tasks can only lie in the training on real quantum hardware, the search for optimal update rules is a major task ahead and left for further studies.
		
		\subsection{Numerical setup}
		As is standard for the application of VQAs, we use a hybrid quantum-classical optimization scheme~\cite{Cerezo2021VQAs}.
		Numerical results in this work were obtained using TensorFlow Quantum~\cite{Broughton2020TensorflowQuantum}, which utilizes the quantum circuit simulator qsim~\cite{qsim2020}, and Cirq~\cite{Cirq}.
		Gradients are calculated with the Adjoint method~\cite{Plessix2006, Luo2020}, which is suited for analytic simulations and the standard differentiator in TensorFlow Quantum.
		All results shown in the main text of this paper are calculated using analytic expectation values; for a comparison to shot-based values, see Appendix~\ref{ap:shots}.
		Moreover, we use a simulated noise model based on calibration data of the IBM Quantum Falcon r5.11H Processsor to demonstrate that the learning capability can be determined in a noisy setup and that general trends and observations can survive, see Appendix~\ref{ap:noise}.

		For our trainings, we use the Adam optimizer~\cite{KingmaAdam2014} with a maximum of $360$ epochs.
		The learning rate is decreased from $0.5$ for the first $120$ epochs to $0.1$ for the second $120$ epochs and $0.05$ for the last $120$ epochs.
		In Appendix~\ref{hyper} we compare the results for different hyperparameter configurations
		to justify this choice.
		The train and validation data sets contain $50$, respectively, $100$, data points for functions of degree less than 10, respectively, greater or equal 10.
		The $x$-values of the test data are calculated by including the endpoint of the interval $[0,2\pi]$, whereas the endpoint is excluded in the validation data $[0,2\pi)$ to ensure different sets.
		During the training, the train data is split into batch sizes of $25$, resp. $50$, and validated after each epoch using the validation data.
		
		As a cutoff, we choose a loss value of $5\times10^{-5}$ where the training terminates.
		We consider fits reaching this value as quasiperfect because they correspond to an
		average absolute error of $0.7\%$ for each data point.  While this cutoff corresponds
		to overfitting, it reveals ans\"atze that are well suited to learn arbitrary partial
		Fourier series of a given degree. As the exact ground truth of the learning task for the
		learning capability is known, and we are interested in the model's ability to fit the
		Fourier functions exactly, such small absolute errors are sensible compared to a
		concrete real-world learning problem, where the generalization over potentially
		incomplete, corrupted, or noisy data has to be balanced and traded off with model
		accuracy on training data.
		
		We test on 100 randomly chosen normalized Fourier functions.
		This gives stable results for mean values and standard deviations of the loss values, which are independent of the random initial parameters and the set of Fourier functions for each training.
		In Appendix~\ref{ap:fourier} we provide details on the generation of the Fourier functions, their correlation, and the dependence of the learning capability on the set of Fourier functions.
		
		We provide a marker for an average loss value of $6.25\times10^{-4}$ depicted as a gray line in the plots for the learning capability because this level corresponds to an absolute average error for each function fit of less than $5\%$ of the maximum Fourier function value.
		Below that value, we regard average loss values as not statistically significant and view the corresponding architectures as equal regarding their learning capability since there is no clear way to distinguish between different loss values in this regime, and no practical reason to do so.
		
		Regarding the architectures, we always use $R_X$ gates to encode the real-valued input data and a $\sigma_z$-measurement on the last qubit for layered PQCs as well as for dQNNs.
		
		\section{Results}
		\label{Results}
		
		To get an intuition on how the learning capability of a PQC relates to the Fourier coefficient calculation and numerical methods introduced by Schuld et al.~\cite{Schuld2021}, we advise the reader to look into Appendix~\ref{Preliminaries}
		where introductory results for simple circuits for Fourier degree 2 are presented that motivate our method.
		Furthermore, we look into the minimal number of qubits and layers required to learn a specific Fourier function in that Appendix.
		
		In the following subsections, we demonstrate an example where
		two PQC architectures have the same sampled Fourier coefficients
		but differ in their ability to fit different Fourier functions, i.e.,
		they show differences in their learning capability
		in Sec.~\ref{subsec:schuld} and analyze
		various elements of PQC architectures in detail for layered ans\"atze in Sec.~\ref{subsec:wsw} and for different dQNN circuits in Sec.~\ref{subsec:dqnn}.
		
		\subsection{Comparison to sampling Fourier coefficients} \label{subsec:schuld}
		We extend the numerical results by Schuld et al.~\cite{Schuld2021}, and show that the learning capability provides insights that cannot be gained, for example, by sampling parameters of PQCs and calculating Fourier coefficients.
		
		\begin{figure*}[tb]
			\centering
			\includegraphics[width=0.76\textwidth]{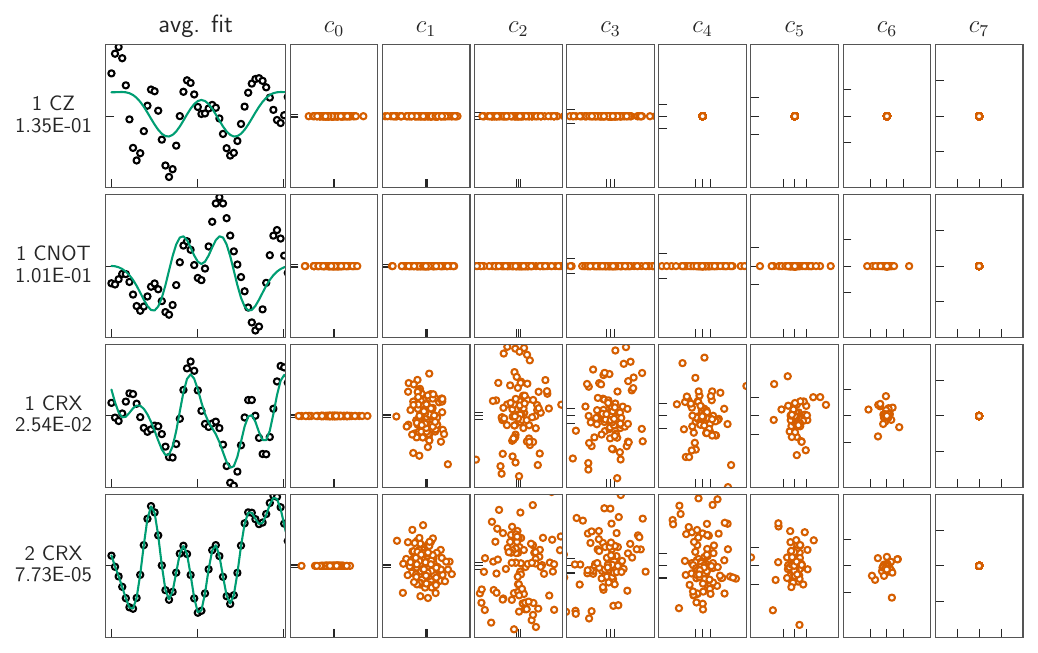}
			\caption{\textbf{Left column}: Validation results closest to the average loss value for four different layered ans\"atze with $n = 3$, $L = 2$, and $R_Y$ as the single-qubit rotation gate; the number of entanglement layers, the entanglement gate, and the average loss value are listed next to each graph.
			The results of the quantum circuits are shown as a green line and the Fourier functions as black circles.
			The $x$ axis in this column is from $0$ to $2\pi$, the $y$ axis from $-1$ to $1$.\newline
			\textbf{Right columns}: Fourier coefficients $c_0$ to $c_7$ resulting from the inverse Fourier transform of the results of evaluating the four ans\"atze with sampled parameters $\Theta$.
			The ticks on the $x$ and $y$ axes are always at $-0.01$, $0$, and $0.01$, showing how the Fourier coefficients become smaller with increasing degree due to the function values being limited to the range $[-1,1]$.
			The $x$ axis shows the real part of the coefficient, while the $y$ axis shows the imaginary part.\newline
			\textbf{Rows}: first: ansatz with CZ as the entangling gate and one entanglement layer; second: ansatz with $\CNOT$ as the entangling gate and one entanglement layer; third: ansatz with $\CRX$ as the entangling gate and one entanglement layer; fourth: ansatz with $\CRX$ as the entangling gate and two entanglement layers per $W$.}
			\label{fig:coeffs_funct}
		\end{figure*}
		
		As an example, we consider
		Fourier series of degree 6 and
		ans\"atze with three qubits, two layers, single-qubit operation $R_Y$, and a simple, linear entanglement structure with one entanglement layer.
		The complexity of the circuits makes analytic calculations already unfeasible and, hence, motivates the introduction of learning capability.
		The first row of Fig.~\ref{fig:coeffs_funct} corresponds to results for such an ansatz with CZ gates as entanglement gates, the second row uses $\CNOT$ gates, the third uses $\CRX$, and the fourth uses $\CRX$ gates but two entanglement layers instead of one.
		All circuits are fully depicted in Figs.~\ref{fig:wsw_circuit_q3l2_cz},~\ref{fig:wsw_circuit_q3l2_cx},~\ref{fig:wsw_circuit_q3l2_crx_el1}, and~\ref{fig:wsw_circuit_q3l2_crx_el2} in Appendix~\ref{circuits}.
		Because the learning capability is an average value, we select and plot in the first column of Fig.~\ref{fig:coeffs_funct} the learned model that has a loss value closest to the average.
		The details of the distribution of the errors are depicted in Fig.~\ref{fig:error_bardistribution} in Appendix~\mbox{\ref{ap:error}}.
		For the first two ans\"atze, determining the learning capability is not strictly necessary; however, it incorporates the fact that not all Fourier coefficients can be represented because the loss values for both ans\"atze are relatively high.
		The last two rows of Fig.~\ref{fig:coeffs_funct} show a case in which analyzing Fourier coefficients alone is not enough to describe how well circuits can represent Fourier functions since both ans\"atze give access to all possible coefficients.
		However, determining the learning capability reveals that the last ansatz, with two entanglement layers, fits Fourier functions of degree 6 on average much better than the other ans\"atze.
		
		
		How well an ansatz can fit Fourier functions depends on the ansatz, as well as the hyperparameters used for the training.
		However, we consider ans\"atze well suited in a practical sense, if they are robust to hyperparameter changes or if it is easier to find hyperparameters that lead to successful trainings.
		We provide a grid comparison of different hyperparameters for the third ansatz (three qubits, two layers, single-qubit operation $R_Y$, a simple, linear entanglement structure, one entanglement layer and $\CRX$ entanglement gates) in Appendix~\ref{hyper}.
		None of the $21$ different hyperparameters lead to a different learning capability.
		Hence, none of these parameter sets improves the learning capability of this ansatz.
		This fact strongly indicates that ans\"atze which enable the same Fourier coefficients but have different learning capabilities have some profound differences.
		

		\subsection{Layered ansätze} \label{subsec:wsw}
		
		\begin{figure}[t]
			\centering
			\includegraphics[width=\linewidth]{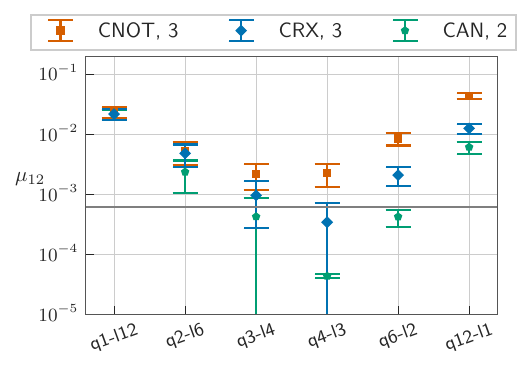}
			\caption{Learning capability $\mu_{12}$ for layered ans\"atze with single-qubits gates $R_YR_Z$ and simple, linear entanglement structure but different entanglement gates and layers. The different entanglement gates are $\{\CNOT, \CRX, \CAN\}$  with the number of entanglement layers $\{3,3,2\}$. The $x$ axis shows the architectures in terms of number of qubits and number of layers. An extended version of this figure is shown in Fig.~\ref{fig:pqc_d12_all} in Appendix~\ref{Appendix_plots}.}
			\label{fig:loss_degree12_entgates}
		\end{figure}
		
		Because we know that $nL = d$ is necessary to enable $d+1$ Fourier coefficients (including $c_0$), we start with analyzing the impact of different combinations of qubits and layers.
		For degree $12$ functions, for example, we can use the following ans\"atze: ($12$ qubits, $1$ layer), $(6,2)$, $(4,3)$, $(3,4)$, $(2,6)$, and $(1,12)$.
		If we choose layered architectures with simple, linear entanglement structure and parametrized single-qubit gates $R_YR_Z$, then we can vary the amount of entanglement layers as well as the entanglement gates.
		A summary of this comparison is depicted in Fig.~\ref{fig:loss_degree12_entgates}.
		The learning capability is plotted with the $95\%$ confidence interval for each architecture and the $5\%$ absolute error baseline is drawn in gray.
		First, we see that ans\"atze with $L\lesssim n$ have the best learning capability $\mu_{12}$, while ans\"atze with $L=d$ or $n=d$ show a poor learning capability $\mu_{12}$  with values of $0.1\geq \mu_{12} \geq 0.01$.
		
		In principle, barren plateaus could result in worse learning capabilities for ans\"atze with large $n$
		because, as can be seen in Fig.~3 in Ref.~\cite{McClean2018},
		varying the number of qubits between $1$ and $12$ reduces the variance
		of the gradients from $10^{-1}$ to $10^{-4}$, indicating a relatively strong decline.
		However, as can be seen in Fig.~4 in Ref.~\cite{McClean2018}, the variance of the gradients declines much less when varying the layers, especially for a small number of qubits (roughly $n<10$).
		From that figure, we would not expect to have vanishing gradients in our cases
		$(1, 12)$ and $(2,6)$. Hence,
		barren plateaus fail to explain the fact that ans\"atze with four qubits perform consistently better than ans\"atze with fewer qubits.
		Numerical simulations
		which are explained in more detail in Appendix~\ref{barren}
		confirm
		the conclusion that we draw from the figures in Ref.~\cite{McClean2018}
		regarding the barren plateaus for the PQCs that we consider.
		
		Second, we find that $\CAN$ gates with two entanglement layers perform slightly better than $\CRX$ with three entanglement layers and $\CNOT$ gates are not able to achieve reasonable results with two entanglement layers.
		However, due to the structure of the $\CAN$ gates, the corresponding architectures need more two-qubit gates despite having less entanglement layers (see Tab.~\ref{tab:parameter_count_WSW_d12} in Appendix~\ref{Appendix_plots}).
		
		The confidence interval for the configuration with three qubits,
		four layer and $\CAN$ gates with two entanglement layers
		and four qubits, three layer and $\CRX$ gates with three entanglement layers
		are visually large. In Appendix~\mbox{\ref{ap:error}}, especially Figs.~\ref{fig:error_bardistribution_q3_l4} and~\ref{fig:error_bardistribution_q4_l3},
		we plot the error distribution in
		comparison to results of different configurations.
		For those two examples, it becomes evident that the logarithmically scaled $y$-axis
		and cutting the plot's $y$-axis at $10^{-5}$ are the major reasons for the seemingly skewed and large confidence intervals.
		
		A more detailed comparison is given in Fig.~\ref{fig:pqc_d12_all} in Appendix~\ref{Appendix_plots}, which includes all results for linear and cyclic entanglement structure and two and three entanglement layers.
		These results show that the cyclic entanglement structure does not lead to a significant improvement in learning capability.
		This might be due to the fact that the amount of qubits is quite small but is nevertheless remarkable because it shows that for ans\"atze representing degrees $\leq 12$, cyclic entanglement structure is mostly not needed and thus, the gate count can be reduced.
		
		\begin{figure}[bt]
			\centering
			\includegraphics[width=\linewidth]{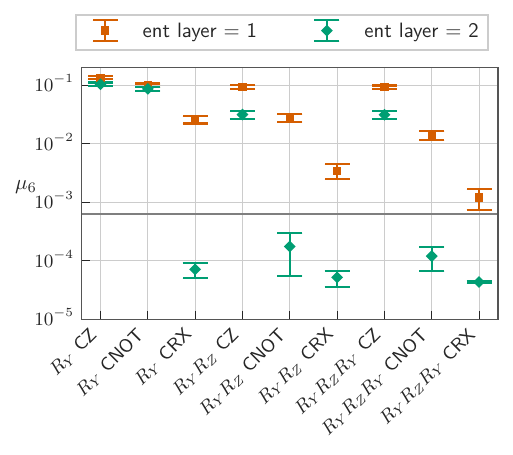}
			\caption{Comparison of $\mu_6$ for different layered ans\"atze with three qubits and two layers and simple, linear entanglement structure. The different $U^1 \in \{R_Y, R_YR_Z, R_YR_ZR_Y\}$ and entanglement gates $\{ \CZ, \CNOT, \CRX \}$ are split for different entanglement layers $\{1,2\}$. An extended version of this figure is shown in Fig.~\ref{fig:pqc_d6_cz_urot} in Appendix~\ref{Appendix_plots}.} \label{pqc_d5_cz_urot_q3l2}
		\end{figure}
		
		Next, we systematically analyze the impact of single-qubit unitaries and methods to construct the trainable $W$s.
		We choose degree $6$ Fourier functions because they are complex enough to reveal meaningful results about parameterized single-qubit operations and complement the previous results.
		We outline the results of three qubit, two layer ans\"atze in Fig.~\ref{pqc_d5_cz_urot_q3l2} as this combination follows the rule of $L \lesssim n$.
		Within each plot, all ans\"atze utilize a simple, linear entanglement structure.
		By varying $U^1 \in \{R_Y, R_YR_Z, R_YR_ZR_Y\}$,
		we find that using $R_YR_Z$ yields a satisfying learning capability in most cases, similar to $R_YR_ZR_Y$ and, in most cases, strictly better than using a single $R_Y$.
		Thus, the use of arbitrary rotations with three parameters is, in many cases, not needed, but using only one $R_Y$ is not enough to obtain good learning capabilities for layered ans\"atze.
		By varying the entanglement gates $\{\CZ, \CNOT, \CRX\}$ we show that, with our setup, CZ gates perform either as good as or worse than $\CNOT$ gates.
		Thus, we find that using $\CNOT$ gates as entanglement gates with no trainable parameter is preferable over CZ gates in a setup with $R_X$ encoding and $\sigma_z$-measurement on the last qubit.
		When varying the entanglement layer $\{1, 2\}$ it turns out that, for degree $6$ functions, using two entanglement layers is already enough for all considered entanglement gates.
		Remarkably, the ansatz with two linear entanglement layers, $\CRX$, and $R_Y$ has a high learning capability.
		However, this configuration had worse results for the learning capability on degree $12$ functions, showing that this configuration does not perform as well on higher order problems.
		CZ gates represent the only cases where a cyclic entanglement structure performs clearly better than a linear structure.
		However, even in these cases, the values for the learning capability only catch up to the values of the corresponding ansatz with $\CNOT$ gates (see Fig.~\ref{fig:pqc_d6_cz_urot} in Appendix~\ref{Appendix_plots}).
		A comparison of the results from three qubits, twolayers with six qubits, one layer
		is also given in Fig.~\ref{fig:pqc_d6_cz_urot}.
		It clearly indicates that three qubits, two layers performs much better than the other combination, similar to what we have seen for $d=12$.
		
		Since entanglement can play a larger role for larger qubit numbers, we increase the Fourier degree to 12 and test ans\"atze with four qubits and three layers.
		We choose ans\"atze with $U^1=R_YR_Z$ and entanglement gate $\CRX$ because they performed stable in previous experiments.
		We vary the entanglement
		structure $\{\text{simple}, \text{strong}, \text{strongc14}\}$, which denotes the styles shown in Fig.~\ref{fig:qvc_ent_layer} and the style used in circuit 14 from Ref.~\cite{Sukin2019}.
		The first three entries in Fig.~\ref{fig:pqc_d12_ent} show that these ans\"atze with two entanglement layers do not achieve a reasonable learning capability, no matter which structure we use.
		Instead, with three entanglement layers, all ans\"atze achieve an optimal learning capability independent of their entanglement structure and style.
		Furthermore, as the examples in Appendix~\ref{subsec:warm_up_1q1l} suggest---even beyond the case of one qubit and one layer---having a zero layer can enhance the learning capability drastically.
		All presented results demonstrate that increasing the number of entangling layers can change the learning capability quite drastically while not changing the Fourier coefficient picture.
		The results also indicate that the amount of entanglement layers is much more important than the structure that is implemented within each entanglement layer.
		
		\begin{figure}[bt]
			\centering
			\includegraphics[width=\linewidth]{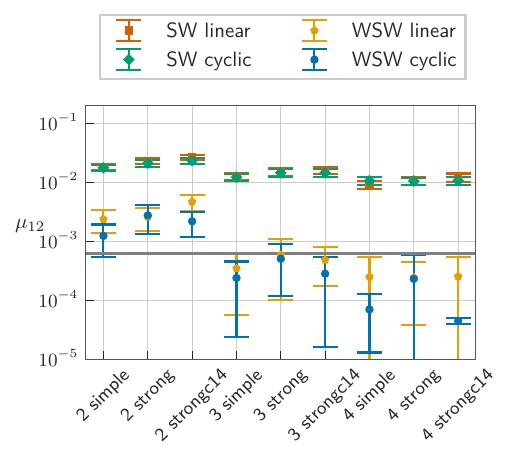}
			\caption{Learning capability $\mu_{12}$ for layered ans\"atze with four qubits and three layers,  $U^1=R_YR_Z$ and entanglement gate $\CRX$, either with zero layer ($WSW$) or without zero layer ($SW$) and linear or cyclic entanglement. The $y$ axis shows average loss values, while the $x$ axis shows the architectures with different entanglement layer numbers $\{2,3,4\}$ and entanglement structures $\{\text{simple}, \text{strong}, \text{strongc14}\}$. The latter corresponds to the entangling structure of circuit 14 in Ref.~\cite{Sukin2019}.} \label{fig:pqc_d12_ent}
		\end{figure}
		
		Using only half of the controlled operations in each entanglement layer, as done in the hardware-efficient or alternating layer ansatz reduces the learning capability; similar learning capabilities with these ans\"atze can be achieved by doubling the number of entanglement layers.
		This can be seen in Fig.~\ref{fig:d6_alt} where entanglement layers and gates vary as well as single-qubit operations.
		Furthermore, these results reveal the following three aspects of a circuit's depth:
		(1) architectures with four alternating entanglement layers with CRX and $R_Y$ or $R_YR_Z$ have a very low learning capability value, indicating that if one is interested in highly expressible PQCs the
		$R_Y$ has lower depth compared to $R_YR_Z$;
		(2) increasing the depth to six alternating entanglement layers does not improve the learning capability value,
		indicating that, in this setup, four alternating entanglement layers could be used for an architecture with lower depth; and
		(3) if the circuit must be short, i.e., contain two alternating entanglement layers, then utilizing the $\CRX$ gate
		would lead to PQCs with a lower learning capability value than $\CZ$ or $\CNOT$.
		
		The larger confidence interval for the configuration with $\CZ$ gates and two entanglement layers is partially due to the logarithmic scaling, but the confidence interval is indeed larger than, for example, the configuration with $\CRX$ gate and one entanglement layer because the result contains one outlier.
		The error distribution is plotted and compared to other results in Fig.~\ref{fig:error_bardistribution_q3_l2_alt} in Appendix~\ref{ap:error}.
		
		\begin{figure}[bt]
			\centering
			\includegraphics[width=\linewidth]{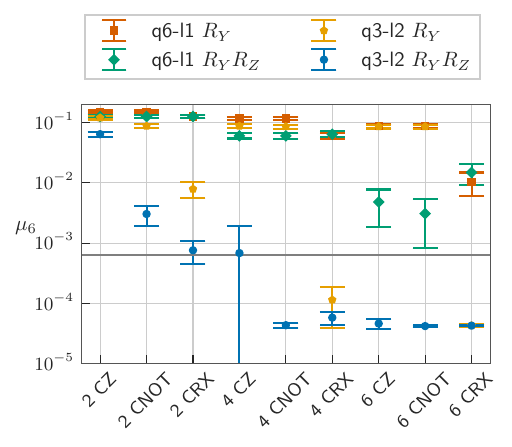}
			\caption{Learning capability $\mu_6$ of different alternating-layer (\textit{alt}) ans\"atze with six qubits, one layer and three qubits, two layers. The $x$-axis shows different entanglement layer numbers and entanglement gates per $W$.} \label{fig:d6_alt}
		\end{figure}
		
		To summarize our results for layered ans\"atze, we state that the best learning capabilities can already be achieved by using only a $\CNOT$ gate or the one-parameter $\CRX$ gate.
		The three-parameter $\CAN$ gate has a slightly better learning capability but also more parameters even though only two entanglement layers are necessary when using this gate for $d=12$ (see Figs.~\ref{fig:loss_degree12_entgates} and~\ref{fig:pqc_d12_all}).
		The entanglement style (linear or cyclic) as well as the entanglement structure (simple or strong) has almost no impact on $\mu_d$ for $d \leq 12$ (see Fig.~\ref{fig:pqc_d12_ent}).
		This shows that some necessary amount of entanglement has to be created by the ansatz.
		We establish that, for degree 6, two entanglement layers are enough for most architectures to gain sufficient learning capability (see Fig.~\ref{fig:coeffs_funct}) and three entanglement layers are enough for degree 12 functions (see Figs.~\ref{fig:pqc_d12_all} and~\ref{fig:pqc_d12_ent}).
		Increasing the entanglement layer count further does not improve the learning capability and, thus, can be considered inefficient for functions with equal or less degrees (see Fig.~\ref{fig:pqc_d12_ent}).
		
		\subsection{dQNN ansätze} \label{subsec:dqnn}
		
		\begin{figure}[bt]
			\centering
			\includegraphics[width=\linewidth]{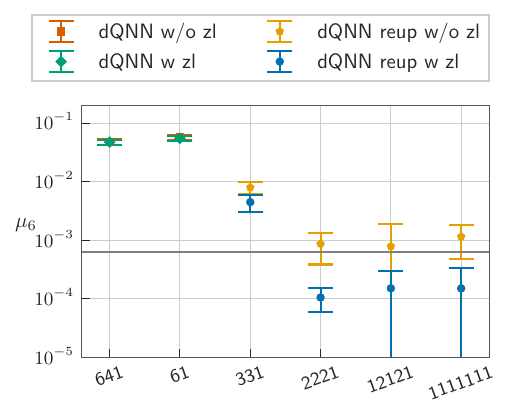}
			\caption{Comparison of learning capability $\mu_6$ of dQNN ans\"atze with $U^1 = R_YR_Z$ and without [see Fig.~\ref{fig:dqnn_dr}] or with [see Fig.~\ref{fig:dqnn_dr_zl}] zero layer (zl). The $x$ axis shows the configuration of the dQNN ans\"atze, where the first digit is the number of qubits in the input layer, the last digit is the number of qubits in the output layer, and the digits in between describe the number of qubits in the hidden layers. The first two architectures have no data reupload on their hidden qubits.} \label{fig:dQNN_dr_zl}
		\end{figure}
		
		The dQNN ansatz type is shown in Fig.~\ref{fig:dqnn} for the original circuit architecture from Refs.~\cite{Beer2020,Beer2021} and our enhancements, including the data reupload structure.
		Similar to the layered ans\"atze, one can see from Fig.~\ref{fig_d6_compare} in Appendix~\ref{Preliminaries_Multi} that a certain number of hidden qubits and layers is needed to achieve meaningful learning capability for a certain Fourier degree when using the dQNN ansatz type.
		In the case of $d=6$, a structure of at least $[2,2,2,1]$ was needed where the notation is according to Eq.~\eqref{eq:notation_qubits_dqnn}.
		This ansatz has two hidden layers and a maximum of four neighboring qubits, meaning qubits that are directly connected to each other.
		
		Evaluating $\mu_6$ for the dQNN ansatz type with $U^1 = R_YR_Z$, we find as a first result that the learning capability is close to $\mu_6 \approx 0.1$, and we are, thus, not able to learn Fourier series when we encode classical data only on the input qubits as depicted in Fig.~\ref{fig:dqnn_org}.
		This is shown in Fig.~\ref{fig:dQNN_dr_zl} where the first two ans\"atze $[6,1]$ and $[6,4,1]$ have enough qubits but not the right structure to achieve a meaningful learning capability.
		Note that the ansatz $[6,4,1]$ has a hidden layer of four qubits without data reupload.
		Defining a data reupload strategy for the hidden qubits increases the learning capability drastically.
		This is shown in Fig.~\ref{fig:dQNN_dr_zl} for both reupload ans\"atze from Fig.~\ref{fig:dqnn_dr} and Fig.~\ref{fig:dqnn_dr_zl}.
		Furthermore, we find that a medium-sized amount of  neighboring qubits (three or four in the case of a degree six function) is preferable over having many neighboring qubits (seven in the case of degree six) or only a few (two for degree six).
		If the number of neighboring qubits is higher (four or six instead of three for $\mu_6$), then the learning capability is better when more hidden than input qubits are used [see Fig.~\ref{fig:dQNN_dr_zl}].
		
		Next, we investigate the effect of different single-qubit gates on the learning capability of dQNN ans\"atze.
		First, we look at the effect of introducing a zero layer of parametrized rotations on each qubit, i.e., changing the architecture from the one in Fig.~\ref{fig:dqnn_dr} to the one in~\ref{fig:dqnn_dr_zl}.
		Starting with single-qubit parameterized rotations in a $U^1 U_{\text{in}}U^1$ structure instead of starting with the data-encoding in a $U_{\text{in}}U^1$ structure and including the parametrized rotations on each hidden qubit as well increases the learning capability consistently (see Fig.~\ref{fig:dQNN_dr_zl}).
		However, compared to the layered ans\"atze, the effect of this change is not as significant, and basic learning capability can already be achieved without a zero layer.
		
		\begin{figure}[bt]
			\centering
			\includegraphics[width=\linewidth]{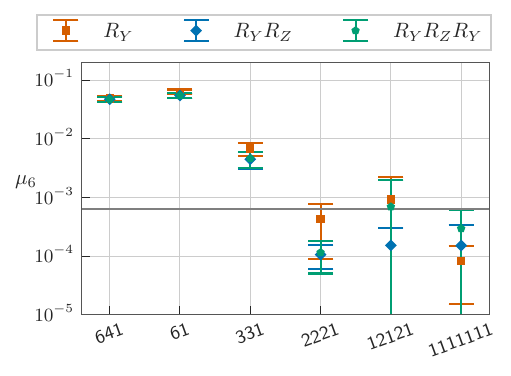}
			\caption{Comparison of learning capability of dQNN ans\"atze with zero layers according to Fig.~\ref{fig:dqnn_dr_zl} with different single-qubit unitary gates. The $y$ axis shows the learning capability as measured via the average mean squared error. The $x$ axis shows the configuration of the dQNN ans\"atze, where the first digit is the number of qubits in the input layer, the last digit is the number of qubits in the output layer, and the digits in between describe the number of qubits in the hidden layers. Ans\"atze with different single-qubit unitaries are represented in different colors.} \label{fig:dQNN_urot}
		\end{figure}
		
		As a second step, we compare the learning capabilities for dQNN ans\"atze with different types of parametrized rotations in Fig.~\ref{fig:dQNN_urot}.
		We find that using single-qubit gates $R_YR_Z$ yields a satisfying learning capability, which is similar to an ansatz using  $R_YR_ZR_Y$.
		However, a single $R_Y$ gate already yields satisfying results in many cases.
		Thus, the use of rotations with three or even two parameters is not needed in our test cases for dQNN ans\"atze, which could be due to the rich entanglement structure of the dQNN ans\"atze resulting from the $\CAN$ gates.
		
		\section{Discussion and Outlook}
		\label{Discussion}
		In this paper, we have defined the new measure of learning capability, which quantifies how well a PQC ansatz is able to learn Fourier functions of a specific degree on average.
		This quantity turns out to be necessary since existing measures for PQC performance, like the expressiveness in terms of the Haar measure~\cite{Sukin2019}, in terms of Fourier series coefficients~\cite{Schuld2021}, or in terms of dynamical Lie algebras (see Appendix.~\ref{ap:lie_algebra}), miss differences between certain ansatz types or lack a quantification of the trainability.
		The learning capability experiments reveal that PQCs exist, which provide
		all Fourier coefficients for the respective degree,
		but differ in their ability to fit different Fourier functions, i.e.,
		they show differences in their learning capability.

		The results in Sec.~\ref{subsec:schuld} show that even though the inverse Fourier transform of a circuit sampling provides all the Fourier coefficients for the respective degree, the learning capability may still be small, which extends the numerical results provided in Ref.~\cite{Schuld2021}.
		One reason for different learning capabilities in these cases could be due to interdependencies between different Fourier coefficients that hinder access to arbitrary Fourier functions.
		
		We have used the learning capability of PQCs to compare several layered and dQNN type ans\"atze.
		Thereby, we hope to build a bridge between theoretical analysis and concrete machine learning tasks that takes into account the input encoding, expressiveness, entanglement, and trainability.
		This is especially important considering the gate count reduction that is needed in the NISQ era and the different native gate sets that are available on different quantum computing hardware.
		When proposing new PQC architectures, the learning capability can be used to determine their performance compared to existing architectures.
		
		In general, we find that both layered and dQNN ans\"atze have a similar performance for a similar parameter or gate count (see Table~\ref{tab:parameter_count_WSW_dQNN}).
		For layered ans\"atze, the results suggest that an efficient architecture with high learning capability $\mu_d$ for Fourier functions with degree $d\in \{6,12\}$ can be obtained by using a similar number of qubits and layers or slightly more qubits than layers which can be stated by $n \gtrsim \sqrt{d}$ and $L \lesssim \sqrt{d}$.
		Furthermore, two rotation gates $R_YR_Z$ and the $\CNOT$ or $\CRX$ gate with a simple, linear entangling structure yield good learning capability.
		Within this setup, using more entanglement layers within one trainable unitary $W$ increases the learning capability until a saturation limit is reached which depends on the degree that is possible due to the data reupload structure.
		Our results show that two (respectively, three) entanglement layers in each $W$ yield good learning capabilities for degree six (respectively, 12) functions.
		
		For dQNN ans\"atze, a data reupload technique, inspired by the reupload technique for layered PQCs, turned out to be necessary.
		While these ans\"atze allow for many different amounts of input and hidden qubits, it becomes apparent that having more hidden than input qubits leads to better learning capabilities $\mu_6$ and that the hidden qubits should be structured in several hidden layers.
		In contrast to layered ans\"atze, the use of $R_Y$ as a single-qubit gate is sufficient to achieve good learning capability $\mu_6$ for dQNN ans\"atze.
		
		Overall, the learning capability of a chosen circuit ansatz is inherently limited by the structure and gate usage of the ansatz.
		Thus, a careful selection of the ansatz type and the number of qubits and layers is needed to build efficient circuits.
		When designing circuits for practical use cases, those numbers determine the maximum Fourier degree that can be learned and one should select the PQC structure and gates such that the learning capability is suitable.
		For example, the learning capability can be used to compare different ans\"atze
		with a similar value and identify the architectures that need
		fewer single-qubit operations, fewer entanglement layers, or a different qubit-to-layer ratio. This way, the depth can be reduced, leading to reduced barren plateau effects without lowering the capability to learn Fourier series.
		
		Because discrete Fourier series can approximate arbitrary functions up to a certain precision, we conclude that the performance of PQCs in learning Fourier functions
		can give some guidelines on the performance in general learning tasks.
		For example, we expect the architectures that learn Fourier series
		with low degrees inaccurately, i.e., with high learning capability values,
		to be unsuited for supervised learning tasks with a complex dependency between
		input and output data.
		We also expect architectures that learn a variety of Fourier series
		with a given degree accurately to be more likely
		to capture complex dependencies between input and output data.
		However, we emphasize that this assumption about the correlation between performance on
		Fourier series and general QML tasks, of course, may not hold universally.
			
		Under this assumption,
		our results can be used as guidance regarding this selection, meaning that a well-informed initial guess for the architecture of a PQC can be chosen by involving the learning capability.
		We concretize this by considering two different problems.
			
		First, we assume that very little is known about the solution of a supervised learning problem.
		Similar to the arguments presented in Ref.~\cite{friedland2018},
		we advise starting the training of a QML model with an ansatz with a good learning capability, i.e., choosing layered PQCs with $L\lesssim n$, $R_YR_Z$, and $\CRX$ gates,
		and to seek for minimizing the training error.
		By choosing the number of qubits and layers at the beginning of the procedure,
		the number of input gates is determined and hence,
		the maximal possible Fourier degree is known.
		The entanglement layers can be selected accordingly in the sense that two (respectively, three)
		are sufficient for six (respectively, 12) input encoding gates.
		More entanglement layers should be considered if more Fourier degrees are targeted.
		The number of qubits and layers (and entanglement layers) can successively be increased
		if the ansatz is not able to achieve low errors on the training data.
		Once an architecture has a low error on the training data, the evaluation on the test data determines if the architecture is suitable for the given task.
		In case of a large error on the test data, overfitting might be occurring and
		standard machine learning techniques like regularization could be applied as well as
		choosing PQC ans\"atze that reduce the depth, i.e., contain fewer entanglement layers,
		or those that reduce the gate complexity for compiling them on quantum hardware,
		i.e., use $\CNOT$ gates instead of $\CRX$, to reduce the overfitting.
		More practically speaking, finding a model that is suited for the training and
		test data is done simultaneously and by (automatized) hyperparameter search.
		Hyperparameter search grows exponentially in the number of considered parameters.
		Therefore, being able to reduce these parameters---e.g.,
		by excluding the parametrized single-qubit operation by restricting it to $R_YR_Z$
		and reducing the considered combination of $n$ and $L$ by excluding the extreme cases---provides benefits to hyperparameter search.
		
		Second, we present nonlearning problems in quantum computational fluid dynamics (qCFD);
		a growing field of interest (see, e.g.,
		Refs.~\cite{gaitan2020, Kiffner2023})
		that investigates the simulation of nonlinear classical systems on quantum computers.
		In CFD, Fourier series are known to describe initial conditions that need to be 
		accurately represented by quantum states or circuits.
		Moreover, spectral methods in fluid dynamics allow to express flow solutions as coefficients
		for ansatz functions such as Fourier series~\cite{canuto2007}.
		Our analysis and the learning capability provide concrete results on the question which
		architectures are more likely to learn Fourier series accurately.
		Hence, studying PQCs for state preparation of CFD-relevant initial states
		is a possible area of application.
		In addition, 
		in Ref.~\cite{Lubasch2020}, the authors propose a method that includes quantum variational computing
		to solve nonlinear problems. The
		authors propose matrix product states to represent energy potentials
		in quantum circuits
		and use them for the case where the energy potential is described by a Fourier series.
		Therefore, investigating PQCs for representing these energy potentials is a similar
		possible area of application
		where our work provides guidelines on the choice of architecture.
		
		In this work, we do not change the input encoding gate and use $R_X$ for one-dimensional classical data in all experiments.
		However, evaluating the learning capability of PQC ans\"atze on multi-variate functions is an important next step to move towards common machine learning use cases.
		This especially means designing and analyzing those PQCs
		which provide enough degrees of freedom required for
		fitting multi-variate Fourier series~\cite{casas2023multi}.
		Analyzing the effect of different input encodings per layer could be interesting for further studies.
		Additionally, one could also perform an in-depth investigation of the similarities between layered and dQNN ans\"atze.
		
		The analysis of different gradient methods, like the quantum natural gradient~\cite{Stokes2020} and the behavior of both training and evaluation on real quantum hardware are important topics that need further research.
		Finally, the learning capability could be used to analyze techniques like entanglement dropout~\cite{Kobayashi2022} and determine circuit ans\"atze which are suited to reduce overfitting on real-world, noisy input data.
		For finding a possible quantum advantage in learning~\cite{Schreiber2022} or generalization~\cite{Caro2022}, both research directions could be helpful.
		
		\section{Code}
		\label{Code}
		Code accompanying this paper is given in Ref.~\cite{HeimannData2023}.
		In this resource, we provide code to define an ansatz and determine its learning capability.
		The code can be used to perform the calculations presented in this paper and reproduce each reported value individually.
		We also provide code to create new sets of Fourier series and calculate their cross-correlations.
		
		\section{Acknowledgments}
		The authors thank Lukas Groß, Felix Wiebe, and Patrick Draheim for helpful	discussions	and the anonymous reviewers for their constructive remarks and recommendations to improve this manuscript.
		
		We acknowledge support from the Bundesministerium für Bildung und Forschung (BMBF) through the project Q$^3$-UP! under grant numbers 40 301 121 and 13 N 15 779  administered by the VDI/VDE Innovation + Technik GmbH (VDI) and by the Bundesministerium für Wirtschaft und Klimaschutz (BMWK) through the project QuDA-KI under the grant numbers 50RA2206A and 50RA2206B administered by the Deutsches Zentrum für Luft- und Raumfahrt e.V (DLR). We acknowledge the use of publicly available IBM Quantum services for this work. The views expressed are those of the authors.
		
		\newpage
		\clearpage
		\onecolumngrid
		\appendix
		
		\section{Dynamical Lie Algebras of PQCs}
		\label{ap:lie_algebra}
		In what follows, we will show that the parameterized quantum circuits used in our study have isomorphic dynamical Lie algebras (DLA). We will focus on parameterized quantum circuits with data reupload. A single data reupload layer, denoted by $U_l(\mathbf{\theta_l}, \mathbf{x_l})$, is then composed of an encoding layer, denoted by $U_l^e(\mathbf{x_l})$ for some input $\mathbf{x_l}$, a parameterized layer, denoted $U_l^p(\mathbf{\theta_l})$ for some parameters $\mathbf{\theta_l}$, and an entangling layer, denoted $U_l^{ent}$. In this case, we have
		\begin{align}
			U_l(\mathbf{\theta_l}, \mathbf{x_l}) &=
			U_l^{\text{ent}} U_l^p(\mathbf{\theta_l})
			U_l^e(\mathbf{x_l})\\
			U_l^e(\mathbf{x_l}) & = \prod_{k = 1}^{K} e^{iH_k^ex_{lk}} \\
			U_l^p(\mathbf{\theta_l}) & = \prod_{m = 1}^{M} e^{iH_m^p\theta_{lm}} \\
			U_l^{ent} & = \prod_{r = 1}^{R} e^{iH_{r}^{ent}} \\
			U(\mathbf{\theta}, \mathbf{x}) &= \prod_{l=1}^L U_l(\mathbf{\theta_l}, \mathbf{x_l})
			U_0^{ent}
			U_0^p(\mathbf{\theta_0})
		\end{align}
		where $U(\mathbf{\theta}, \mathbf{x})$ represents the unitary matrix of the whole circuit, and $L$ is the total number of layers.
		
		Now that we have our notation set for the circuit itself, we will remind you of the necessary definitions for computing the dynamical Lie algebra.
		
		\begin{defn} Given a parameterized quantum circuit $U(\mathbf{\theta}, \mathbf{x})$, let $\mathcal{G}_e = \{H^e_k\}_{k = 1}^K$ be the set of generators of an encoding layer, $\mathcal{G}_p = \{H^p_m\}_{m = 1}^M$ be the set of generators of a parameterized layer, and $\mathcal{G}_{ent} = \{H^{ent}_n\}_{r = 1}^R$ be the set of generators for an entangling layer.  Then, we define $\mathcal{G} = \mathcal{G}_l = \mathcal{G}_e \bigcup \mathcal{G}_p \bigcup \mathcal{G}_{ent}$ to be the set of traceless Hermitian matrices that generate the unitaries of a single layer.
		\end{defn}
		
		\begin{defn}
			Given the set of generators $\mathcal{G}$ of some parameterized quantum circuit, the dynamical Lie algebra $\mathfrak{g}$ is the Lie algebra spanned by the repeated nested commutators of the elements in $\mathcal{G}$:
			\begin{equation}
				\mathfrak{g} = \operatorname{span}\langle iH_0, ... iH_t \rangle_{\text{Lie}}
			\end{equation}
		\end{defn}
		
		\begin{lem}\label{lemmaD1} Let $\mathcal{G}_p = \{O_i\}_{i = 1}^n$, $\mathcal{G}_e = \{P_i\}_{i = 1}^n$, where $n$ is the number of qubits, $P_i \neq O_i \in \{X_i,Y_i,Z_i\}$ for all $i \in \{1,2,...,n\}$, and $\{X_i,Y_i,Z_i\}$ are the Pauli matrices on qubit $i$. The Lie algebra generated by the $\mathcal{G}_p \bigcup \mathcal{G}_e$ is
			\begin{equation}
				\mathfrak{g}' = \underbrace{\mathfrak{su}(2) \oplus \mathfrak{su}(2) \oplus ... \oplus\mathfrak{su}(2)}_{\text{n times}}.
			\end{equation}
		\end{lem}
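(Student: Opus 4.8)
The plan is to exploit the fact that every element of $\mathcal{G}_p \cup \mathcal{G}_e$ is a single-qubit operator, so that the bracket structure decouples completely across qubits. First I would fix a qubit $i$ and consider its two generators $O_i, P_i \in \{X_i, Y_i, Z_i\}$, which are distinct by hypothesis. Using the Pauli relation $[\sigma_a, \sigma_b] = 2i\,\epsilon_{abc}\,\sigma_c$, the single commutator $[\,iO_i,\, iP_i\,] = -[O_i,P_i]$ is a nonzero real multiple of $iR_i$, where $R_i$ is the unique remaining Pauli on qubit $i$. Hence $\{iO_i, iP_i, iR_i\}$ spans a three-dimensional subspace that is closed under commutation, namely the copy $\mathfrak{su}(2)_i$ of $\mathfrak{su}(2)$ embedded as $\mathds{1}^{\otimes(i-1)} \otimes \mathfrak{su}(2) \otimes \mathds{1}^{\otimes(n-i)}$ inside $\mathfrak{u}(2^n)$.

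Next I would assemble these local algebras. Since Pauli operators acting on different qubits commute, $[\mathfrak{su}(2)_i, \mathfrak{su}(2)_j] = 0$ for $i \neq j$, and the subspaces $\mathfrak{su}(2)_i$ are linearly independent because they are spanned by distinct weight-one Pauli strings. Therefore their span inside $\mathfrak{u}(2^n)$ is the direct sum $\bigoplus_{i=1}^n \mathfrak{su}(2)_i$, and this direct sum is itself a Lie subalgebra: brackets within a block remain in that block, and brackets across distinct blocks vanish, so it is closed under the commutator.

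Then the two inclusions finish the argument. On one hand $\mathcal{G}_p \cup \mathcal{G}_e$ (in the form $\{iH\}$) is contained in $\bigoplus_i \mathfrak{su}(2)_i$, and since the latter is Lie-closed, the generated Lie algebra $\mathfrak{g}'$ is contained in it. On the other hand, the first step shows that each $iR_i$ lies in $\mathfrak{g}'$, so $\mathfrak{su}(2)_i \subseteq \mathfrak{g}'$ for every $i$ and hence $\bigoplus_i \mathfrak{su}(2)_i \subseteq \mathfrak{g}'$. Equality gives $\mathfrak{g}' = \mathfrak{su}(2)^{\oplus n}$.

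I do not anticipate a genuine obstacle here; the two points that deserve a sentence of care are (i) that the commutator of any two \emph{distinct} single-qubit Paulis is always a nonzero multiple of the third, which is immediate from the structure constants $\epsilon_{abc}$, and (ii) that $\bigoplus_i \mathfrak{su}(2)_i$ is a direct sum of Lie algebras and not merely a vector-space sum, which follows from the mutual commuting of the blocks together with the linear independence of weight-one Pauli strings. Everything else is bookkeeping, and the same reasoning also makes clear that adding the entangling generators of $\mathcal{G}_{ent}$ is what will later enlarge $\mathfrak{g}'$ beyond this block-diagonal form in the full statement.
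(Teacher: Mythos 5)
Your proof is correct and follows the same route as the paper, which simply asserts that the result ``follows immediately from the defining relations of $\mathfrak{su}(2)$ and the definition of a dynamical Lie algebra''; you have supplied in full the details (the single-qubit commutator producing the third Pauli, the mutual commutation of blocks on distinct qubits, and the two inclusions) that the paper leaves implicit.
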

		
		\begin{proof} The proof follows immediately from the defining relations of $\mathfrak{su}(2)$ and the definition of a dynamical Lie algebra.
		\end{proof}
		In the case of a linear entangling layer, the generators of the entangling gates take the following form:
		
		\begin{equation}
			\mathcal{G}_{ent} = \{U_{i+1}-Z_iU_{i+1}\}_{i = 1}^{n-1}
		\end{equation}
		This follows from the fact that we can write any controlled gate $CU$ as
		\begin{equation}
			CU = e^{i(I - Z_1)U_2},
		\end{equation}
		for some Hermitian matrix $U_2$.
		\begin{lem}\label{lemmaD2}
			Let $\mathcal{G}_p$, $\mathcal{G}_e$  be as in the previous Lemma~\ref{lemmaD1}, and let $\mathcal{G}_{ent} = \{U_{i+1}-Z_iU_{i+1}\}_{i = 1}^{n-1}$, then the Lie algebra generated by $\mathcal{G}_p \bigcup \mathcal{G}_e\bigcup G_{ent}$ contains all nearest neighbor two-body interactions.
		\end{lem}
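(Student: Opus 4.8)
The plan is to bootstrap from Lemma~\ref{lemmaD1}: since the Lie algebra generated by $\mathcal{G}_p \cup \mathcal{G}_e$ already contains a copy of $\mathfrak{su}(2)$ on each qubit, every single-qubit Pauli $X_i, Y_i, Z_i$ lies in $\mathfrak g$. First I would strip the one-body part off each entangling generator. Writing $G_i^{ent} = U_{i+1} - Z_i U_{i+1} = (I - Z_i)\,U_{i+1}$ and decomposing the Hermitian single-qubit factor as $U_{i+1} = c_i I + \tilde U_{i+1}$ with $\tilde U_{i+1} = \vec v_i\cdot\vec\sigma_{i+1}$ traceless, one has $G_i^{ent} = U_{i+1} - c_i Z_i - Z_i \tilde U_{i+1}$, where the first two summands already belong to $\mathfrak g$ by Lemma~\ref{lemmaD1} (the central $I$ being harmless for the algebra). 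Closure of $\mathfrak g$ under linear combinations then yields $Z_i \tilde U_{i+1}\in\mathfrak g$ for every bond $i = 1,\dots,n-1$. Here I use that the gate is genuinely entangling, so $\vec v_i\neq 0$; for $\CNOT$ and $\CRX$ one has $\tilde U_{i+1}\propto X_{i+1}$, for $\CZ$ one has $\tilde U_{i+1}\propto Z_{i+1}$.

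Next I would spread this single two-body seed over all nine Pauli pairs on the bond by nested commutators with the single-qubit generators. On qubit $i+1$, $[Z_i\tilde U_{i+1}, W_{i+1}] = Z_i[\tilde U_{i+1},W_{i+1}]$, and as $W$ ranges over $\{X,Y,Z\}$ the commutators $[\tilde U_{i+1},W_{i+1}]$ sweep out the plane orthogonal to $\vec v_i$, which together with $\tilde U_{i+1}$ itself spans all of $\operatorname{span}\{X,Y,Z\}_{i+1}$ — this is just irreducibility of the adjoint representation of $\mathfrak{su}(2)$. Hence $Z_i\otimes W_{i+1}\in\mathfrak g$ for every $W$. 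Then on qubit $i$, $[X_i, Z_i W_{i+1}]\propto Y_i W_{i+1}$ and $[Y_i, Z_i W_{i+1}]\propto X_i W_{i+1}$, so together with $Z_i W_{i+1}$ one obtains $P_i\otimes W_{i+1}$ for all $P,W\in\{X,Y,Z\}$. Running this over all $i$ shows $\mathfrak g$ contains $\operatorname{span}\{P_i Q_{i+1}: P,Q\in\{X,Y,Z\},\ 1\le i\le n-1\}$, i.e.\ all nearest-neighbour two-body interactions.

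The main obstacle is treating the entangling generator in full generality rather than for one fixed Pauli: one has to verify that an arbitrary nonzero traceless single-qubit factor still generates the full $\operatorname{span}\{X,Y,Z\}_{i+1}$ under the adjoint action (the only nontrivial algebraic input, supplied by irreducibility of $\mathrm{ad}_{\mathfrak{su}(2)}$), and to be explicit about precisely which single-qubit operators Lemma~\ref{lemmaD1} makes available so that the linear-combination step removing the one-body part is legitimate. A secondary point to pin down is the intended meaning of ``nearest-neighbour two-body interactions'' — the $9(n-1)$-dimensional span above — and, should a larger algebra ultimately be wanted, to note that commutators between overlapping bonds do not collapse it.
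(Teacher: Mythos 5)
Your proof is correct and follows essentially the same route as the paper's: obtain all single-qubit Paulis from Lemma~\ref{lemmaD1}, extract a genuine two-body term from each entangling generator, then fill out all nine Pauli pairs on each bond by nested commutators with single-qubit operators. The only cosmetic difference is that the paper isolates the two-body seed by first computing $[U_{i+1}-Z_iU_{i+1},X_i]$ (which kills the one-body part acting on qubit $i+1$ automatically) rather than by subtracting it off as a linear combination, and your write-up makes explicit the adjoint-irreducibility argument that the paper's one-line proof leaves implicit.
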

		\begin{proof}
			From the previous Lemma~\ref{lemmaD1}, we know that we have all single-qubit Pauli matrices. Then, we can generate all two-body interactions by computing $[U_{i+1}-Z_iU_{i+1}, X_i]$, followed by all possible commutators of the result with all single-qubit Pauli matrices.
		\end{proof}
		
		\begin{lem}\label{lemmaD3}
			The following holds:
			\begin{align}
				&\left[M_iN_{i+1}, O_{i+1}P_{i+2} \right] = \alpha M_iQ_{i+1}P_{i+2} \\
				&\left[ \left[M_iN_{i+1}, O_{i+1}P_{i+2} \right], Q_{i+1}R_{i+2}\right] = \beta M_iS_{i+2}
			\end{align}
			for all $M,N,O,P,Q,R,S \in \{X, Y, Z\}$, $\alpha, \beta \in \mathbb{C}$, and $N\neq O$.
		\end{lem}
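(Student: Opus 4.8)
The plan is to reduce both bracket identities to one‑qubit Pauli computations, exploiting that each pair of Pauli strings appearing here overlaps on at most a single qubit. First I would write the operators on the relevant three‑qubit factor, $M_iN_{i+1} = M_i\otimes N_{i+1}\otimes I_{i+2}$ and $O_{i+1}P_{i+2} = I_i\otimes O_{i+1}\otimes P_{i+2}$, with the identity on all remaining qubits so they play no role. Since $M_i$ and $P_{i+2}$ act on factors disjoint from each other and from the overlap qubit $i+1$, both orderings of the product factor through the \emph{same} entries on qubits $i$ and $i+2$, so $[M_iN_{i+1},O_{i+1}P_{i+2}] = M_i\otimes[N_{i+1},O_{i+1}]\otimes P_{i+2}$. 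Applying the $\mathfrak{su}(2)$ relation $[\sigma_a,\sigma_b] = 2i\,\epsilon_{abc}\,\sigma_c$ and using the hypothesis $N\neq O$, the unique Pauli $Q$ with $\{N,O,Q\}=\{X,Y,Z\}$ appears with coefficient $\alpha = 2i\,\epsilon_{NOQ}\neq 0$, which is exactly the first claim.

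For the second identity I would feed this result into a nested commutator with $Q_{i+1}R_{i+2} = I_i\otimes Q_{i+1}\otimes R_{i+2}$, crucially using the \emph{same} $Q$ on qubit $i+1$. The key observation is that the $(i+1)$‑entries now multiply to $Q_{i+1}Q_{i+1} = I_{i+1}$ in either order, so that factor cancels entirely and the nested commutator collapses to $\alpha\,M_i\otimes I_{i+1}\otimes[P_{i+2},R_{i+2}]$. One more use of the Pauli relation gives $[P,R]=2i\,\epsilon_{PRS}\,S$, and with $\beta = 2i\,\alpha\,\epsilon_{PRS}$ and $S$ the third Pauli this is the second claim. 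Note $\beta\in\mathbb{C}$ is nonzero precisely when $P\neq R$; when $P=R$ the identity still holds trivially with $\beta=0$, and it is the generic case $P\neq R$ that is used downstream to transport a two‑body term from the pair $\{i,i+1\}$ to $\{i,i+2\}$.

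I do not expect a substantial obstacle: the entire content is locality of Pauli strings together with the $\mathfrak{su}(2)$ bracket relations. The only real care needed is bookkeeping. First, the quantifier in the statement should be read as ``$M,N,O,P,R$ free, while $Q$ and $S$ are determined — up to the sign that is absorbed into $\alpha,\beta$ — by the Pauli multiplication table'', and the proof should make this explicit. Second, one must check that no identity factors on qubits $i$ or $i+2$ are silently dropped; this is justified because $M$ commutes with everything on qubits $i+1,i+2$ and $P,R$ commute with everything on qubits $i,i+1$, so the factorizations above are valid. The one thing that could go wrong is a misplaced sign in $\epsilon_{abc}$, which a case‑free argument through $[\sigma_a,\sigma_b]=2i\epsilon_{abc}\sigma_c$ — rather than enumerating the $X,Y,Z$ combinations — avoids.
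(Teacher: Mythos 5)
Your proof is correct and follows essentially the same route as the paper, whose own proof is just the one-line remark that both identities follow from the defining relations of $\mathfrak{su}(2)$; you have simply spelled out the tensor-factor locality and the cancellation $Q_{i+1}Q_{i+1}=I_{i+1}$ that the paper leaves implicit. Your reading of the quantifiers (that $Q$ and $S$ are determined by the Pauli algebra rather than free, and that the $Q$ in the second bracket must match the output of the first) is the correct and necessary interpretation of the statement.
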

		\begin{proof}
			The first equality follows immediately from the defining relations of $\mathfrak{su}(2)$. The second equality follows from the first equality as well the defining relations of $\mathfrak{su}(2)$.
		\end{proof}
		
		\begin{prop}\label{prop:lie_algebra}
			Let $\mathcal{G}_p$, $\mathcal{G}_e$  be as in the previous Lemma~\ref{lemmaD1}, and let $\mathcal{G}_{ent} = \{U_{i+1}-Z_iU_{i+1}\}_{i = 1}^{n-1}$, then the Lie algebra generated by $\mathcal{G}_p \bigcup \mathcal{G}_e\bigcup G_{ent}$ contains all possible Pauli strings of length at most equal to the number of qubits n and the dynamical Lie algebra is $\mathfrak{su}(2^n)$.
		\end{prop}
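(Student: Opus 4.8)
The plan is to bootstrap from Lemma~\ref{lemmaD2} (which already gives all nearest-neighbour two-body interactions) up to the full set of Pauli strings by an induction on string length, using the bracket identities of Lemma~\ref{lemmaD3} as the engine. The key observation is that Lemma~\ref{lemmaD3} lets us do two complementary things: the first identity \emph{grows} a Pauli string by one site on the right (merging $M_iN_{i+1}$ with $O_{i+1}P_{i+2}$ produces $M_iQ_{i+1}P_{i+2}$, a length-3 string), and the second identity lets us \emph{transport/shorten} by collapsing the middle factor ($M_iS_{i+2}$). Combined with the fact from Lemma~\ref{lemmaD1} that all single-qubit Paulis are available, and from Lemma~\ref{lemmaD2} that all nearest-neighbour two-body terms are available, one can reach every Pauli string supported on a contiguous block of qubits, and then — since single-site Paulis generate $\mathfrak{su}(2)$ on each wire — every Pauli string whatsoever.

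Concretely, the steps I would carry out are: (1) State the induction hypothesis: for every $k \le K$, every Pauli string acting nontrivially on a contiguous set of $k$ qubits lies in $\mathfrak g$. Base cases $k=1,2$ are Lemmas~\ref{lemmaD1} and~\ref{lemmaD2}. (2) Inductive step: given a contiguous length-$(k{+}1)$ target $A_{i}B_{i+1}\cdots$ ending at site $i{+}k$, write it as the bracket of a length-$k$ contiguous string (in $\mathfrak g$ by hypothesis) with an adjacent nearest-neighbour two-body term (also in $\mathfrak g$), choosing the overlapping Pauli factors so that the $\mathfrak{su}(2)$ relation at the shared site produces the desired Pauli there — this is exactly the first identity of Lemma~\ref{lemmaD3}. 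One checks that every choice of $A,B,\dots$ is attainable by appropriate choices of the two constituents, using that at the overlap site any two distinct Paulis bracket to (a multiple of) the third. (3) Non-contiguous strings: observe that once all single-site Paulis are present, commuting a contiguous string against an $X$, $Y$, or $Z$ on any site can be used to insert identities/rotate factors; more directly, the second identity of Lemma~\ref{lemmaD3} collapses a three-site string $M_iN_{i+1}P_{i+2}$ down to $M_iS_{i+2}$, skipping the middle site, and iterating this "skip" move generates Pauli strings on arbitrary (not necessarily contiguous) supports. (4) Conclude: $\mathfrak g$ contains a spanning set of $\mathfrak{su}(2^n)$, namely all nonidentity Pauli strings on $\le n$ qubits; since $\mathfrak g \subseteq \mathfrak u(2^n)$ and every generator is traceless, $\mathfrak g = \mathfrak{su}(2^n)$.

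The main obstacle I expect is the bookkeeping in step (2)–(3): one has to be careful that the \emph{particular} Pauli at each site can really be produced, since $[M_i N_{i+1}, O_{i+1} P_{i+2}]$ only yields the third Pauli $Q_{i+1}$ at the shared site (with $N \ne O$), so reaching a target $B_{i+1}$ requires picking $N,O$ to be the two Paulis other than $B_{i+1}$. Showing this is always consistent with the demands at the neighbouring sites, and that the "skip" move of the second identity suffices to build every support pattern (rather than just special ones), is where the argument needs genuine care rather than routine symbol-pushing; everything else is a direct appeal to the $\mathfrak{su}(2)$ commutation relations and the preceding lemmas. I would also remark that the argument is stated for a linear entangling layer, but since any of the entangling gates considered ($\CZ,\CNOT,\CRX,\CAN$) has generators of the form $U_{i+1}-Z_iU_{i+1}$ (up to single-qubit conjugation already generated), the same conclusion holds for all the circuit families in the paper, which is the point of the surrounding Corollary.
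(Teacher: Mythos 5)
Your proposal follows essentially the same route as the paper's own proof: starting from Lemmas~\ref{lemmaD1} and~\ref{lemmaD2} and iterating the two bracket identities of Lemma~\ref{lemmaD3} to grow contiguous Pauli strings and to skip sites, thereby reaching all $k$-body terms up to $k=n$ and concluding $\mathfrak{g}=\mathfrak{su}(2^n)$. The paper states this only as a brief sketch, so your more explicit induction and your attention to the overlap-site constraint ($N\neq O$ forcing the third Pauli) and to tracelessness are welcome elaborations rather than a different argument.
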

		\begin{proof}
			From the previous Lemmas, we can see that by repeating this argument for next-to-nearest neighbor interactions then next-to-next-to nearest neighbor  interactions until we have all 2-body interactions as well as nearest neighbor 3-body interactions. Clearly, a similar argument can be made to generate all 3-body interactions and then $n$-body interactions.
		\end{proof}
		
		\begin{cor}\label{cor:lie_algebra}
			The dynamical Lie algebra of any circuit presented in this paper is always given by $\mathfrak{su}(2^n)$, where $n$ is the number of qubits.
		\end{cor}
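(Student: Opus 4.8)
The plan is to deduce Corollary~\ref{cor:lie_algebra} directly from Proposition~\ref{prop:lie_algebra} by verifying that every concrete circuit construction described in Sec.~\ref{Methods} fits (up to trivial modifications) the hypotheses under which the proposition was proven, namely a parameterized layer and an encoding layer whose per-qubit generators are distinct Paulis, together with an entangling layer built from controlled gates. First I would isolate the essential structural features used in Lemmas~\ref{lemmaD1}--\ref{lemmaD3}: (i) on every qubit the single-qubit rotations in $U^1$ together with the $R_X$ encoding supply at least two non-collinear Pauli generators (for instance $R_Y$ gives $Y_i$ and $R_X$ encoding gives $X_i$, so $\mathcal{G}_p\cup\mathcal{G}_e$ already generates $\mathfrak{su}(2)^{\oplus n}$ by Lemma~\ref{lemmaD1}), and (ii) the entangling gates are controlled unitaries, hence have generators of the form $U_{j}-Z_iU_{j}$ as recorded in the displayed identity $CU=e^{i(I-Z_1)U_2}$.

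Next I would run through the list of ansätze case by case. For the layered ansätze with $U^1\in\{R_Y,R_YR_Z,R_YR_ZR_Y\}$ and $U_{\text{ent}}\in\{\CZ,\CNOT,\CRX\}$: each of these entanglers is a controlled single-qubit gate, so linear connectivity reproduces exactly the hypothesis of Proposition~\ref{prop:lie_algebra} and the DLA is $\mathfrak{su}(2^n)$; for cyclic, strong, strongc14, and alternating-layer connectivities one only adds \emph{more} generators of the same type on additional qubit pairs, which cannot shrink the Lie algebra, so the DLA remains $\mathfrak{su}(2^n)$ (and is trivially contained in it). The $\CAN$ gate $R_{XX}(\theta_0)R_{YY}(\theta_1)R_{ZZ}(\theta_2)$ contributes the two-body generators $X_iX_j$, $Y_iY_j$, $Z_iZ_j$ directly; combined with the single-qubit $\mathfrak{su}(2)^{\oplus n}$ one gets all nearest-neighbour two-body interactions as in Lemma~\ref{lemmaD2}, and then Lemma~\ref{lemmaD3} bootstraps to all Pauli strings, so again $\mathfrak{g}=\mathfrak{su}(2^n)$. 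For the dQNN ansätze one notes that $U^2$ is itself composed of $\CAN$ gates arranged so that the interaction graph on the $n$ qubits (input, hidden, output) is connected; since connectivity of the two-body interaction graph is the only topological input needed to propagate the bootstrapping argument of Proposition~\ref{prop:lie_algebra}, and the $U^1$ layers plus $U_{\text{in}}$ supply the single-qubit part on every qubit that carries an encoding, the same conclusion $\mathfrak{su}(2^n)$ follows.

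The main obstacle I anticipate is bookkeeping rather than a genuine mathematical difficulty: I must check that in \emph{every} variant there really are two non-collinear single-qubit generators available on each qubit — this is immediate when $U^1$ contains $R_YR_Z$ or $R_YR_ZR_Y$, and for the bare-$R_Y$ case one leans on the $R_X$ encoding gate, but I should confirm that qubits which carry no encoding gate (e.g.\ the output qubit in some dQNN layouts, or interior qubits in an alternating layout) still acquire a full $\mathfrak{su}(2)$ through commutators with their neighbours, which Lemma~\ref{lemmaD3} guarantees since a two-body term commuted with a single-qubit term on the partner yields the missing single-qubit direction. A secondary point is that $\CZ$ has generator proportional to $Z_i Z_j - Z_i - Z_j$ rather than the exact form $U_{j}-Z_iU_{j}$; here I would note that the linear combination still lies in the algebra generated together with the ever-present single-qubit $Z_i, Z_j$, so $Z_iZ_j$ itself is recovered and the argument of Lemma~\ref{lemmaD2} goes through unchanged. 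Once these points are dispatched, the corollary is simply the statement that each circuit's generator set both generates at least $\mathfrak{su}(2^n)$ (by the proposition and monotonicity of Lie-algebra generation under adding generators) and is contained in $\mathfrak{su}(2^n)$ (all generators are traceless Hermitian on $n$ qubits), hence equals it.
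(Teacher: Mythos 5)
Your proposal is correct and follows the same route as the paper, whose entire proof of Cor.~\ref{cor:lie_algebra} is the single sentence ``apply Prop.~\ref{prop:lie_algebra} to any circuit.'' Your case-by-case verification (handling the $\CZ$ generator, the $\CAN$ two-body terms, and the dQNN qubits that carry no encoding gate via the support-reducing commutators of Lemma~\ref{lemmaD3}) supplies exactly the bookkeeping the paper leaves implicit, so it is a strictly more detailed instance of the same argument rather than a different one.
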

		\begin{proof}
			The proof follows by applying Proposition~\ref{prop:lie_algebra} to any circuit.
		\end{proof}
		
		Corollary~\ref{cor:lie_algebra} shows that all our considered circuits tend to exhibit barren plateaus for a large number of layers.

		\section{Comparison with shot-based expectation values} \label{ap:shots}
		\begin{figure}[t]
			\centering
			\includegraphics[width=0.5\textwidth]{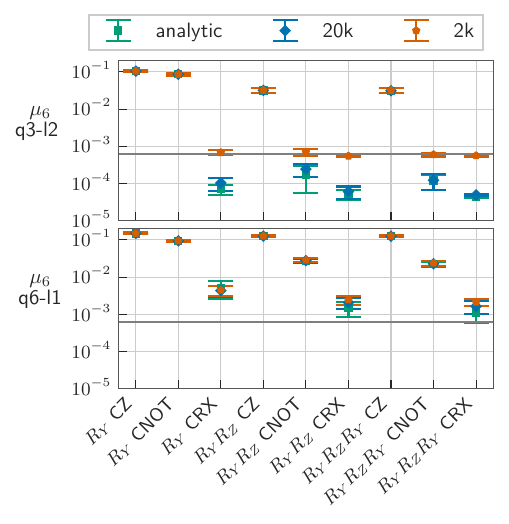}
			\caption{Comparison of $\mu_6$ for different layered ans\"atze with two simple, linear entanglement layers. The top plot shows results for three qubits and two layers, while the bottom plot shows results for 6 qubits and one layer. The green data points show analytic simulation results as in Fig.~\ref{fig:pqc_d6_cz_urot}. The blue data points depict results of shot-based expectation values using $20000$ shots and the parameter shift rule, while the orange data points show this for $2000$ shots.} \label{fig:wsw_degree6_mcshots}
		\end{figure}
		
		We consider the case of shot-based determination of expectation values and the parameter shift value to determine gradients.
		The analytic expectation values are computed using the noiseless backend of TensorFlow Quantum with no repetitions, while the shot-based expectation values are computed with a given number of samplings from the state.
		Having the training and execution shot-based brings our setup closer to real world execution while remaining hardware independent.
		
		Figure~\ref{fig:wsw_degree6_mcshots} shows, for three qubits and two layers, that the analytic values can be reached with $20000$ shots per expectation value.
		If we decrease the number of shots to a value lower than $20000 = \frac{1}{5\cdot10^{-5}}$, then the maximum reachable precision is lower then the one chosen for our analytic experiments.
		This is evident in the results for $2000$ shots where we get very similar results for architectures with high loss values, but are limited to a lower bound for the precision at $\sim 6.25 \times 10^{-4}$ (see values around the gray line in the upper row of Fig. \ref{fig:wsw_degree6_mcshots}).
		Comparing to possible experiments, we see that the shot-based expectation value converges to the analytical one, and that using a small number of shots would limit the maximum precision and would make circuit architectures with near-optimal learning capabilities indistinguishable.
		For six qubits and one layer, Fig.~\ref{fig:wsw_degree6_mcshots} shows that the values are very similar which is possible because the analytic values lay above the precision bound for $2000$ shots $\sim 6.25 \times 10^{-4}$.
		
		\section{Evaluation under noise}
		\label{ap:noise}
		
		To explore the usefulness of the proposed learning capability metric in an actual NISQ setting, we
		conduct exploratory experiments under simulated noise conditions that approximate the behavior of a
		current superconducting quantum processor. We opted for simulated noise over training on actual
		quantum hardware for two reasons. First, access to quantum hardware is limited and since the
		training and evaluation steps outlined below require millions of circuit evaluations, such a setting
		would be impractical as well as time and cost prohibitive. Second, execution on quantum hardware
		would necessitate transpilation to the native gate set of the chosen quantum processor. Depending
		how well certain gates can be expressed in that native gate set, results may be biased by an
		arbitrary choice of a quantum processor. Using a simulated device enables us to evaluate the exact
		same circuits and gates as in the noise-free setting.
		
		\subsection{Noise model implementation}
		\label{ap:noise_model}
		
		We implemented a noise model in Cirq~\cite{Cirq} based on a calibration data snapshot of the
		\textit{ibmq\_perth} system, which is one of the IBM Quantum Falcon~r5.11H Processors. This system
		has seven qubits, of which we copied five to extend the simulated device to 12 qubits.
		Figure~\ref{fig:noisy_grid} shows the topology of the simulated device. The qubits with indices
		$0\ldots 6$ correspond to the original hardware, whereas qubits $7\ldots 11$ are mirrored from the
		five rightmost qubits of the original device, including their coupling properties.The relevant
		calibration metrics obtained from the quantum processor, that we use in our noise model, are
		summarized in Table~\ref{tab:noisy_cal}.
		
		\begin{figure}[t]
			\centering
			\includegraphics[scale=0.6]{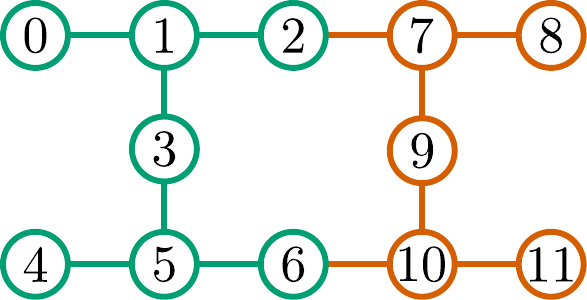}
			\caption{Topology of the simulated device.
				The qubits with index $0\ldots 6$ are based on the actual qubits of the \textit{ibmq\_perth}
				system, whereas the added qubits with index $7\ldots 11$ and their couplings are copies to increase
				the system size.}
			\label{fig:noisy_grid}
		\end{figure}
		
		\begin{table}[b]
			\centering
			\caption{Relevant metrics from the calibration snapshot of the \textit{imbq\_perth} system for
				our noise model. The table shows traverse (T1) and longitudinal (T2) relaxation times as well as
				singe-qubit gate times (t), single-qubit gate errors and read-out assignment errors for individual
				qubits of the system as well as two-qubit gate times and errors for each coupling. Calibration
				data was obtained on June 1\textsuperscript{st}, 2023.}
			\label{tab:noisy_cal}
			\adjustbox{width=0.5\textwidth}{
				\begin{tabular}{ccccccccc}
					\toprule
					\multicolumn{6}{c}{} & \multicolumn{3}{c}{\textbf{Couplings}} \\
					\cmidrule{7-9}
					\textbf{Qubit} &
					\textbf{T1 (\textmu s)} &
					\textbf{T2 (\textmu s)} &
					\textbf{t (ns)} &
					\textbf{Gate Err.} &
					\textbf{RO Err.} &
					\textbf{Cp.}   &
					\textbf{t (ns)} &
					\textbf{Gate Err.} \\
					\midrule
					0 & 9.6    & 16.47  & 35.56 & 0.0007 & 0.0220 & 0$\rightarrow$1: & 391.11 & 0.0129 \\
					1 & 150.65 & 55.92  & 35.56 & 0.0003 & 0.0232 & 1$\rightarrow$0: & 426.67 & 0.0129 \\
					&        &        &       &        &        & 1$\rightarrow$2: & 355.56 & 0.0052 \\
					&        &        &       &        &        & 1$\rightarrow$3: & 405.33 & 0.0145 \\
					2 & 120.61 & 103.28 & 35.56 & 0.0002 & 0.0205 & 2$\rightarrow$1: & 320.00 & 0.0052 \\
					3 & 169.23 & 151.85 & 35.56 & 0.0003 & 0.0176 & 3$\rightarrow$1: & 369.78 & 0.0145 \\
					&        &        &       &        &        & 3$\rightarrow$5: & 284.44 & 0.0086 \\
					4 & 159.29 & 117.10 & 35.56 & 0.0005 & 0.0178 & 4$\rightarrow$5: & 590.22 & 0.0103 \\
					5 & 187.23 & 140.95 & 35.56 & 0.0003 & 0.0240 & 5$\rightarrow$3: & 320.00 & 0.0086 \\
					&        &        &       &        &        & 5$\rightarrow$4: & 625.78 & 0.0103 \\
					&        &        &       &        &        & 5$\rightarrow$6: & 640.00 & 0.0102 \\
					6 & 163.37 & 180.04 & 35.56 & 0.0002 & 0.0060 & 6$\rightarrow$5: & 604.44 & 0.0102 \\
					\bottomrule
				\end{tabular}
			}
		\end{table}
		
		To simulate the quantum device noise, we assume decoherence is local and Markovian, which we model
		by thermal relaxation and depolarizing error channels~\cite{Wood2020} applied after each gate of the
		circuit.  We compose the thermal relaxation noise $\mathcal{E}_{TR}$ from an amplitude damping
		$\mathcal{E}_{AD}$ and a phase damping channel
		$\mathcal{E}_{PD}$~\cite{Georgopoulos2021,Nielsen2010}, such that for a quantum state $\rho$:
		\begin{equation}
			\mathcal{E}_{TR}(\rho) := \mathcal{E}_{PD}(\mathcal{E}_{AD}(\rho)).
		\end{equation}
		For simplicity and due to technical limitations of the software stack used for training the quantum
		models~\cite{Broughton2020TensorflowQuantum}, we omit the small contribution of the excited state
		population~\cite{Jin2015} to the thermal relaxation error and assume a qubit temperature of
		$\Theta=0K$. Hence, the amplitude damping channel is given by~\cite{Nielsen2010}:
		\begin{align}
			\mathcal{E}_{AD}(\rho) &\defeq K_{0}\rho K_{0}^{\dagger} + K_{1}\rho K_{1}^{\dagger}, \\
			K_{0} &\defeq
			\begin{bmatrix} 1 & 0 \\ 0 & \sqrt{1-\gamma_{AD}} \end{bmatrix}, \\
			K_{1} &\defeq \begin{bmatrix} 0 & \sqrt{\gamma_{AD}} \\ 0 & 0 \end{bmatrix},
		\end{align}
		with $\gamma_{AD}=1-e^{-\frac{t}{T_{1}}}$.
		The phase damping channel is defined accordingly as~\cite{Nielsen2010}:
		\begin{align}
			\mathcal{E}_{PD}(\rho) &\defeq K_{0}\rho K_{0}^{\dagger} + K_{1}\rho K_{1}^{\dagger},\\
			K_{0}&\defeq
			\begin{bmatrix} 1 & 0 \\ 0 & \sqrt{1-\gamma_{PD}} \end{bmatrix}, \\
			K_{1} &\defeq
			\begin{bmatrix} 0 & 0 \\ 0 & \sqrt{\gamma_{PD}} \end{bmatrix},
		\end{align}
		with $\gamma_{PD}=e^{-\frac{t}{T_{1}}} - e^{-\frac{2t}{T_{2}}}$.
		Here $t$ is the gate time, and $T_{1}$ and $T_{2}$ the traverse and longitudinal relaxation times.
		
		For the resulting error channel $\mathcal{E}_{TR}$, we compute the average fidelity~\cite{Nielsen2002, Gilchrist2005, Nielsen2010}
		\begin{equation}
			\mathcal{\bar{F}}(\mathcal{E}_{TR}) =
			\int\bra{\psi}\mathcal{E}_{TR}(\ket{\psi}\bra{\psi})\ket{\psi}d\psi
			= \frac{1}{2} + \frac{1}{6}e^{-\frac{t}{T_{1}}} + \frac{1}{3}e^{-\frac{t}{T_{2}}}.
		\end{equation}
		If the infidelity $1-\mathcal{\hat{F}}(\mathcal{E}_{TR})$ is smaller than the total gate error
		reported in the calibration data, we model the remaining infidelity with an additional depolarizing
		channel $\mathcal{E}_{DP}$~\cite{Nielsen2010}
		\begin{equation}
			\mathcal{E}_{DP}(\rho) := (1-p_{DP})\rho +
			\frac{p_{PD}}{3}\sum_{\hat{\sigma}\in\{\hat{X},\hat{Y},\hat{Z}\}}\hat{\sigma}\rho\hat{\sigma}.
		\end{equation}
		The depolarization probability $p_{DP}$ is derived from the gate error such that the average
		fidelity of the entire error channel $\mathcal{E}$ is
		\begin{align}
			\begin{split}
				\mathcal{\bar{F}}(\mathcal{E}) &= \mathcal{\bar{F}}(\mathcal{E}_{DP}\cdot\mathcal{E}_{TR}) \\
				&= (1-p_{DP})\mathcal{\bar{F}}(\mathcal{E}_{TR}) +
				p_{DP}\mathcal{\bar{F}}(\mathcal{E}_{D}),
			\end{split}
		\end{align}
		where $\mathcal{E}_{D}=\frac{\hat{I}}{2^{n}}$ is the completely depolarizing
		channel~\cite{Nielsen2010}.
		Since TensorFlow Quantum~\cite{Broughton2020TensorflowQuantum}, the quantum machine learning toolkit
		used in this work, does not simulate measurements on the circuit but determines expectation values
		for an observable directly from the computed quantum state, we approximate the readout error of the
		simulated device by appending a bit-flip channel $\mathcal{E}_{BF}$~\cite{Nielsen2010} to the end of
		the circuit. The channel is defined by
		\begin{equation}
			\mathcal{E}_{BF}(\rho) := (1-p_{BF})\rho + p_{BF}\hat{X}\rho\hat{X}.
		\end{equation}
		The probability for the bit-flip $p_{BF}$ is set to the readout assignment error probability from
		the calibration data for each qubit.
		
		\subsection{Evaluation and results}\label{ap:noisy_shots}
		
		We provide an outline on how to use the learning capability in the presence of noise.
		Although a more detailed investigation is out of scope of this manuscript, this section shows that the learning capability can, in principle, be evaluated with a quantum circuit simulator that incorporates noise as described in the previous Appendix~\ref{ap:noise_model}.
		At first, normalizing the Fourier function values to values less than one allows the PQCs to compensate for the noise effects.
		We do so by dividing each function value by $\frac{3}{4}$.
		Furthermore, to reduce the computational effort, we restrict the set of Fourier series to $10$ functions and select them by choosing the ones that are included in the function pairs with the lowest cross-correlation values as calculated in Appendix~\ref{ap:fourier}.
		
		\begin{figure}[t]
			\centering
			\includegraphics[width=0.75\textwidth]{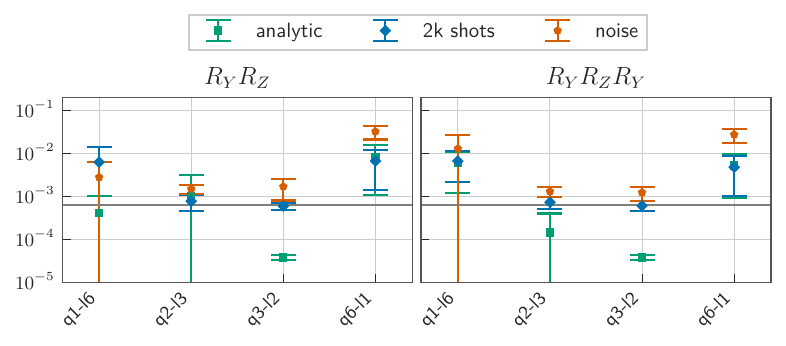}
			\caption{Comparison of $\mu_6$ for different layered ans\"atze with two simple, linear $\CNOT$
				entanglement layers.  The left plot shows results for $R_Y R_Z$, while the right plot shows
				results for $R_Y R_ZR_Y$ single-qubit operations.  The green data points show analytic
				simulation results as in Fig.~\ref{fig:pqc_d6_cz_urot}.  The blue data points depict results
				of shot-based expectation values using $2000$ shots and the parameter shift rule.  The orange
				data points are based on shot-based expectation values for $2000$ shots and the noise model
				discussed in the previous Appendix~\ref{ap:noise_model}.} \label{fig:noise_norm_75}
		\end{figure}
		
		We test the experiments for degree $6$ functions and ans\"atze with two simple, linear $\CNOT$ entanglement layers.
		Figure~\ref{fig:noise_norm_75} shows the results for these experiments for analytical simulation, shot-based, and noisy shot-based expectation values with $2000$ shots.
		We select qubit $\{0\}$ for 1 qubit, $\{0,1\}$ for 2 qubits, $\{0,1,2\}$ for 3 qubits and $\{0,1,2,7,9,10\}$ for 6 qubits from the approximated processor in Fig.~\ref{fig:noisy_grid}.
		
		The analytical case shows that the changes to reduce the computational effort
		still enable the reproduction of the characteristic curve
		except for the one qubit, six layers case with $R_YR_Z$ single-qubit gates
		which performs better than expected.
		The shot-based experiments without noise, however,
		follow the exact characteristic curve with a higher minimal loss for all runs
		as expected from Appendix~\ref{ap:shots}.
		That is, the two and three qubits architectures perform the best and the one and six qubits architectures lead to worse results.
		Including noise weakens the learning capability further.
		However, the characteristic curve still remains and shows differences in the choice of qubits and layers.
		Again, the one qubit case performs slightly better than expected which could be due to statistical variances or because of the absence of two-qubit errors.
		Further noise experiments would be necessary to investigate this finding in more detail.
		
		\section{Hyperparameters} \label{hyper}
		Our definition of learning capability depends on hyperparameters of supervised learning, namely the optimizer, the initialization, epochs, and learning rates. On the one hand, this results from the fact that we provide a measure close to practical work. On the other hand, this could lead to different results depending on the hyperparameters and not solely on the chosen ansatz. Therefore, we provide a comparison of results obtained by different hyperparameters in Fig.~\ref{fig:hyperparameters}. The data in the figure shows the following:
		\begin{enumerate}
			\item Losses are very stable in the range of the tested hyperparameters for the tested ans\"atze.
			\item The gap between one and two entanglement layers in Fig.~\ref{fig:coeffs_funct} is not resolved by any of these hyperparameter sets.
			\item The gap between dQNNs with and without reupload on hidden neurons in Fig.~\ref{fig:dQNN_dr_zl} cannot be resolved.
		\end{enumerate}
		This grid test of hyperparameters gives strong evidence that differences in learning capabilities can be explained by differences in the architecture. However, since our grid test could still miss an important hyperparameter set, we conclude that these sets would be hard to find and thus the ansatz were more impractical then others, because extensive hyperparameter optimization would be needed to obtain similar results.
		
		\begin{figure}[t]
			\centering
			\includegraphics[width=0.75\textwidth]{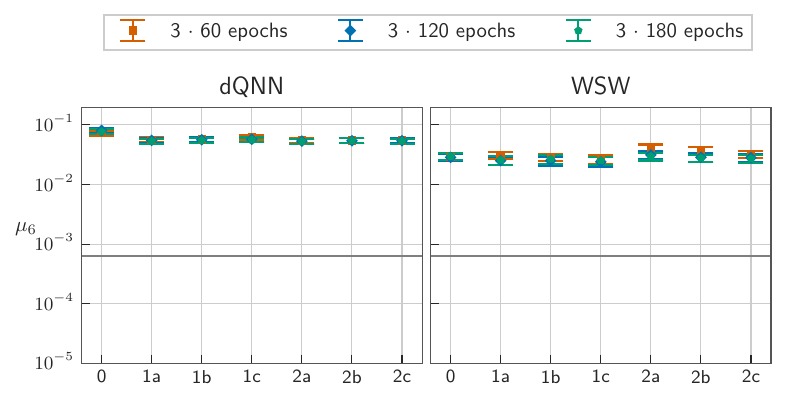}
			\caption{Overview of values for $\mu_6$ for different hyperparameter sets. The results are depicted for a layered ansatz with $q=3$, $l=2$, $\text{el}=1$, $U^1=R_y$ and $\CRX$ and a dQNN ansatz with $[61]$ and $U^1 = R_YR_ZR_Y$. The values on the $x$ axis are explained in Table~\ref{tab:hyperopt_pqc}. The legend shows the different epoch sizes per learning rate.
				For example, the blue marker for setting 2a corresponds to learning $120$ epochs with learning rate $0.1$, $120$ epochs with learning rate $0.05$, and finally $120$ epochs with learning rate $0.01$.}
			\label{fig:hyperparameters}
		\end{figure}
		
		\begin{table}[b]
			\centering
			\ra{1.2}
			\caption{Learning rates for results in Fig.~\ref{fig:hyperparameters}.}
			\label{tab:hyperopt_pqc}
			\begin{tabular}{@{}rcrrr@{}}
				\toprule
				&& \multicolumn{3}{r}{Learning rates}\\
				Set & \phantom{a} & $l_0$ & $l_1$ & $l_2$ \\
				\midrule
				$0$ && $0.3$ & $0.3$ & $0.3$\\
				1a && 0.5 &  0.1 & 0.05 \\
				1b && 0.5 &  0.1 & 0.1 \\
				1c && 0.5 &  0.5 & 0.1 \\
				2a && 0.1 & 0.05 & 0.01 \\
				2b && 0.1 & 0.05 & 0.05 \\
				2c && 0.1 &  0.1 & 0.05 \\
				\bottomrule
			\end{tabular}
		\end{table}
		
		\section{Details on the sets of Fourier functions} \label{ap:fourier}
		
		To generate truncated random Fourier series of degree $d$, we sample for each frequency $\omega$ a complex-valued coefficient $c_\omega = a_\omega + i b_\omega$ as two random numbers between $-0.5$ and $0.5$ using a uniform distribution.
		Due to the constraints of PQCs, the coefficients corresponding to $-\omega$ are given by the complex conjugate $c_{-\omega} = \cc{c_\omega}$.
		Subsequently, all coefficients are normalized such that the highest absolute value of the truncated Fourier series is equal to 1.
		The code to generate the set of random truncated Fourier series is given in  \href{https://github.com/dfki-ric-quantum/learning_capability/blob/master/create_rnd_functions.ipynb}{Notebook}.
		
		As it was explained in Sec.~\ref{sec:lc}, the mean error $\mu_d$ is given by
		\begin{equation}
			\mu_d = \frac{1}{|G_d|} \sum_{g \in G_d} \epsilon_g,
		\end{equation}
		where $\epsilon_g$ is the final validation loss for the Fourier series $g$. Each parameterized quantum circuit ansatz is reinitialized and trained on a sample for several truncated Fourier series which form the set $G_d$.
		For very similar functions in the set one might get the illusion that a model performed well on $n$ Fourier series, when in fact several of them are the same.
		Therefore, a key point is to ensure that the sampled Fourier series are different to guarantee that we minimize the effect of the sample on the results.
		This is done by computing the cross-correlation matrix of all the Fourier series in the set $G_d$.
		
		The cross-correlation is a similarity measure between two series which can be used to objectively quantify how similar two (or more) series are. For two continuous periodic functions $f$ and $g$ with period $T$, the cross correlation is
		\begin{align}
			(f*g)(\tau) = \int_{t_0}^{t_0+T}\overline{f(t)}g(t+\tau)dt
		\end{align}
		where $\overline{f(t)}$ is the complex conjugate of $f(t)$ and $\tau$ is the lag.
		For the $\tau$ values, we divide the interval $[0, 2\pi]$ into $100$ equal parts.
		We calculate the cross-correlation for $\tau = \frac{2\pi \cdot k}{100} \text{ with } k \in \{0, \ldots, 99\}$.
		Clearly, for each value of the lag we get a value of the cross-correlation of $f$ and $g$.
		
		In our case, we are interested in the largest value this cross-correlation could take between the two Fourier series, meaning the lag for which the two Fourier series are most similar. This way, we know that the two Fourier series are not the same up to a shift in the $x$ axis.
		
		The analysis of the cross correlation of our Fourier series data set of degree $12$ is given in Table~\ref{tab:cross_cor}.
		For two data sets, $G_{12}$ and $G^\prime_{12}$, the absolute values of the upper triangle matrix without diagonal elements are listed by counting the number values lying in the specified interval.
		The values in the table show that the Fourier functions are  in general uncorrelated.
		The function pairs with the lowest and highest cross-correlation are depicted in Fig.~\ref{fig:cross_corr} which also visualizes the differences between the original and shifted functions.
		
		\begin{table}[b]
			\centering
			\ra{1.2}
			\caption{Upper triangle of the cross-correlation matrix.}
			\label{tab:cross_cor}
			\begin{tabular}{rrrrrrrrrrr}
				\toprule
				& $[0.0,$ & $[0.1,$ &  $[0.2,$ &  $[0.3,$  &  $[0.4,$  &  $[0.5,$ &  $[0.6,$ &  $[0.7,$ &  $[0.8,$ &  $[0.9,$\\
				set & $0.1)$ & $0.2)$ &  $0.3)$ &  $0.4)$  &  $0.5)$  &  $0.6)$ &  $0.7)$ &  $0.8)$ &  $0.9)$ &  $1.0]$\\
				\midrule
				$G_{12}$ & 0 & 0 & 0 & 254 & 2350 & 1925 & 383 & 38 & 0 & 0\\
				$G^\prime_{12}$ & 0 & 0 & 0 & 268 & 2404 & 1823 & 412 & 43 & 0 & 0\\
				\bottomrule
			\end{tabular}
		\end{table}
		
		\begin{figure}[t]
			\centering
			\begin{subfigure}[H]{0.475\textwidth}
				\centering
				\includegraphics[width=\textwidth]{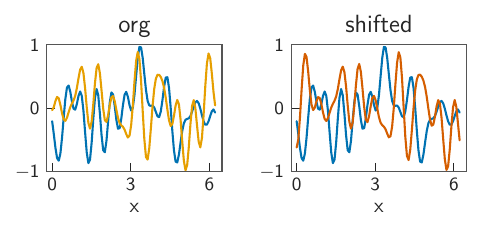}
				\caption{Best cross-correlation function pairs.}
			\end{subfigure}
			\begin{subfigure}[H]{0.475\textwidth}
				\centering
				\includegraphics[width=\textwidth]{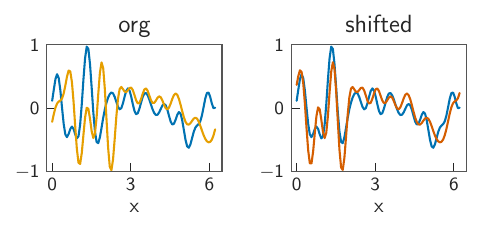}
				\caption{Worst cross-correlation function pairs.}
			\end{subfigure}
			\caption{
				Cross-correlation for the original dataset of degree 12 functions.
				The function pair with the lowest and the highest correlation is depicted in blue and red.
				Because we consider shifted curves to determine the maximal cross-correlation value for a function pair, the original curve is depicted in yellow.}
			\label{fig:cross_corr}
		\end{figure}
		
		\begin{figure*}[t]
			\centering
			\includegraphics[width=0.75\textwidth]{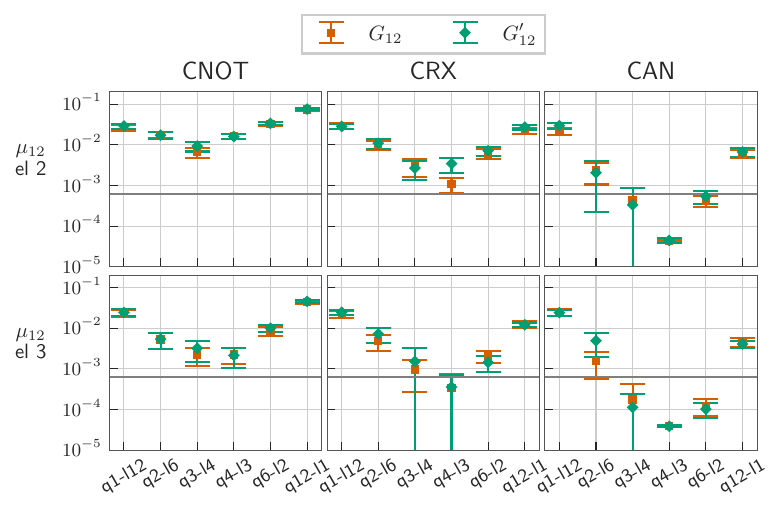}
			\caption{Comparison of learning capabilities for two different sets of randomly sampled truncated Fourier series. Orange data points show results for the original data as in Fig.~\ref{fig:pqc_d12_all}. Green data points show the result of running the experiments on a different set of truncated Fourier series. The learning capability $\mu_{12}$ is for different $WSW$ ans\"atze with $R_YR_Z$ single-qubit gates, simple linear entangling structure and $\{\CNOT, \CRX, \CAN\}$ entangling gates. The top row shows results for two entanglement layers, while the bottom row shows results for three entanglement layers.} \label{fig:wsw_d12_fourier}
		\end{figure*}
		
		To test the impact of different sets $G_{12}, G^\prime_{12}$
		We compare the results for the learning capability for two different sets of randomly sampled truncated Fourier series of degree $12$ in Fig.~\ref{fig:wsw_d12_fourier}.
		This shows that there is a difference in the results for the two sets, but that this difference is rather marginal and not statistically significant.
		
		\section{Preliminaries}
		\label{Preliminaries}
		
		We present introductory results for simple circuits which underline that data reupload can in principle provide a frequency spectrum of size $K =n L$ but that the structure of the trainable unitaries $W$ determines whether all Fourier coefficients can indeed be represented.
		The circuits for this section are shown in Appendix~\ref{circuits}.
		This Appendix reproduces the results from Ref.~\cite{Schuld2021} for simple examples and demonstrates that the learning capability naturally captures the numerical results.
		
		\subsection{One qubit and one layer} \label{subsec:warm_up_1q1l}
		
		\begin{figure}[tb]
			\centering
			\includegraphics[width=0.5\linewidth]{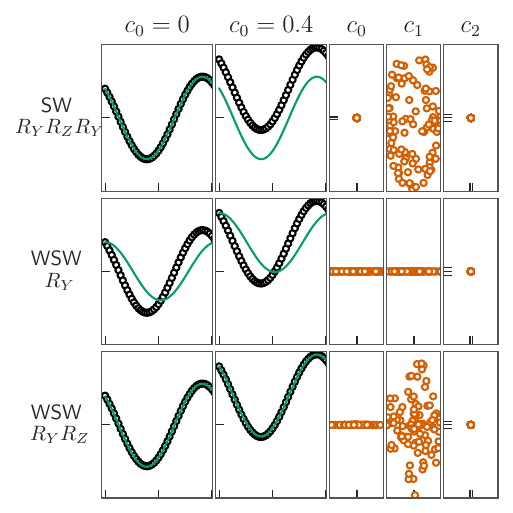}
			\caption{
				\textbf{Left columns}: Validation results (green lines) of two different trained functions (black circles) with different $c_0 \in \{0,0.4\}$.
				The x axis is from $0$ to $2\pi$, the y axis from $-1$ to $1$.
				\textbf{Right columns}:
				Fourier coefficients $c_0$ to $c_2$ resulting from the inverse Fourier transform of the results of evaluating the three ans\"atze with sampled parameters $\Theta$.
				The x axis shows the real part of the coefficient, while the y axis shows the imaginary part. Both axis are from $-1$ to $1$.
				\newline
				\textbf{First row}: For circuits with depth one - one qubit and only one building block $SW$-, the Fourier coefficient $c_0$ is always zero and corresponding functions cannot be fully approximated (see also the calculation in Example~\ref{ex:sw_1q1l}).
				\newline
				\textbf{Second row}: A $WSW$ structure with depth one and one qubit with $R_Y$ as the rotation gate leads to real Fourier coefficients (see also Example~\ref{ex:wsw_1q1l}). \newline
				\textbf{Third row}: Once we use a $WSW$ structure with depth one on one qubit and $R_YR_Z$ rotation gates, $c_0 \neq 0$ is reachable and $\abs{\Im{c_1}}>0$, so that corresponding Fourier functions can be fitted.}
			\label{fig:1q1l_sw_wsw}
		\end{figure}
		
		Some frequently used circuits start with the input encoding directly on the input qubits without an initial block of parametrised unitary gates (see for example Refs.~\cite{Matic2022, Mangini2022, Moussa2022}). For a systematic comparison we refer to this idea as $SW$ instead of $WSW$.
		If we consider the elementary case of one qubit and one layer, then a calculation shows that $SW$ has not enough expressiveness to fit all Fourier functions with degree 1 because the coefficient $c_0$ of Eq.~\eqref{eq:fx_singlequbit_singlelayer} is always 0.
		The calculation is given in Example~\ref{ex:sw_1q1l} in Appendix~\ref{ap:calculations}.
		It uses a $\sigma_z$ measurement but is independent of the exact implementation of the unitary $U^1$.
		This fact can be reproduced numerically in two different ways by choosing three different ans\"atze:
		The first ansatz is based on a $SW$ structure with $U^1=R_YR_ZR_Y$, i.e. $U_{SW,3}(x,\Theta) =  R_Y(\theta_{13}) R_Z(\theta_{12}) R_Y(\theta_{11}) R_X(x)$;
		the second one has a zero layer but only one gate as the single-qubit operation:
		$U_{WSW, 1}(x, \Theta) = R_Y(\theta_{11}) R_X(x) R_Y(\theta_{01})$;
		and the third ansatz consists of a zero layer and two gates per single-qubit operation:
		$U_{WSW, 2}(x, \Theta) = R_Z(\theta_{12}) R_Y(\theta_{11}) R_X(x) R_Z(\theta_{02}) R_Y(\theta_{01})$.
		Each ansatz corresponds to a row in Fig.~\ref{fig:1q1l_sw_wsw} and the circuits are depicted in Appendix~\ref{circuits}.
		
		Two tools, introduced in Ref.~\cite{Schuld2021}, can be used to analyze the expressive power numerically:
		The first tool is to train the ans\"atze to minimize the loss for a chosen Fourier functions of degree 1.
		In Fig.~\ref{fig:1q1l_sw_wsw} the results are depicted in the first two columns for two different functions. The first function is determined by $c_0 = 0$ and $c_1 = 0.2 + i0.2$ while the second function is given by $c_0 = 0.4$ and $c_1 = 0.2 + i0.2$.
		This way, both functions are normalized such that no post-processing is necessary.
		These numerical experiments confirm the calculation
		because the first row shows that $U_{SW,3}(\boldsymbol{x,\theta})$ does only provide $c_0=0$.
		The ansatz $U_{WSW,1}(x,\boldsymbol{\theta})$ (second row) is not able to exactly fit the function while
		the ansatz $U_{WSW,2}(x,\boldsymbol{\theta})$ (last row) results in perfect fits.
		The second tool to analyze the expressive power involves randomly initializing the circuits multiple times; by performing an inverse Fourier transformation, the imaginary and real values for each Fourier coefficient can be obtained. Each initialization is represented by an orange circle in the three last columns of Fig.~\ref{fig:1q1l_sw_wsw} where the third last column shows $c_0$, the second last $c_1$, and the last column $c_2$.
		Imaginary values are plotted on the $y$ axis, whereas real values are plotted on the $x$ axis.
		Note that $c_0$ can only be real because of Eq.~\eqref{eq:fx_pqc_model}.
		Again, $U_{SW,3}(x,\boldsymbol{\theta})$ misses $c_0 \neq 0$,
		$U_{WSW, 1}(x,\boldsymbol{\theta})$ only provides  real values for both reachable Fourier coefficients,
		and $U_{WSW, 2}(x,\boldsymbol{\theta})$ enables both coefficients.
		
		\subsection{Minimum number of qubits and layers}\label{Preliminaries_Multi}
		
		\begin{figure*}[t]
			\centering
			\includegraphics[width=0.95\linewidth]{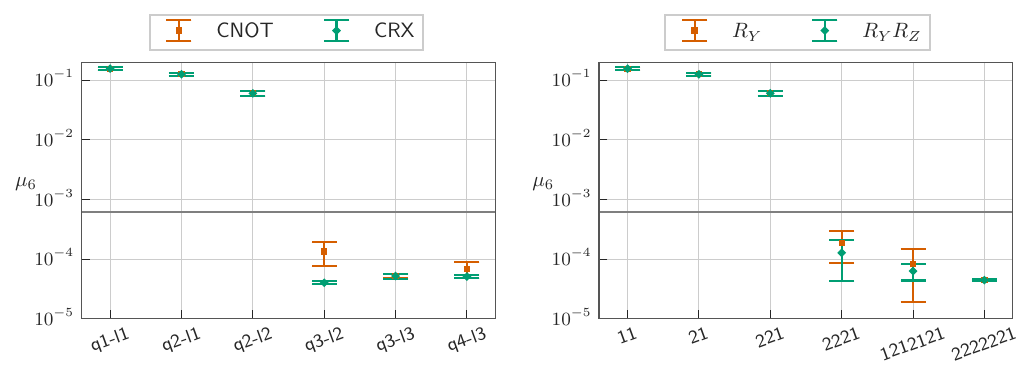}
			\caption{Comparison of $\mu_6$ for layered (left) and dQNN (right) architectures of different layer and qubit numbers. \newline
				\textbf{Left}: results for layered ans\"atze with $R_YR_Z$ single-qubit gates and $\{\CNOT, \CRX\}$ entangling gates for qubit numbers 1 to 4, layer numbers 1 to 3, and two entanglement layers per $W$; \textbf{Right}: results for dQNN ans\"atze with $R_Y$ and $R_YR_Z$ single-qubit gates and different numbers of qubits in (hidden) layers on the $x$ axis.\newline
				The plots show that for these combinations of single-and two-qubit gates, which are known to perform well from the results shown in the main text, the optimal learning capability is reached once the number of qubits and layers matches the degree. Note that the results for $\mu_6$ for the first three layer and qubit combinations on the left and right are nearly indistinguishable on the logarithmic scale.} \label{fig_d6_compare}
		\end{figure*}
		
		According to the results in Ref.~\cite{Schuld2021}, we
		are also able to capture the fact that every ansatz needs a certain number of layers $L$ and qubits $n$ to be able to, in principle, reach a sufficient learning capability for a certain Fourier degree $d_{\text{max}}$, with $d_{\text{max}} = L n$.
		The corresponding numerical results are shown in Fig.~\ref{fig_d6_compare}, where we fit 100 random normalized Fourier functions of degree 6 with different ans\"atze to determine $\mu_6$ for each ansatz.
		The left side of Fig.~\ref{fig_d6_compare} shows two layered ans\"atze:
		Both utilize $U^1=R_YR_Z$ and a simple, linear entanglement structure with two entanglement layers but different entanglement gates $\{\CNOT, \CRX\}$.
		The number of qubits in these architectures is increased from 1 to 4 and the number of layers from 1 to 3.
		The learning capability together with its $95\%$ confidence interval is plotted in each figure as circles with error bars.
		Once  $nL = d=6$, the learning capability is enhanced by more than two orders of magnitude. However, it is not improved significantly for ans\"atze with larger $L$ or $n$.
		
		The same holds for modified dQNN ans\"atze as depicted in Fig.~\ref{fig_d6_compare} on the right side where both ans\"atze use circuits of the form in Fig.~\ref{fig:dqnn_dr_zl} but vary in the single-qubit operation $\{R_Y, R_YR_Z\}$.
		The architectures increase the number of input qubits from one to two and the hidden qubits from 0 to 10 (which are split in different layers for some architectures).
		For example, the notation $221$ represents two input and one output qubit, with a hidden layer with two qubits in between.

		\section{Analysis of error distribution} \label{ap:error}
		
		We calculate the mean $\mu$ and its confidence interval $[\mu-I, \mu+I]$ with
		$I = c\cdot\widehat{\sigma}_N$ by assuming a student's $t$ distribution with $N=100$ samples,
		$c = 1.98$ and use the standard error of the mean (SEM) $\widehat{\sigma}_N$
		given by $\widehat{\sigma}_N = \frac{\sigma_N}{\sqrt{N}}$
		where $\sigma_N = \sqrt{\frac{\sum_{i=1}^N (X_i - \mu)^2}{N-1}} $
		is the corrected standard deviation of the sample
		with $\mu$ being the sample's mean
		over all final MSEs $\{X_i\}_{i=0}^N$ obtained for each function in the set.
		
		\begin{figure}[t]
			\centering
			\includegraphics[width=\textwidth]{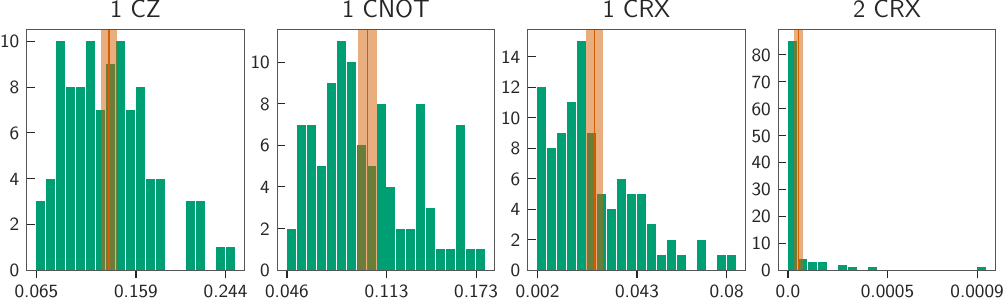}
			\caption{Detailed error distribution of validation error results in Fig.~\ref{fig:coeffs_funct}. For each ansatz, the range of all $100$ final validation errors is divided into $20$ equal baskets.
				The basket is determined by the minimum and maximum error for each configuration
				individually resulting in different $x$ axis.
				The $y$ axis counts the number of errors included in each basket. The mean with its $95\%$ confidence interval is depicted as a red vertical line.} \label{fig:error_bardistribution}
		\end{figure}
		
		To further motivate the mean in Eq.~\eqref{eq:learning_capability} from an empirical perspective, we analyze the errors of the runs depicted in Fig.~\ref{fig:coeffs_funct}
		of the result Sec.~\ref{subsec:schuld}. Figure~\ref{fig:error_bardistribution} shows the error distribution together with its mean and $95\%$ confidence interval.
		
		\begin{figure}[t]
			\centering
			\includegraphics[width=\textwidth]{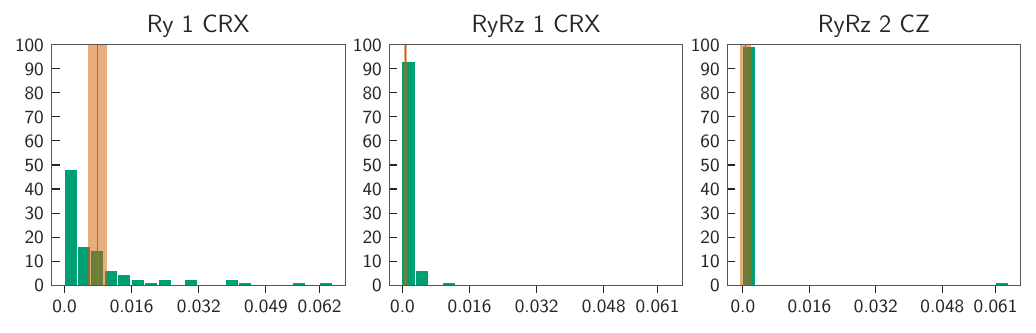}
			\caption{Detailed error distribution of validation MSE results in Fig.~\ref{fig:d6_alt} for selected configurations. For each ansatz, the range of all $100$ final validation errors is divided into $20$ equal baskets.
				The baskets are the same for all three configurations and are determined by their
				total minimum and maximal error.
				The $y$ axis counts the number of errors included in each basket. The mean with its $95\%$ confidence interval is depicted as an orange vertical line.} \label{fig:error_bardistribution_q3_l2_alt}
		\end{figure}

		In Fig.~\ref{fig:d6_alt}, the configuration $R_YR_Z$ 2 $\CZ$
		seems to lead to a large confidence interval compared to the other configurations.
		For this configuration, the mean is $6.8\times 10^{-4}$
		and the SEM is $6.4\times 10^{-4}$, which is larger than that
		for $R_YR_Z$ 1 $\CRX$ ($1.5\times 10^{-4}$)
		because an outlier result at $> 6.1\times 10^{-2}$ increases the variance.
		The error distribution is depicted in Fig.~\ref{fig:error_bardistribution_q3_l2_alt}.
		Furthermore, the logarithmic scale in the result plots and cutting the plots at $10^{-5}$
		lead to seemingly skewed confidence intervals.
		In contrast, for example, the $R_Y$ 1 $\CRX$ configuration has a higher SEM
		with $1.2\times 10^{-3}$.
		
		\begin{figure}[t]
			\centering
			\includegraphics[width=\textwidth]{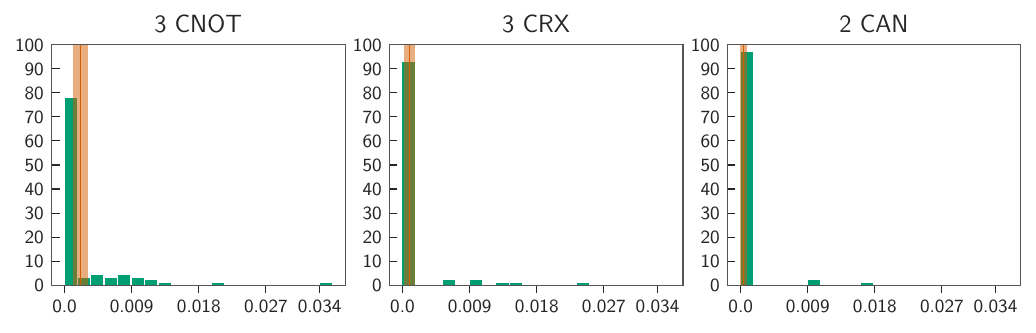}
			\caption{Detailed error distribution of validation MSE results in
				Fig.~\ref{fig:loss_degree12_entgates} for three qubits and four layers
				configurations.
				For each ansatz, the range of all $100$ final validation errors is divided into $20$ equal baskets.
				The baskets are the same for all three configurations and are determined by their
				total minimum and maximal error.
				The $y$ axis counts the number of errors included in each basket. The mean with its $95\%$ confidence interval is depicted as an orange vertical line.} \label{fig:error_bardistribution_q3_l4}
		\end{figure}
		
		\begin{figure}[t]
			\centering
			\includegraphics[width=\textwidth]{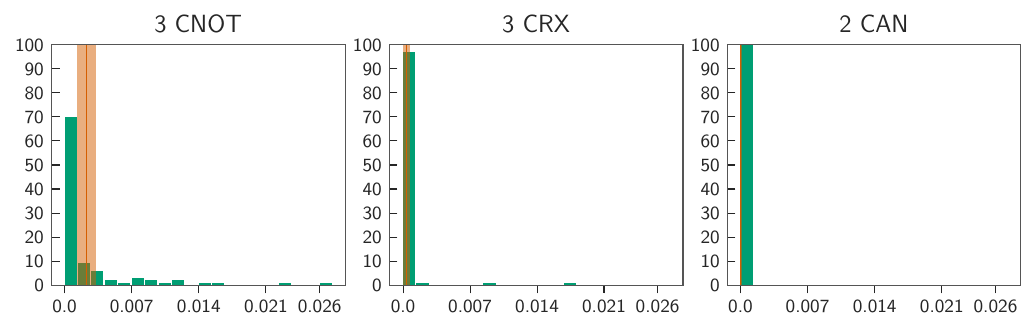}
			\caption{Detailed error distribution of validation MSE results in
				Fig.~\ref{fig:loss_degree12_entgates} for four qubits and three layers
				configurations.
				For each ansatz, the range of all $100$ final validation errors is divided into $20$ equal baskets.
				The baskets are the same for all three configurations and are determined by their
				total minimum and maximal error.
				The $y$ axis counts the number of errors included in each basket. The mean with its $95\%$ confidence interval is depicted as an orange vertical line.} \label{fig:error_bardistribution_q4_l3}
		\end{figure}
		
		We provide two more comparisons of seemingly large confidence intervals from Fig.~\ref{fig:loss_degree12_entgates} for degree $12$ Fourier functions.
		In Fig.~\ref{fig:error_bardistribution_q3_l4}, respectively, Fig.~\ref{fig:error_bardistribution_q4_l3}, the results for configurations containing
		three qubits and four layers, respectively, four qubits and three layers are depicted.
		In both cases, the configurations with errors in lower baskets have lower means
		and smaller confidence intervals because of less outlier which underlines that
		logarithmic scaling is one major reason for the seemingly large confidence intervals
		for those cases.
		
		\section{Analysis of the size of the underlying set of Fourier functions} \label{ap:error_sampling}
		We investigate the impact of the data set's size on the mean and confidence interval
		of our result plots. For the same configuration examples as in the previous Appendix,
		we create subsets by uniformly randomly removing five elements from the results
		to obtain subsets of sizes from $5$ to $100$ in steps of five.
		Calculating the mean and confidence interval for each subset enables us to analyze
		the dependence of the number of functions in $G_d$ on the learning capability value.
		
		\begin{figure}[t]
			\centering
			\includegraphics[width=\textwidth]{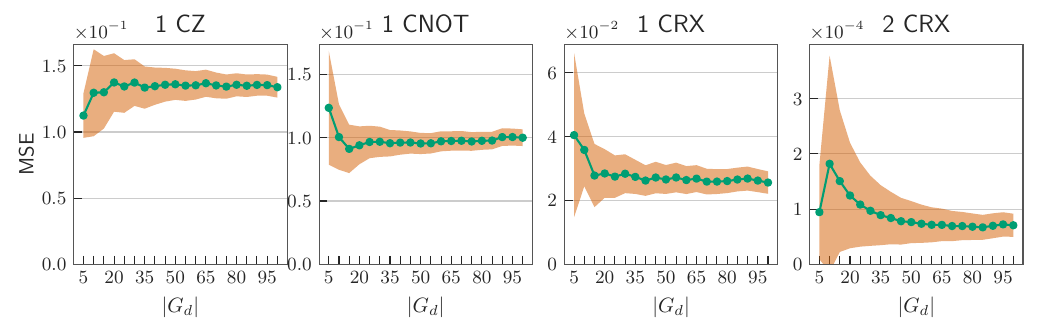}
			\caption{Mean and confidence interval of validation MSE results for successive
				smaller set sizes $\abs{G_d}$ for three qubits and four layers
				configurations of Fig.~\ref{fig:coeffs_funct}.
				The plot shows a smooth convergence of the mean for all four configurations
				for set sizes $\gtrsim 25$.} \label{fig:sampled_mean_q3_l2_lin}
		\end{figure}
		\begin{figure}[t]
			\centering
			\includegraphics[width=\textwidth]{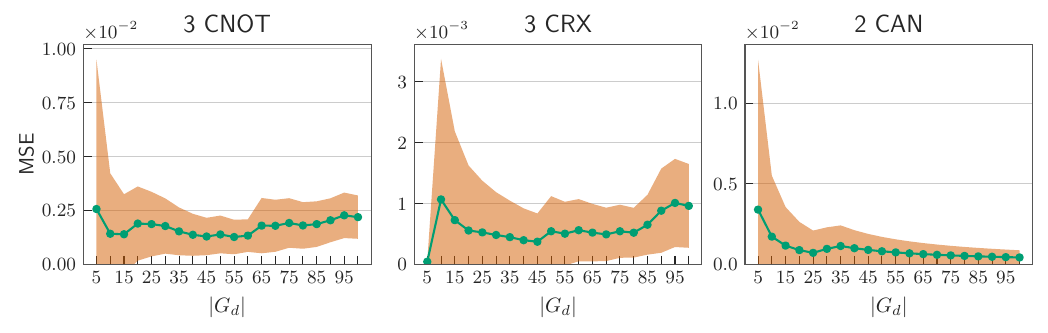}
			\caption{Mean and confidence interval of validation MSE results for successive
				smaller set sizes $\abs{G_d}$ for three qubits and four layers
				configurations of Fig.~\ref{fig:loss_degree12_entgates}.
				The plot shows a smooth convergence for $2 \CAN$.
				For both other cases  the convergence of the mean is less smooth which
				suggests that the size $\abs{G_d} = 100$ should not be decreased.}
			\label{fig:sampled_mean_q3_l4}
		\end{figure}
		\begin{figure}[t]
			\includegraphics[width=\textwidth]{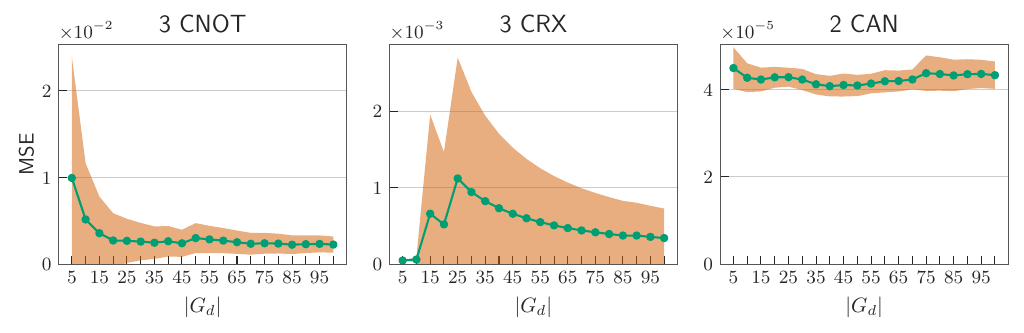}
			\caption{Mean and confidence interval of validation MSE results for successive
				smaller set sizes $\abs{G_d}$ for four qubits and three layers
				configurations of Fig.~\ref{fig:loss_degree12_entgates}.
				The plot shows a smooth convergence for all three configurations.
				However, the shape of the curve for $3 \CRX$
				suggests that the size $\abs{G_d} = 100$ should not be decreased.}
			\label{fig:sampled_mean_q4_l3}
		\end{figure}
		
		The data leading to the results of Fig.~\ref{fig:coeffs_funct} is analyzed in
		Fig~\ref{fig:sampled_mean_q3_l2_lin} showing a smooth convergence towards the
		reported mean for set sizes $\gtrsim 25$.
		In addition, the data for three qubits and four layers configurations, respectively, four qubits and three qubits, of Fig.~\ref{fig:loss_degree12_entgates}
		are analyzed in Fig.~\ref{fig:sampled_mean_q3_l4}, respectively, Fig.~\ref{fig:sampled_mean_q4_l3}.
		The convergence for the $\CRX$ gate with three entanglement layers is less smooth than
		for the others which suggests that the size of $\abs{G_d} = 100$ should not be decreased.
		The other plots show a smooth convergence to the mean.

		\section{Barren plateaus} \label{barren}
		
		We numerically check if barren plateaus can explain the results of the layered PQCs  depicted as blue points in Fig.~\ref{fig:loss_degree12_entgates} and the blue points for dQNNs in Fig.~\ref{fig:dQNN_dr_zl} (or Fig.~\ref{fig:dQNN_urot}).
		The layered WSW ans\"atze with $n\cdot L=12$ vary in their number of qubits $n$ and layers $L$ but all consist of single-qubit rotations $R_YR_Z$ and $3$ entanglement layers in a simple, linear structure utilizing $\CRX$ entanglement gates.
		The dQNNs ans\"atze enabling degree $6$ are the following (in the notation of Eq.~\eqref{eq:notation_qubits_dqnn}):
		\begin{itemize}
			\item $[6,1]$ (max.~$7$ neighboring qubits),
			\item $[3,3,1]$ (max.~$6$ neighboring qubits),
			\item $[2,2,2,1]$ (max.~$4$ neighboring qubits),
			\item $[1,2,1,2,1]$ (max.~$3$ neighboring qubits),
			\item $[1,1,1,1,1,1,1]$ (max.~$2$ neighboring qubits).
		\end{itemize}
		All dQNNs utilize single-qubit operations $R_YR_Z$ and a data reupload scheme with a zero layer on each qubit.
		
		Following the approach of Ref.~\cite{McClean2018}, we initialize every trainable gate randomly in the interval $[0,2\pi)$ except for the first gate acting on the first qubit.
		We calculate the gradient based on the mean squared error loss of the complete training data set of 50, respectively, 100, data points for degree $6$, respectively, degree $12$, Fourier series and determine the variance of these gradients by averaging over the $100$ Fourier functions used to calculate the learning capability.
		
		In the left plot of Fig.~\ref{fig:barren}, the results for dQNNs are depicted.
		No tendency for different choices of hidden qubits is observable indicating that barren plateaus cannot explain the drastically different performances represent by the blue points in Fig.~\ref{fig:dQNN_dr_zl} (or Fig.~\ref{fig:dQNN_urot}).
		
		In the right plot of Fig.~\ref{fig:barren}, the results for layered ans\"atze are depicted.
		A clear tendency, in agreement with the literature, can be found showing an exponential decay of the variance of the gradient when increasing the number of qubits (and decreasing the number of layers).
		However, this does not adequately explain the blue curve in Fig.~\ref{fig:loss_degree12_entgates}, because the architecture with four qubits (three layers) performs slightly better than the one with three qubits (four layers); the ansatz with six qubits (two layers) performs slightly better than the one with two qubits (six layers); and the ansatz with $12$ qubits (one layer) performs slightly better than the one with one qubit ($12$ layers).
		
		This numerical analysis further supports our claim that the effect of barren plateaus is not sufficient to explain different learning capabilities of different ans\"atze for dQNNs and layered PQCs.
		
		\begin{figure}[t]
			\centering
			\includegraphics[width=\textwidth]{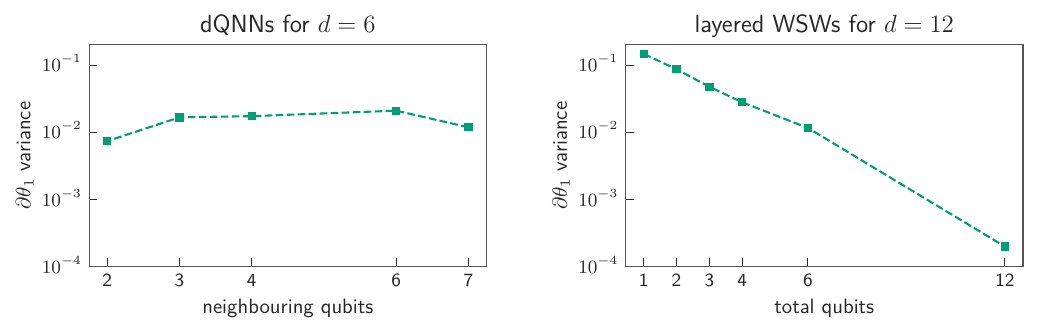}
			\caption{Investigation of possible barren plateaus via the variance of parameters' gradients~\cite{McClean2018}. Except for the first gate acting on the first qubit, all gates of the circuit are initialized randomly. The gradient is calculated based on the mean squared error loss of the complete training data set.
				\textbf{Left}: Results for dQNN ans\"atze enabling degree $6$ Fourier series corresponding to the ans\"atze resulting in the blue curve in Fig.~\ref{fig:dQNN_dr_zl} and Fig.~\ref{fig:dQNN_urot}.
				\textbf{Right}: Results for layered ans\"atze for degree $12$ Fourier series corresponding to the blue curve in Fig.~\ref{fig:loss_degree12_entgates}.}
			\label{fig:barren}
		\end{figure}
		
		\section{Additional figures and tables}
		\label{Appendix_plots}
		
		In this appendix, we provide two more figures and tables. The figures show detailed comparisons of layered ansätze with different entanglement layers and styles for degree $12$ in Fig.~\ref{fig:pqc_d12_all} and for degree $6$ in Fig.~\ref{fig:pqc_d6_cz_urot}.
		
		Table~\ref{tab:parameter_count_WSW_d12} and Table~\ref{tab:parameter_count_WSW_dQNN} list the number of single-qubit gates, two-qubit gates, and trainable parameters for the ans\"atze in Fig.~\ref{fig:loss_degree12_entgates} and Fig.~\ref{fig_d6_compare}, respectively.
		
		\begin{figure*}[t]
			\centering
			\includegraphics[width=\textwidth]{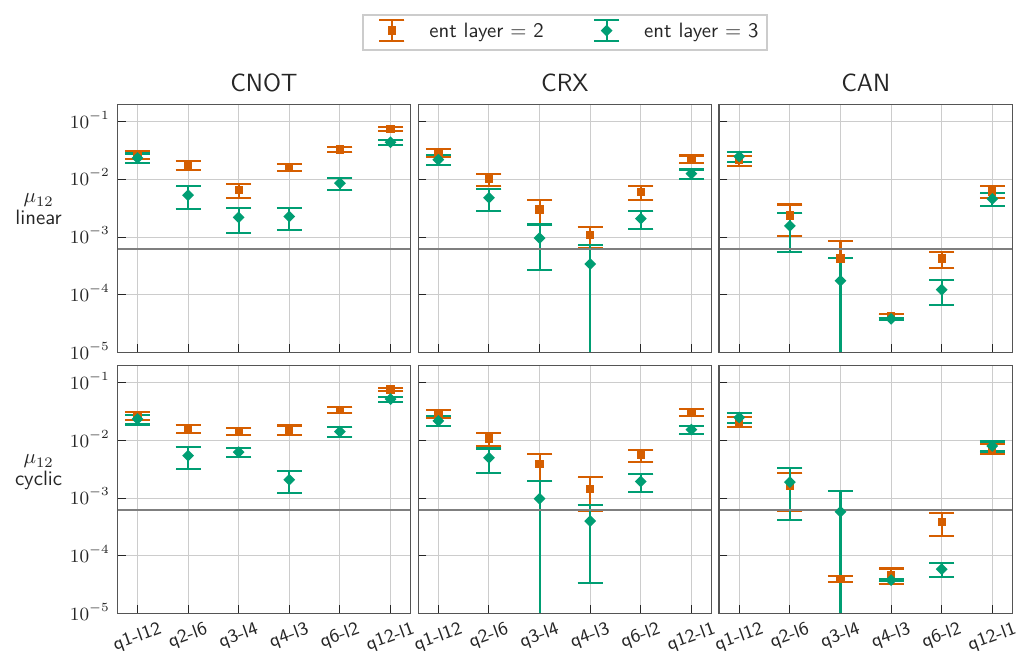}
			\caption{Results for the learning capability $\mu_{12}$ for different $WSW$ ans\"atze with $R_YR_Z$ single-qubit gates, simple entangling structure and $\{\CNOT, \CRX, \CAN\}$ entangling gates. The top row shows results for linear entanglement style, while the bottom row shows results for cyclic entanglement (see Fig.~\ref{fig:qvc_ent_style} for an explanation of the entanglement styles).} \label{fig:pqc_d12_all}
		\end{figure*}
		
		\begin{figure*}[t]
			\centering
			\includegraphics[width=0.85\textwidth]{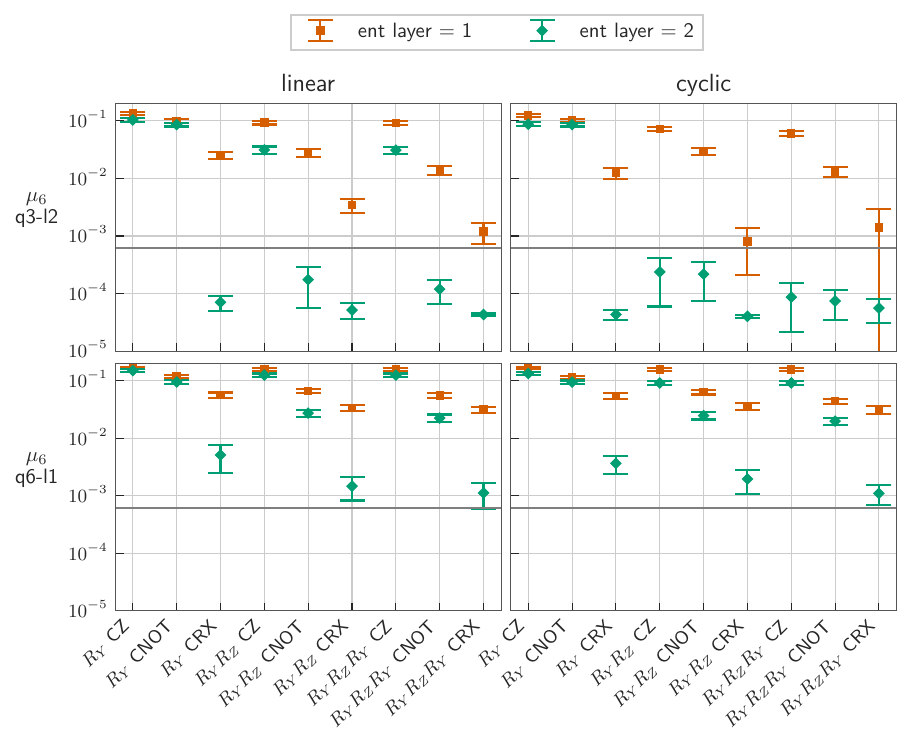}
			\caption{Comparison of $\mu_6$ for different layered ans\"atze with simple and linear (left) or cyclic (right) entanglement structure. The top rows shows results for three qubits and two layers, while the bottom row shows results for six qubits and one layer. The number of entanglement layers $\{1, 2\}$ is defined as shown in Fig.~\ref{fig:qvc_ent_layer}. } \label{fig:pqc_d6_cz_urot}
		\end{figure*}

		\begin{table}[b]
			\ra{1.2}
			\centering
			\begin{tabular}{@{}rrcrrrcrrrcrrr@{}}
				\toprule
				\multicolumn{2}{c}{$d=12$} &
				\phantom{a} &
				\multicolumn{3}{c}{3 CNOT}&
				\phantom{a} &
				\multicolumn{3}{c}{3 CRX}&
				\phantom{a} &
				\multicolumn{3}{c}{2 CAN} \\
				\cmidrule{1-2} \cmidrule{4-6} \cmidrule{8-10} \cmidrule{12-14}
				$n$ & $L$ &&   $s$ &  $t$ &   $p$ &&  $s$ &  $t$ &   $p$ && $s$ & $t$ & $p$\\
				\midrule
				1 & 12 && 90 & 0 & 78 && 90 & 0 & 78 && 64 & 0 & 52 \\
				2 & 6 && 96 & 21 & 84 && 96 & 21 & 105 && 68 & 70 & 126 \\
				3 & 4 && 102 & 30 & 90 && 102 & 30 & 120 && 72 & 80 & 140 \\
				\B 4 & \B 3 && \B 108 & \B 36 & \B 96 && \B 108 & \B 36 & \B 132 && \B 76 & \B 88 & \B 152 \\
				6 & 2 && 120 & 45 & 108 && 120 & 45 & 153 && 84 & 102 & 174 \\
				12 & 1 && 156 & 66 & 144 && 156 & 66 & 210 && 108 & 140 & 236 \\
				\bottomrule
			\end{tabular}
			\caption{Comparison of number of single-qubit gates $s$, number of two-qubit gates $t$, and trainable parameter count $p$ for the different ans\"atze in Fig.~\ref{fig:loss_degree12_entgates}. We count one $\CAN$ gate as three two-qubit gates. The number of single-qubit gates includes the nonparameterized data-encoding gates.}
			\label{tab:parameter_count_WSW_d12}
		\end{table}
		
		\begin{table}[b]
			\ra{1.2}
			\centering
			\begin{tabular}{@{}rcrrrcrrrcrrrcrrr@{}}
				\toprule
				&
				\phantom{a} &
				\multicolumn{7}{c}{layered} &
				\phantom{a} &
				\multicolumn{7}{c}{dQNN}\\
				\cmidrule{3-9} \cmidrule{11-17}
				&&
				\multicolumn{3}{c}{CNOT} &
				\phantom{a} &
				\multicolumn{3}{c}{$CR_X$} &
				\phantom{a} &
				\multicolumn{3}{c}{$R_Y$} &
				\phantom{a} &
				\multicolumn{3}{c}{$R_YR_Z$} \\
				\cmidrule{3-5} \cmidrule{7-9} \cmidrule{11-13} \cmidrule{15-17}
				$n \times L$ &&
				s & t & p &&
				s & t & p &&
				s & t & p &&
				s & t & p \\
				\midrule
				1 &&  9 &  0 &  8 &&  9 &  0 &  8 &&  4 &   3 &  6 &&  9 &  3 &  11 \\
				2 && 18 &  4 & 16 && 18 &  4 & 20 &&  7 &   6 & 12 && 14 &  6 &  18 \\
				4 && 28 &  6 & 24 && 28 &  6 & 30 && 13 &  18 & 27 && 24 & 18 &  38 \\
				\B 6 && \B 42 & \B 12 & \B 36 && \B 42 & \B 12 & \B 48 && \B 19 & \B 30 & \B 43 && \B 34 & \B 30 &  \B 58 \\
				9 && 57 & 16 & 48 && 57 & 16 & 64 && 28 & 36 & 55 && 49 & 36 &  76 \\
				12 && 76 & 24 & 64 && 76 & 24 & 88 && 37 & 66 & 91 && 64 & 66 & 118 \\
				\bottomrule
			\end{tabular}
			\caption{Comparison of number of single-qubit gates $s$, number of two-qubit gates $t$, and trainable parameter count $p$ for the different ans\"atze in Fig.~\ref{fig_d6_compare}. We count one $\CAN$ gate in the dQNN ans\"atze as three two-qubit gates. The number of single-qubit gates includes the non-parameterized data-encoding gates.}
			\label{tab:parameter_count_WSW_dQNN}
		\end{table}
		
		\newpage
		\clearpage
		\section{Calculations}\label{ap:calculations}
		
		\subsection{Calculations for circuits representing degree one}
		
		Assume $S(x) = R_X(x) = H e^{-i \frac{x}{2} D_x} H^\dagger$ where $D_x = \text{diag}(+1, -1)$ is the diagonalized form of $\sigma_X$ diagonalized by the Hadamard matrix $H^\dagger = H = \frac{1}{\sqrt{2}}(-1)^{i \cdot j} \ket{i}\bra{j}$.
		We also assume that measurement is taken in the $z$ basis: $M = \sigma_Z = (-1)^{i\cdot j} \delta^{i}_{\ j} \ket{i} \bra{j} = (-1)^{i\cdot i} \ket{i} \bra{i}$ and that
		$W = \begin{pmatrix}
			W^{0}_{\ 0} & W^{0}_{\ 1} \\
			W^{1}_{\ 0} & W^{1}_{\ 1}
		\end{pmatrix}$ with
		$W^\dagger = \begin{pmatrix}
			\cc{W}^{\ 0}_{0} & \cc{W}^{\ 0}_{1} \\
			\cc{W}^{\ 1}_{0} & \cc{W}^{\ 1}_{1}
		\end{pmatrix}$
		is unitary.
		
		\begin{example}
			\label{ex:ws_1q1l}
			The ansatz type $W(\boldsymbol{\theta}) S(x)$ on 1 qubit and depth 1 does not learn all Fourier functions of degree 1 since the Fourier coefficient $c_0$ is equal to zero for all parameters $\boldsymbol{\theta}$.
			\begin{align*}
				WS\ket{0} &=  \frac{1}{2} (-1)^{0 \cdot j} (-1)^{j \cdot l} e^{-i \frac{x}{2}\lambda_j}  W^{q}_{ \ i} \ket{i} \\
				\bra{0} S^\dagger W^\dagger M W S \ket{0} &=
				\frac{1}{4}
				(-1)^{0 \cdot k} (-1)^{k \cdot l^\prime}
				(-1)^{0 \cdot j} (-1)^{j \cdot l}
				e^{i \frac{x}{2}(\lambda_k - \lambda_j)}
				\cc{W}^{\ l^\prime}_{i^\prime} W^{i}_{ \ l} M^{i^\prime}_{ \ i}\\
				&= \frac{1}{4}
				(-1)^{i \cdot i}
				(-1)^{k \cdot l^\prime}
				(-1)^{j \cdot l}
				e^{i \frac{x}{2}(\lambda_k - \lambda_j)}
				\cc{W}^{\ l^\prime}_{i} W^{i}_{ \ l} \\
				c_0 \colon k = j \Rightarrow c_0
				&= \frac{1}{4}
				(-1)^{i \cdot i}
				(-1)^{j \cdot l^\prime}
				(-1)^{j \cdot l}
				\cc{W}^{\ l^\prime}_{i} W^{i}_{ \ l} \\
				&= \frac{1}{4}
				(-1)^{j \cdot l^\prime}
				(-1)^{j \cdot l}
				(-1)^{i \cdot i} \cc{W}^{\ l^\prime}_{i} W^{i}_{ \ l} \\
				&= \frac{1}{4}
				(
				1 +
				(-1)^{1 \cdot l^\prime} (-1)^{1 \cdot l}
				)
				(-1)^{i \cdot i} \cc{W}^{\ l^\prime}_{i} W^{i}_{ \ l} \\
				&=
				\begin{cases}
					0, & \text{ if }\ l^\prime \neq l \\
					\frac{1}{2} (-1)^{i \cdot i} \cc{W}^{\ l}_{i} W^{i}_{ \ l}, & \text{ if }\ l = l
				\end{cases} \\
				&= \frac{1}{2}(\cc{W}^{\ l}_{0} W^{0}_{ \ l} - \cc{W}^{\ l}_{1} W^{1}_{ \ l})\\
				&= \frac{1}{2} (\cc{W}^{\ 0}_{0} W^{0}_{ \ 0} + \cc{W}^{\ 1}_{0} W^{0}_{ \ 1} - (\cc{W}^{\ 0}_{1} W^{1}_{ \ 0} + \cc{W}^{\ 1}_{1} W^{1}_{ \ 1})) \\
				& = 0
			\end{align*}
		\end{example}
		
		\begin{example}
			\label{ex:sw_1q1l}
			The ansatz $S(x)W(\boldsymbol{\theta})$ does not learn all Fourier functions of degree one since the Fourier coefficient $c_0$ is equal to zero for all parameters $\boldsymbol{\theta}$.
			\begin{align*}
				SW\ket{0} &= \frac{1}{2} (-1)^{l \cdot j} (-1)^{j \cdot q} e^{-i \frac{x}{2}\lambda_j}  W^{q}_{ \ 0} \ket{q} \\
				\bra{0}  W^\dagger S^\dagger M S W \ket{0}
				&= \frac{1}{4}
				(-1)^{i^\prime \cdot k} (-1)^{k \cdot l^\prime}
				(-1)^{i \cdot j} (-1)^{j \cdot l}
				e^{i \frac{x}{2}(\lambda_k - \lambda_j)}
				\cc{W}^{\ 0}_{l^\prime} W^{l}_{ \ 0} M^{i^\prime} { \ i}\\
				&= \frac{1}{4}
				(-1)^{i \cdot i}
				(-1)^{i \cdot k} (-1)^{k \cdot l^\prime}
				(-1)^{i \cdot j} (-1)^{j \cdot l}
				e^{i \frac{x}{2}(\lambda_k - \lambda_j)}
				\cc{W}^{\ 0}_{l^\prime} W^{l}_{ \ 0} \\
				c_0 \colon k = j \Rightarrow c_0
				&= \frac{1}{4}
				(-1)^{i \cdot i}
				(-1)^{i \cdot j} (-1)^{j \cdot l^\prime}
				(-1)^{i \cdot j} (-1)^{j \cdot l}
				\cc{W}^{\ 0}_{l^\prime} W^{l}_{ \ 0} \\
				&= \frac{1}{4}
				(-1)^{i \cdot i} (-1)^{j \cdot l^\prime} (-1)^{j \cdot l}
				\cc{W}^{\ 0}_{l^\prime} W^{l}_{ \ 0} \\
				&= \frac{1}{4}
				(1-1)(-1)^{j \cdot l^\prime} (-1)^{j \cdot l}
				\cc{W}^{\ 0}_{l^\prime} W^{l}_{ \ 0} \\
				&= 0
			\end{align*}
		\end{example}
		
		\begin{example}
			\label{ex:wsw_1q1l}
			The ansatz $U = W(\boldsymbol{\theta})S(x)W(\boldsymbol{\theta}) = R_Y(\theta) S(x) R_Y(\theta)$ does not learn all Fourier functions of degree 1 since the coefficients $c_0$ and $c_1$ are real for all parameters $\boldsymbol{\theta}$.
			\begin{align*}
				WSW\ket{0} &= \frac{1}{2} (-1)^{l_1 \cdot j} (-1)^{j \cdot l_0} e^{-i \frac{x}{2}\lambda_j}  W^{q}_{ \ l_1} W^{l_0}_{ \ 0} \ket{q} \\
				\bra{0}W^\dagger S^\dagger W^\dagger M WSW\ket{0}
				&= \frac{1}{4}
				(-1)^{i \cdot i} (-1)^{l^\prime_1 \cdot j} (-1)^{j \cdot l^\prime_0} (-1)^{l_1 \cdot j} (-1)^{j \cdot l_0}e^{-i \frac{x}{2}\lambda_j}
				\cc{W}^{\ 0}_{l_0^\prime} \cc{W}^{\ l_1^\prime}_{i} W^{i}_{ \ l_1} W^{l_0}_{ \ 0}
			\end{align*}
			If we choose $W=R_Y=\left(
			\begin{array}{cc}
				\cos(\frac{\Theta}{2}) & -\sin(\frac{\Theta}{2})\\
				\sin(\frac{\Theta}{2}) & \cos(\frac{\Theta}{2})
			\end{array}\right)$, then
			all entries $W_{ij}$ are real and, hence, the coefficients $c_0$ and $c_1$ in
			the expectation value above are real.
			Further calculations show that $c_0 = 0$ and $\abs{c_1} = 0.5$ for $W=R_X$ and $W=R_Z$.
		\end{example}
		
		\subsection{Calculations for circuits representing higher degrees}
		
		\begin{example}[one qubit, $L$ layers]
			We consider the one qubit, $L$ layer case with \\
			$S(x) = e^{-i \frac{x}{2}\sigma_x} = H e^{-i \frac{x}{2}\lambda_i} \ket{i} \bra{i} H^\dagger$ \\
			and $W(\boldsymbol{\theta}) = W^{l}_{ \ k} \ket{l}\bra{k}$
			\begin{align*}
				U(x)\ket{0} &=  W^L(\boldsymbol{\theta}) S(x) \cdots W^1(\boldsymbol{\theta})S(x) W^0(\boldsymbol{\theta})\ket{0}\\
				&= e^{-i\frac{x}{2}(\lambda_{j_1} + \cdots + \lambda_{j_L})} W^q_{ \ j_L} \cdots W^{j_2}_{ \ j_1} W^{j_1}_{ \ 0} \ket{q} \\
				&= e^{-i\frac{x}{2}\Lambda_{\boldsymbol{j}}} W^q_{ \ j_L} \cdots W^{j_1}_{ \ 0} \ket{q} \\
				\Rightarrow  \bra{0} U^\dagger(x) M U(x) \ket{0}
				&= e^{-i\frac{x}{2}(\Lambda_{\boldsymbol{j}} - \Lambda_{\boldsymbol{k}})} M^{\boldsymbol{i}^\prime}_{ \ \boldsymbol{i} } \cc{W}^{ \ 0}_{k_1} \cdots \cc{W}^{ \ k_l}_{i^\prime} W^i_{ \ j_L} \cdots W^{j_1}_{ \ 0}
			\end{align*}
			such that $\Omega = \{\frac{1}{2}(\Lambda_{\boldsymbol{k}} - \Lambda_{\boldsymbol{j}})\} = [-L, - (L-1), \cdots, -1, 0, 1, \cdots, L-1, L]$.
		\end{example}
		
		\begin{example}[$n$ qubits, one layer]
			We consider the $n$ qubit, one layer case with \\
			$S(x) = e^{-i \frac{x}{2}\sigma_x \otimes n} = H^{\otimes n} e^{-i\frac{x}{2}D_x} \otimes \ldots \otimes e^{-i\frac{x}{2}D_x} H^{\dagger \otimes n} = H^{\otimes n} e^{-i\frac{x}{2}(\lambda_{j_1} + \cdots + \lambda_{j_n})} \ket{j_1 \cdots j_n} \bra{j_1 \cdots j_n} H^{\dagger \otimes n}$ \\and  $W(\boldsymbol{\theta}) = W^{l_1 \cdots l_n}_{m_1 \cdots m_n} \ket{l_1 \cdots l_n} \bra{m_1 \cdots m_n}$
			
			\begin{align*}
				U(x)\ket{0\cdots0} &= W^1(\boldsymbol{\theta})S(x) W^0(\boldsymbol{\theta})\ket{0\cdots0} \\
				&= e^{-i\frac{x}{2}(\lambda_{j_1} + \cdots + \lambda_{j_n})} \\
				& \qquad
				W^{q_1 \cdots q_n}_{ \ j_1 \cdots j_n} W^{j_1 \cdots j_n}_{ \ 0 \cdots 0}
				\ket{q_1 \cdots q_n} \\
				&=e^{-i\frac{x}{2}\Lambda_{\boldsymbol{j}}} W^{\boldsymbol{q}}_{ \ \boldsymbol{j}}W^{\boldsymbol{j}}_{ \ \boldsymbol{0}} \ket{\boldsymbol{q}}\\
				\Rightarrow \bra{0} U^\dagger(x) M U(x) \ket{0} &= e^{-i\frac{x}{2}(\Lambda_{\boldsymbol{j}} - \Lambda_{\boldsymbol{k}})}
				M^{\boldsymbol{i}^\prime}_{ \ \boldsymbol{i} }
				\cc{W}^{ \ \boldsymbol{0}}_{\boldsymbol{k}}\cc{W}^{ \ \boldsymbol{k}}_{\boldsymbol{i}^\prime}
				W^{\boldsymbol{i}}_{ \ \boldsymbol{j}}W^{\boldsymbol{j}}_{ \ \boldsymbol{0}}
			\end{align*}
			such that $\Omega = \{\frac{1}{2}(\Lambda_{\boldsymbol{k}} - \Lambda_{\boldsymbol{j}})\} = [-n, - (n-1), \cdots, -1, 0, 1, \cdots, n-1, n]$.
		\end{example}
		
		\begin{example}[$n$ qubits, $L$ layer]
			\label{ex:nqubits_Llayer}
			We consider the $n$ qubit, $L$ layer case with\\
			$S(x) = e^{-i \frac{x}{2}\sigma_x \otimes n} = H^{\otimes n} e^{-i\frac{x}{2}D_x} \otimes \ldots \otimes e^{-i\frac{x}{2}D_x} H^{\dagger \otimes n} = H^{\otimes n} e^{-i\frac{x}{2}(\lambda_{j_1} + \cdots + \lambda_{j_n})} \ket{j_1 \cdots j_n} \bra{j_1 \cdots j_n} H^{\dagger \otimes n}$ \\and
			$W(\boldsymbol{\theta}) = W^{l_1 \cdots l_n}_{m_1 \cdots m_n} \ket{l_1 \cdots l_n} \bra{m_1 \cdots m_n}$
			
			\begin{align*}
				U(x)\ket{0\cdots0} &= W^L(\boldsymbol{\theta})S(x) \cdots W^1(\boldsymbol{\theta})S(x) W^0(\boldsymbol{\theta})\ket{0\cdots0} \\
				&= W^L(\boldsymbol{\theta})S(x) \cdots W^2(\boldsymbol{\theta})S(x) e^{-i\frac{x}{2}(\lambda_{j_{11}} + \cdots + \lambda_{j_{1n}})} W^{q_{1} \cdots q_{n}}_{ \ j_{11} \cdots j_{1n}} W^{j_{11} \cdots j_{1n}}_{ \ 0 \cdots 0} \ket{q_{1} \cdots q_{n}} \\
				&= e^{-i\frac{x}{2}(\lambda_{j_{11}} + \cdots + \lambda_{j_{1n}} + \cdots + \lambda_{j_{L1}} + \cdots + \lambda_{j_{Ln}})}
				W^{q_{1} \cdots q_{n}}_{ \ j_{L1} \cdots j_{Ln}} \cdots
				W^{j_{21} \cdots j_{2n}}_{ \ j_{11} \cdots j_{1n}}
				W^{j_{11} \cdots j_{1n}}_{ \ 0 \cdots 0}
				\ket{q_{L1} \cdots q_{Ln}} \\
				&= e^{-i\frac{x}{2}(\Lambda_{\boldsymbol{j_1}} + \cdots + \Lambda_{\boldsymbol{j_L}} )} W^{\boldsymbol{q}}_{ \ \boldsymbol{j_L}} \cdots
				W^{\boldsymbol{j_1}}_{ \ \boldsymbol{0}}
				\ket{\boldsymbol{q}} \\
				\Rightarrow \bra{0} U^\dagger(x) M U(x) \ket{0} &=
				e^{-i\frac{x}{2}(
					\Lambda_{\boldsymbol{j_1}} + \cdots + \Lambda_{\boldsymbol{j_L}}
					-
					\Lambda_{\boldsymbol{k_1}} - \cdots - \Lambda_{\boldsymbol{k_L}})}
				M^{\boldsymbol{i}^\prime}_{ \ \boldsymbol{i} }
				\cc{W}^{ \ \boldsymbol{k_L}}_{\boldsymbol{i}^\prime} \cdots
				\cc{W}^{ \ \boldsymbol{0}}_{\boldsymbol{k_1}}
				W^{\boldsymbol{i}}_{ \ \boldsymbol{j_L}} \cdots
				W^{\boldsymbol{j_1}}_{ \ \boldsymbol{0}}
			\end{align*}
			such that $\Omega = \{\frac{1}{2}(\Lambda_{\boldsymbol{k_1}} + \cdots +\Lambda_{\boldsymbol{k_L}} - \Lambda_{\boldsymbol{j_1}} + \cdots +\Lambda_{\boldsymbol{j_L}})\} = [-nL, - (nL-1), \cdots, -1, 0, 1, \cdots, nL-1, nL]$.
		\end{example}
		
		\section{Presenting selected circuits} \label{circuits}
		
		In the following we present the circuits analyzed in Fig.~\ref{fig:1q1l_sw_wsw}
		of the preliminary results in Appendix~\ref{Preliminaries} and
		in Fig~\ref{fig:coeffs_funct} in first part of the results Sec.~\ref{subsec:schuld}.
		The three circuits that lead to the results of Fig.~\ref{fig:1q1l_sw_wsw} are given in
		Fig.~\ref{fig:sw_circuit_q1l1}, Fig.~\ref{fig:wsw_circuit_q1l1_ry},
		and Fig.~\ref{fig:wsw_circuit_q1l1_ryrz}.
		The three circuits that lead to the results of Fig.~\ref{fig:coeffs_funct} are given in
		Fig.~\ref{fig:wsw_circuit_q3l2_cz}, Fig.~\ref{fig:wsw_circuit_q3l2_cx},
		Fig.~\ref{fig:wsw_circuit_q3l2_crx_el1}
		and Fig.~\ref{fig:wsw_circuit_q3l2_crx_el2}.
		
		\begin{figure}[th]
			\centering
			\begin{quantikz}
				\lstick{$\ket{0}$}  & \gate{R_X(x)} & \gate{R_Y(\theta_1)} & \gate{R_Z(\theta_2)} & \gate{R_Y(\theta_3)} & \meter{}
			\end{quantikz}
			\caption{$SW$ circuit that yields the results in the first row of Fig.~\ref{fig:1q1l_sw_wsw}.}
			\label{fig:sw_circuit_q1l1}
		\end{figure}
		
		\begin{figure}[th]
			\centering
			\begin{quantikz}
				\lstick{$\ket{0}$} & \gate{R_Y(\theta_1)} & \gate{R_X(x)} & \gate{R_Y(\theta_2)} & \meter{}
			\end{quantikz}
			\caption{$WSW$ circuit that yields the results in the second row of Fig.~\ref{fig:1q1l_sw_wsw}.}
			\label{fig:wsw_circuit_q1l1_ry}
		\end{figure}
		
		\begin{figure}[th]
			\centering
			\begin{quantikz}
				\lstick{$\ket{0}$} & \gate{R_Y(\theta_1)} & \gate{R_Z(\theta_2)} & \gate{R_X(x)} & \gate{R_Y(\theta_3)} & \gate{R_Z(\theta_4)} & \meter{}
			\end{quantikz}
			\caption{$WSW$ circuit that yields the results in the third row of Fig.~\ref{fig:1q1l_sw_wsw}.}
			\label{fig:wsw_circuit_q1l1_ryrz}
		\end{figure}
		
		\begin{figure}[th]
			\centering
			\begin{quantikz}
				\lstick{$\ket{0}$} & \gate{R_Y(\theta_1)} & \ctrl{1} & \qw & \gate{R_X(x)} & \gate{R_Y(\theta_4)} & \ctrl{1} & \qw & \gate{R_X(x)} & \gate{R_Y(\theta_7)} & \ctrl{1} & \qw & \qw \\
				\lstick{$\ket{0}$} & \gate{R_Y(\theta_2)} & \ctrl{0} & \ctrl{1} & \gate{R_X(x)} & \gate{R_Y(\theta_5)} & \ctrl{0} & \ctrl{1} & \gate{R_X(x)} & \gate{R_Y(\theta_8)} & \ctrl{0} & \ctrl{1} & \qw \\
				\lstick{$\ket{0}$} & \gate{R_Y(\theta_3)} & \qw & \ctrl{0} & \gate{R_X(x)} & \gate{R_Y(\theta_6)} & \qw & \ctrl{0} & \gate{R_X(x)} & \gate{R_Y(\theta_9} & \qw & \ctrl{0} & \meter{}
			\end{quantikz}
			\caption{Circuit that yields the results in the first row of Fig.~\ref{fig:coeffs_funct}.}
			\label{fig:wsw_circuit_q3l2_cz}
		\end{figure}
		
		\begin{figure}[th]
			\centering
			\begin{quantikz}
				\lstick{$\ket{0}$} & \gate{R_Y(\theta_1)} & \ctrl{1} & \qw & \gate{R_X(x)} & \gate{R_Y(\theta_4)} & \ctrl{1} & \qw & \gate{R_X(x)} & \gate{R_Y(\theta_7)} & \ctrl{1} & \qw & \qw \\
				\lstick{$\ket{0}$} & \gate{R_Y(\theta_2)} & \targ{} & \ctrl{1} & \gate{R_X(x)} & \gate{R_Y(\theta_5)} & \targ{} & \ctrl{1} & \gate{R_X(x)} & \gate{R_Y(\theta_8)} & \targ{} & \ctrl{1} & \qw \\
				\lstick{$\ket{0}$} & \gate{R_Y(\theta_3)} & \qw & \targ{} & \gate{R_X(x)} & \gate{R_Y(\theta_6)} & \qw & \targ{} & \gate{R_X(x)} & \gate{R_Y(\theta_9)} & \qw & \targ{} & \meter{}
			\end{quantikz}
			\caption{Circuit that yields the results in the second row of Fig.~\ref{fig:coeffs_funct}.}
			\label{fig:wsw_circuit_q3l2_cx}
		\end{figure}
		
		\newpage
		
		\begin{turnpage}
			\begin{figure*}[th]
				\centering
				\begin{quantikz}[row sep=0cm]
					\lstick{$\ket{0}$} & \gate{R_Y(\theta_1)} & \ctrl{1} & \qw & \gate{R_X(x)} & \gate{R_Y(\theta_6)} & \ctrl{1} & \qw & \gate{R_X(x)} & \gate{R_Y(\theta_{11})} & \ctrl{1} & \qw & \qw \\
					\lstick{$\ket{0}$} & \gate{R_Y(\theta_2)} & \gate{R_X(\theta_4)} & \ctrl{1} & \gate{R_X(x)} & \gate{R_Y(\theta_7)} & \gate{R_X(\theta_9)} & \ctrl{1} & \gate{R_X(x)} & \gate{R_Y(\theta_{12})} & \gate{R_X(\theta_{14})} & \ctrl{1} & \qw \\
					\lstick{$\ket{0}$} & \gate{R_Y(\theta_3)} & \qw & \gate{R_X(\theta_5)} & \gate{R_X(x)} & \gate{R_Y(\theta_8)} & \qw & \gate{R_X(\theta_{10})} & \gate{R_X(x)} & \gate{R_Y(\theta_{13})} & \qw & \gate{R_X(\theta_{15})} & \meter{}
				\end{quantikz}
				\caption{Circuit that yields the results in the third row of Fig.~\ref{fig:coeffs_funct}.}
				\label{fig:wsw_circuit_q3l2_crx_el1}
				
				\vspace*{2cm}
				
				\begin{adjustbox}{height=0.6cm}
					\begin{quantikz}[row sep=0cm]
						\lstick{$\ket{0}$} & \gate{R_Y(\theta_1)} & \ctrl{1} & \qw & \gate{R_Y(\theta_6)} & \ctrl{1} & \qw & \gate{R_X(x)} & \gate{R_Y(\theta_{11})} & \ctrl{1} & \qw & \gate{R_Y(\theta_{16})} & \ctrl{1} & \qw & \gate{R_X(x)} & \gate{R_Y(\theta_{21})} & \ctrl{1} & \qw & \gate{R_Y(\theta_{26})} & \ctrl{1} & \qw & \qw \\
						\lstick{$\ket{0}$} & \gate{R_Y(\theta_2)} & \gate{R_X(\theta_4)} & \ctrl{1} & \gate{R_Y(\theta_7)} & \gate{R_X(\theta_9)} & \ctrl{1} & \gate{R_X(x)} & \gate{R_Y(\theta_{12})} & \gate{R_X(\theta_{14})} & \ctrl{1} & \gate{R_Y(\theta_{17})} & \gate{R_X(\theta_{19})} & \ctrl{1} & \gate{R_X(x)} & \gate{R_Y(\theta_{22})} & \gate{R_X(\theta_{24})} & \ctrl{1} & \gate{R_Y(\theta_{27})} & \gate{R_X(\theta_{29})} & \ctrl{1} & \qw \\
						\lstick{$\ket{0}$} & \gate{R_Y(\theta_3)} & \qw & \gate{R_X(\theta_5)} & \gate{R_Y(\theta_8)} & \qw & \gate{R_X(\theta_{10})} & \gate{R_X(x)} & \gate{R_Y(\theta_{13})} & \qw & \gate{R_X(\theta_{15})} & \gate{R_Y(\theta_{18})} & \qw & \gate{R_X(\theta_{20})} & \gate{R_X(x)} & \gate{R_Y(\theta_{23})} & \qw & \gate{R_X(\theta_{25})} & \gate{R_Y(\theta_{28})} & \qw & \gate{R_X(\theta_{30})} & \meter{}
					\end{quantikz}
				\end{adjustbox}
				\caption{Circuit that yields the results in the fourth row of Fig.~\ref{fig:coeffs_funct}.}
				\label{fig:wsw_circuit_q3l2_crx_el2}
			\end{figure*}
		\end{turnpage}
		
		\newpage
		\clearpage
		\twocolumngrid
		
		\bibliography{main}
		
	\end{document}